\newcommand{\Ind}{\mathds{1}}
\newcommand{\bR}{\mathbf{R}}
\newcommand{\RR}{\mathbb{R}}
\newcommand{\CC}{\mathbb{C}}
\newcommand{\trace}{\mathop{\mathrm{tr}}}
\newcommand{\Expt}{\mathbb{E}}
\newcommand{\eps}{\varepsilon}
\newcommand{\Unif}{\mathrm{Uniform}}
\newcommand{\m}{\mathcal}
\newcommand{\T}{\top}
\newcommand{\SNR}{{\mathrm{SNR}}}
\newcommand{\Sphere}{{\mathbb{S}}}
\newcommand{\signalClass}{{\mathbb{X}}}
\newcommand{\limp}{\overset{p}{\to}}
\newcommand{\SC}{{n^*_{\text{MRA}}}}
\newcommand{\MSE}{{\mathrm{MSE}^*_{\text{MRA}}}}
\newcommand{\SCp}{{n^*_{\text{PMRA}}}}
\newcommand{\MSEp}{{\mathrm{MSE}^*_{\text{PMRA}}}}
\DeclareMathOperator*{\argmin}{\arg\!\min}
\DeclareMathOperator*{\argmax}{\arg\!\max}
\newif\ifRevComments
\newcommand{\revAdd}[1]{{\ifRevComments\color{teal}\fi#1}}
\newcommand{\revDel}[1]{{\ifRevComments\color{gray}~#1\fi}}
\newtheorem{theorem}{Theorem}[section]
\newtheorem{proposition}[theorem]{Proposition}
\newtheorem{corollary}[theorem]{Corollary}
\newtheorem{remark}[theorem]{Remark}
\newtheorem{lemma}[theorem]{Lemma}
\newenvironment{proof}[1][Proof]{\noindent\textbf{#1.} }{\ \rule{0.5em}{0.5em}}
\numberwithin{equation}{section}
\title{Multi-Reference Alignment in High Dimensions: Sample Complexity and Phase Transition}
\author[1]{Elad Romanov \thanks{E-mail: elad.romanov@gmail.com}}
\author[2]{Tamir Bendory \thanks{E-mail: bendory@tauex.tau.ac.il}}
\author[1]{Or Ordentlich\thanks{E-mail: or.ordentlich@mail.huji.ac.il}}
\affil[1]{School of Computer Science and Engineering, The Hebrew University, 
	Jerusalem, Israel}
\affil[2]{School of Electrical Engineering, Tel Aviv University, Tel Aviv, Israel}
\date{}
\begin{document}
	\maketitle

	
	\begin{abstract}
		Multi-reference alignment entails estimating a signal in $\RR^L$ from its circularly-shifted and noisy copies.
		This problem has been studied thoroughly in recent years, focusing on the finite-dimensional setting (fixed $L$). 
		Motivated by single-particle cryo-electron microscopy, we analyze the  sample complexity of the problem in the high-dimensional regime $L\to\infty$.
		Our analysis uncovers a phase transition phenomenon governed by the parameter $\alpha = L/(\sigma^2\log L)$, where~$\sigma^2$ is the variance of the noise.
		When $\alpha>2$, the impact of the unknown circular shifts on the sample complexity is minor. Namely,  the number of measurements required to achieve a desired accuracy $\varepsilon$ approaches $\sigma^2/\varepsilon$ for small $\varepsilon$; this is the sample complexity of estimating a signal in additive white Gaussian noise, which does not involve shifts.
		In sharp contrast, when $\alpha\leq 2$, the problem is significantly harder and the sample complexity grows substantially quicker with~$\sigma^2$.
	\end{abstract}
	
	
\section{Introduction}

We study the sample complexity of the multi-reference alignment (MRA) model: the problem of estimating a signal from its circularly-shifted and noisy copies.
Specifically, let $X\sim\m{N}(0,I)$ be an $L$-dimensional vector with i.i.d.\ standard normal entries.
We collect  $n$ independent measurements of random cyclic shifts of $X$, corrupted by additive white Gaussian noise: \begin{align} \label{eq:model}
	Y_i=R_{\ell_i}X+\sigma Z_i,  \qquad i=1,\ldots,n,
\end{align} 
where $R_\ell$ denotes a cyclic shift, namely, $(R_\ell X)_{j}=X_{{(j+\ell)}\bmod L}$ for all $j=0,\ldots,L-1$,  
$Z_i\stackrel{i.i.d.}{\sim}\m{N}(0, I)$, and {$\ell_i\stackrel{i.i.d.}{\sim}\Unif(\{0,\ldots,L-1\})$} are statistically independent of $X$. 
Given the measurements $Y^n=(Y_1,\ldots,Y_n)$, one is interested in constructing an estimator $\hat{X}=\hat{X}(Y^n)$ of the signal. Importantly,  the unknown shifts $\ell_1,\ldots,\ell_n$---while their estimation might be a means to an end---are nuisance variables. 
\revAdd{Figure~\ref{fig:example} shows an example of a measurement drawn from~\eqref{eq:model}.}

\begin{figure}[ht]
	\centering
	\includegraphics[width=.8\columnwidth]{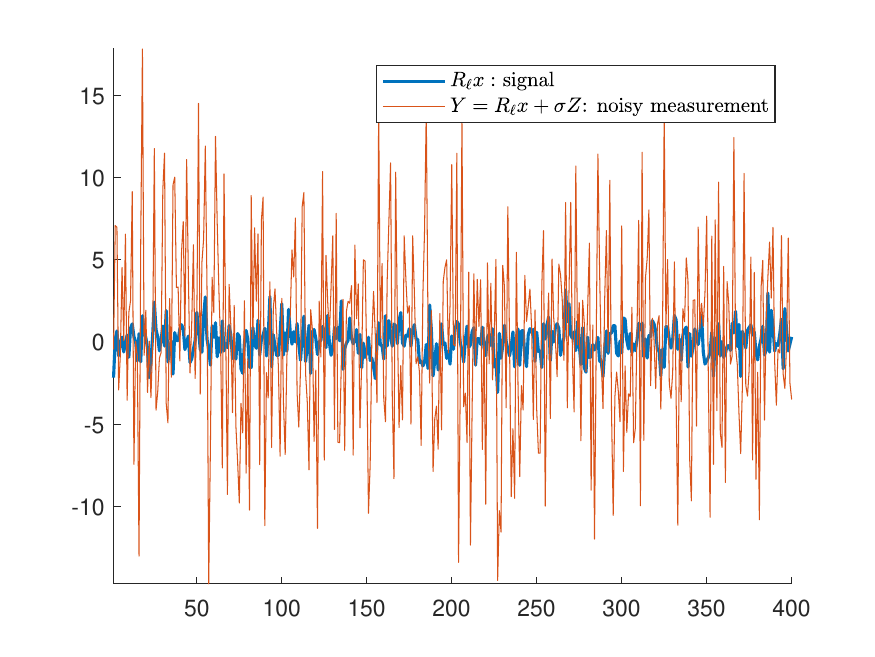}
	\caption{\label{fig:example}  \revAdd{An example of a measurement drawn from~\eqref{eq:model} for $\alpha=2$ and $L=400$. The corresponding noise level is $\sigma^2 = 33.38$.}}	
\end{figure}

This paper focuses on the high-dimensional regime, where the dimension of the signal grows indefinitely $L\to\infty$.
In this setting, we wish to characterize the relations between the number of measurements $n$, the length of each observation $L$, and  the noise level $\sigma^2$ that allow estimating~$X$ to a prescribed accuracy. This is in contrast to previous works, surveyed in Section~\ref{sec:prior_art}, which analyzed the interplay between~$n$ and $\sigma$, while considering a fixed $L$.

It is important to note that given the measurements, there is no way to distinguish between $X$ and its cyclic shift since $P_{Y^n|X=x}=P_{Y^n|X=R_1 x}=\cdots= P_{Y^n|X=R_{L-1}x}$. 
Therefore, we can only estimate the orbit of $X$ under the group of circular shifts $\mathbb{Z}_L$. Accordingly, we use the following  distortion measure
\begin{align} \label{eq:rho_def}
	\rho(X,\hat{X})=\frac{1}{L}\min_{\ell=0,\ldots L-1}\|X-R_{\ell}\hat{X}\|^2.
\end{align}
In the sequel, we loosely {say that we aim to estimate $X$ rather than its orbit, and refer to  $\mathbb{E}\rho(X,\hat{X})$ as the MSE.}

\paragraph{Sample complexity} Our goal in this paper is to characterize  the smallest possible number of measurements required to achieve a desired  MSE  in terms of
{the dimension $L$ and the noise level~$\sigma^2$.}
To that end, we define the smallest MSE attainable by any estimator as
\begin{align}
	\MSE(L,\sigma^2,n):=\inf_{\hat{X}}\mathbb{E}\rho(X,\hat{X}(Y^n)),
\end{align}
and the sample complexity of the MRA problem 
\begin{align}
	\SC(L,\sigma^2,\varepsilon):= \min\left\{n \ :   \MSE(L,\sigma^2,n)\leq \varepsilon  \right\}.
\end{align}

We define the signal-to-noise ratio (SNR) by 
\begin{equation}
	\text{SNR} := \frac{\Expt\|X\|^2}{\sigma^2}=\frac{L}{\sigma^2}.
\end{equation}
This definition is consistent with previous works which considered a fixed $L$ and $\sigma\to\infty$, implying SNR$\to 0$; see~Section~\ref{sec:prior_art}.

The asymptotics in our model turn out to be particularly interesting when the dimension, {the noise level}, and the SNR are simultaneously large. In particular,  it will be convenient to parametrize the noise variance by 
\begin{align} \label{eq:alpha}
	\sigma^2(\alpha)=\frac{L}{\alpha \log{L}} \quad \Longleftrightarrow \quad  \alpha=\frac{L}{\sigma^2\log{L}} = \frac{\text{SNR}}{\log L}.
\end{align}
Accordingly, we define $\MSE(L,\alpha,n):=\MSE(L,\sigma^2(\alpha),n)$ and $\SC(L,\alpha,\varepsilon):=\SC(L,\sigma^2(\alpha),\varepsilon)$.

\paragraph{Motivation}
The MRA model is mainly motivated by single-particle cryo-electron microscopy (cryo-EM)---a leading technology to constitute the 3-D structure of biological molecules. 
In its most simplified  version, the cryo-EM problem involves reconstructing a 3-D structure from its multiple noisy tomographic projections, taken after the structure has been rotated by an unknown 3-D {rotation}. 
In analogy, in the MRA model~\eqref{eq:model} the signal $X$ is measured after an unknown circular shift. 
In Theorem~\ref{th:PMRA}, we  extend the basic model to include a projection; we refer to this model as the projected MRA model. This projection plays the role, to some extent, of the tomographic projection in cryo-EM.  {Section~\ref{sec:conclusion}  discusses further potential extensions.}

The correspondence between MRA and cryo-EM, while admittedly not perfect, {has motivated an extensive  study of the MRA problem in recent years.} 
For example, the resolution limitations of MRA were analyzed in~\cite{bendory2020super} in order to draw an analogy to the achievable resolution of cryo-EM---a crucial aspect from a biological standpoint.  More relevant to this work, in~\cite{bandeira2017estimation,perry2019sample,bandeira2020non,abbe2018estimation}, the sample complexity of the MRA and cryo-EM models were analyzed for a fixed dimension~$L$. 
Remarkably, it was shown that in the low noise regime  (small $\sigma$), the number of measurements should scale like $\sigma^2$, while  in the high noise regime (large $\sigma$) $n$ must increase with~$\sigma^6$; see further discussion in Section~\ref{sec:prior_art}.

Our high-dimensional analysis is motivated by the size of modern cryo-EM datasets. In a  typical cryo-EM experiment, the number of measurements and the dimension of the 3-D structure are  of the same order of a few millions. For example, a 3-D structure of size $200\times 200\times 200$ voxels resulting in $8,000,000$ parameters to be estimated.
\revAdd{Since a typical noise level in a cryo-EM dataset is $\sigma^2\approx 100$,  the anticipated parameter regime is  $\alpha\gg 1$.} \revAdd{We do emphasize, however, that these numbers should be taken with some degree of skepticism: while cryo-EM is a motivation for studying the MRA problem, these are ultimately quite different problems, and practical cryo-EM setups involve additional complications, that are not captured by MRA \cite{bendory2020single}.}
In fact, high-dimensional statistical analysis has been already proven to be effective for cryo-EM data processing. 
For example, a covariance estimation technique based on high-dimensional analysis (the so-called spiked model) has significantly improved image denoising~\cite{bhamre2016denoising}.

\paragraph{Information-theoretic background and asymptotic notation} The analysis of this work is greatly based on information-theoretic notions and techniques. For completeness, we review the relevant definitions in supporting information (SI) appendix, Section~\ref{sec:ITback}.

We also repeatedly use asymptotic notation.
For sequences $a=a(L)$ and $b=b(L)$, we write $a(L)=O(b(L))$ if there exists a constant $C>0$ such that $a(L)\le Cb(L)$ for all $L$. Similarly, $a(L)=\Omega(b(L))$ means $a(L)\ge Cb(L)$. Occasionally, we use $a(L)=O_\beta(b(L))$ to signify explicitly that $C$ depends on some parameter $\beta$. The notation $a(L)=o(b(L))$ means $a(L)/b(L)\to 0$ as $L\to \infty$. In particular, if $a(L)=o(1)$ then $a(L)\to 0$ asymptotically.
Similarly, $a(L)=\omega(b(L))$ means $a(L)/b(L)\to\infty$.  

\paragraph{Reproducibility}
The code to reproduce the figures is publicly available at~\url{https://github.com/TamirBendory/high-dimensional-mra-bounds}.\footnote{Our expectation-maximization implementation is based on the code of~\cite{bendory2017bispectrum}.}

\paragraph{Supporting information (SI)} Due to space constraints, we have relegated the proofs of several technical claims to the SI appendix. In addition to those, the SI contains a brief review of all information-theoretic notions necessary to follow this work (Section~\ref{sec:ITback}), as well as some additional discussion which is somewhat tangential to our main results (Section~\ref{sec:capacity}).

\section{Main results and discussion} \label{sec:main_results}

\paragraph{Phase transition.} This work focuses on the asymptotic setting where $L$ tends to infinity. 
Our first main finding is that in this asymptotic limit there is a transition in terms of the behavior of the sample complexity. For $\alpha>2$, 
the MRA problem is essentially as easy as estimating a signal in additive white Gaussian noise (AWGN), with no random shifts. More precisely, for sufficiently small distortion $\varepsilon$, the sample complexity tends to the sample complexity of estimating a signal in AWGN,  ${n^*_{\text{AWGN}}(L,\alpha,\varepsilon)}=\lceil\left(\frac{1}{\varepsilon}-1\right)\sigma^2(\alpha) \rceil$, which behaves as $\frac{\sigma^2(\alpha)}{\varepsilon}$ for small $\varepsilon$.
In sharp contrast, for  $\alpha\le 2$  the problem becomes substantially harder. 

\begin{theorem}The sample complexity of the MRA model~\eqref{eq:model}  obeys:
	\begin{enumerate}
		\item \label{it:highsnr}For any $\alpha>2$ we have
		\begin{align*}
			\lim_{\varepsilon\to 0}\lim_{L\to\infty}\frac{\SC(L,\alpha,\varepsilon)}{\sigma^2(\alpha)/\varepsilon}=\lim_{\varepsilon\to 0}\lim_{L\to\infty}\frac{\SC(L,\alpha,\varepsilon)}{n^*_{\text{AWGN}}(L,\alpha,\varepsilon)}=1.
		\end{align*}
		\revAdd{
			\item \label{it:lowsnr}For any $\alpha\leq 2$ and any $\varepsilon<1$ we have
			\begin{align*}
				\SC(L,\alpha,\varepsilon) = \omega\left( \sigma^2 \log\left(1/\varepsilon\right) \right) \,.
			\end{align*}
			In particular, for fixed $\varepsilon$,
			\begin{align*}
				\lim_{L\to\infty}\frac{\SC(L,\alpha,\varepsilon)}{n^*_{\text{AWGN}}(L,\alpha,\varepsilon)}=\infty.
			\end{align*}
		}
	\end{enumerate}
	\label{thm:highsnrregime}
\end{theorem}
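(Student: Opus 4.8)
The plan is to handle both regimes through one information-theoretic comparison: the rate--distortion requirement of the source versus the information the measurements carry, with the phase transition emerging from the size of the \emph{single-observation} mutual information $I(X;Y_1)$ relative to $\log L$. First I would set up the converse machinery common to both parts. Since the orbit distortion $\rho$ can be converted to ordinary squared error by spending at most $\log L$ bits to name the optimal shift, the rate--distortion function of the source satisfies $R_X(\varepsilon)\ge \tfrac{L}{2}\log(1/\varepsilon)-\log L$ (Gaussian rate--distortion over the $L$ i.i.d.\ unit-variance coordinates), and the data-processing inequality forces $I(X;Y^n)\ge R_X(\varepsilon)$ for any estimator attaining MSE $\le\varepsilon$. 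I would then prove two upper bounds on $I(X;Y^n)$: a genie bound $I(X;Y^n)\le I(X;Y^n\mid \ell^n)=\tfrac{L}{2}\log(1+n/\sigma^2)$, obtained by revealing the shifts and de-rotating to get $n$ clean AWGN looks at $X$; and a single-letter bound $I(X;Y^n)\le n\,I(X;Y_1)$, valid because the $Y_i$ are conditionally i.i.d.\ given $X$.

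For Part~\ref{it:highsnr} ($\alpha>2$) the converse is immediate from the genie bound: combined with $R_X(\varepsilon)$ it gives $n\ge\sigma^2(1/\varepsilon-1)(1-o(1))=n^*_{\text{AWGN}}(1-o(1))$, which in fact holds for every $\alpha$. The role of $\alpha>2$ appears in the matching achievability, where I would analyze a two-stage estimator: form a reference $\tilde X$ from a vanishing fraction of the samples, then align each remaining $Y_i$ by maximizing the cross-correlation $\langle R_k\tilde X,Y_i\rangle$ and average the aligned copies. The decisive estimate is that a good reference makes the true peak, of size $\approx L$, dominate the $L-1$ spurious peaks, whose maximum is $\approx\sigma\sqrt{L}\sqrt{2\log L}$; domination holds exactly when $L>2\sigma^2\log L$, i.e.\ $\alpha>2$, and a union bound yields per-observation misalignment probability $\lesssim L^{1-\alpha/2}\to 0$. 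With a vanishing misaligned fraction, averaging attains the AWGN MMSE $\sigma^2/(n+\sigma^2)$, so $\varepsilon$ is reached with $n\le n^*_{\text{AWGN}}(1+o(1))$, giving the ratio-one statement after $\varepsilon\to 0$.

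For Part~\ref{it:lowsnr} ($\alpha\le 2$) the converse uses the single-letter bound, and the crux is to show $I(X;Y_1)=o(\log L)$. I would use the exact identity $I(X;Y_1)=I(X;Y_1\mid\ell)-I(\ell;Y_1\mid X)$, which holds because $R_\ell X\stackrel{d}{=}X$ makes $Y_1$ marginally independent of $\ell$, so $I(\ell;Y_1)=0$. The first term equals $\tfrac{L}{2}\log(1+1/\sigma^2)=\tfrac{\alpha}{2}\log L\,(1+o(1))$, while the second is $\log L-H(\ell\mid Y_1,X)$, the residual shift uncertainty when the signal is known. The shift posterior has weights proportional to $\exp(\sigma^{-2}\langle Y_1,R_kX\rangle)$, i.e.\ a random-energy-model-type partition function $\sum_k\exp(\beta g_k)$ with $\beta=\sqrt{\alpha\log L}$ and $g_k$ i.i.d.\ standard Gaussian; for $\alpha\le 2$ this lies in the high-temperature (unfrozen) phase, where a second-moment/concentration argument gives $H(\ell\mid Y_1,X)=(1-\tfrac{\alpha}{2})\log L\,(1+o(1))$ and hence $I(\ell;Y_1\mid X)=\tfrac{\alpha}{2}\log L\,(1+o(1))$. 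The two $\Theta(\log L)$ terms cancel, so $I(X;Y_1)=o(\log L)$; substituting into the single-letter bound gives $n\ge R_X(\varepsilon)/I(X;Y_1)=\omega(L/\log L)=\omega(\sigma^2/\varepsilon)$ for fixed $\varepsilon<1$, which is exactly the claim.

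The main obstacle is the $\alpha\le 2$ evaluation of $I(X;Y_1)$: its two leading contributions each grow like $\Theta(\log L)$ and must be shown to cancel to leading order, so everything hinges on a rigorous, tight control of the random-energy free energy and of the Gibbs entropy $H(\ell\mid Y_1,X)$ in the high-temperature phase --- establishing concentration of the shift-posterior partition function via the second moment method, and treating the critical case $\alpha=2$ where this concentration is only marginal, are the delicate points. A secondary difficulty is the bootstrapping in Part~\ref{it:highsnr}: producing, from $o(n)$ samples, a reference $\tilde X$ whose correlation with $X$ exceeds the alignment threshold $\sqrt{2/\alpha}$, which I would address with a shift-invariant (moment or spectral) initialization.
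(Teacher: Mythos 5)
Your converse machinery is essentially the paper's: the rate--distortion bound $R(\varepsilon)\ge \frac{L}{2}\log(1/\varepsilon)-\log L$, the genie bound $I(X;Y^n)\le \frac{L}{2}\log(1+n\sigma^{-2})$, the single-letter bound $I(X;Y^n)\le n\,I(X;Y_1)$, and the identity $I(X;Y_1)=\frac{L}{2}\log(1+\sigma^{-2})-\log L+H(\ell\mid X,Y_1)$ all appear in the paper in exactly this role. For the crucial estimate $H(\ell\mid X,Y_1)\le(1-\alpha/2+o(1))\log L$ you propose a random-energy-model free-energy computation; the paper instead uses a more elementary list argument: the set of shifts whose normalized correlation with $Y_1$ exceeds $1-\tau$ has expected size $\lesssim L^{1-\alpha/2}$ by a direct Gaussian tail bound, Markov's inequality controls its size, and a Fano-type conditioning on the event that the true shift lies in this short list gives the entropy bound. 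Note that only the \emph{upper} bound on $H(\ell\mid X,Y_1)$ is needed (non-negativity of mutual information supplies the other direction for free), so the two-sided REM concentration you flag as delicate --- including the critical case $\alpha=2$ and the correlations among the overlaps $\langle Z,R_kX\rangle$, which the paper must anyway tame via an incoherence lemma and Sudakov--Fernique --- can be avoided entirely. Your REM route is plausible but strictly harder than what is required.

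The genuine gap is in the achievability for $\alpha>2$, precisely at the step you call secondary. To obtain $\lim_{\varepsilon\to0}\lim_{L\to\infty} \SC/(\sigma^2/\varepsilon)\le 1$, the reference $\tilde X$ must be produced from $O_\alpha(\sigma^2)$ samples --- a number independent of $\varepsilon$ --- so that this stage is negligible against the $\sigma^2/\varepsilon$ samples spent on align-and-average. A moment or spectral (bispectrum / third-moment) initialization cannot meet this budget: the known guarantees require on the order of $\varepsilon^{-1}\sigma^{6}\,\mathrm{poly}(L)$ samples, which at $\sigma^2=L/(\alpha\log L)$ vastly exceeds $\sigma^2/\varepsilon$ for fixed $\varepsilon$; indeed the paper emphasizes that moments do not govern this high-dimensional regime. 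The paper's resolution is an information-theoretic, computationally intractable construction: take a $\sqrt{\eta}$-cover $\m{N}$ of $\Sphere^{L-1}$ of size $(3/\sqrt{\eta})^{L}$, score each $Q\in\m{N}$ by the number of samples $Y_i$ with $\max_\ell L^{-1/2}\langle Y_i,R_\ell^{-1}Q\rangle\ge 1-\tfrac34\eta$, and show via a union bound over the exponential-size net that $n_1=c(\alpha)\sigma^2$ samples suffice for the top-scoring $Q$ to satisfy $\max_\ell L^{-1/2}\langle X,R_\ell^{-1}Q\rangle\ge1-\eta$ with $\eta$ below the alignment threshold. Without this step (or some substitute achieving the same correlation from $O(\sigma^2)$ samples), your two-stage scheme does not close and the first limit in Part 1 remains unproved. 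A smaller omission: since $\rho$ is unbounded, you need a truncation argument to pass from ``$\rho(X,\widehat X)\le\varepsilon$ with probability $1-o(1)$'' to the expected-distortion guarantee $\E\rho(X,\widehat X)\le\varepsilon$ that defines $\SC$.
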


In part~\ref{it:highsnr} of Theorem~\ref{thm:highsnrregime}, the lower bound $\frac{\SC(L,\alpha,\varepsilon)}{n^*_{\text{AWGN}}(L,\alpha,\varepsilon)}\geq 1$ is trivial: estimating in the MRA model is harder than estimating a signal in AWGN (namely, when the shifts are known). 
A small subtlety is that the distortion measure $\mathbb{E}\rho(X,\hat{X})$ is a bit weaker than the standard definition of MSE, $\mathbb{E}\|X-\hat{X}\|^2$, as it allows for any cyclic shift. However, we show in Section~\ref{sect:ITbounds} that, as expected, this has a vanishing effect for large $L$. 
In order to show that $\lim_{\varepsilon\to 0}\lim_{L\to\infty}\frac{\SC(L,\alpha,\varepsilon)}{n^*_{\text{AWGN}}(L,\alpha,\varepsilon)}\leq 1$ we introduce  an algorithm that for any $\alpha>2$ requires about $\sigma^2(\alpha)/\varepsilon$ samples to achieve $\mathbb{E}\rho(X,\hat{X})\leq \varepsilon$, provided that $\varepsilon$ is sufficiently small and $L$ is sufficiently large. 
The sole purpose of the estimation procedure is establishing an upper bound; its computational complexity is exponential in $L$ and thus the procedure is far from being efficient.  More specifically, it is based on a two-step procedure. First, we construct a $\delta$-net that, by definition, contains a member close to $X$ and look for the most likely candidate within that net given the measurements. Second, we use this candidate in order to determine almost all shifts $\hat\ell_i$, and then estimate the signal by alignment and averaging  $\hat X=\frac{1}{n}\sum_{i=1}^nR_{-\hat \ell_i}Y_i$.
The details are given in Section~\ref{sec:upper-bound}. 

In order to establish part~\ref{it:lowsnr} of Theorem~\ref{thm:highsnrregime}, we show that for $\alpha\leq 2$ the mutual information (MI) $I(X;Y)$  between $X$ and a single MRA measurement  grows with $L$ significantly slower than {$I(X;X+\sigma Z)$, as} in estimating a signal in AWGN. The details are given in  Section~\ref{sect:ITbounds}.

\revAdd{
	Although our results are \emph{asymptotic} in $L$,  the transition in the difficulty of the problem around $\alpha=2$, as predicted by Theorem~\ref{thm:highsnrregime}, is evident already for relatively small $L$. 
	Figure~\ref{fig:em} presents the root MSE (RMSE) as a function of $\alpha$ for different values of $L$. 
	We take our estimator $\hat{X}$ to be the output of the expectation-maximization (EM) algorithm~\cite{dempster1977maximum,bendory2017bispectrum}, which is the {standard} choice for {MRA}; see details in Section~\ref{sec:prior_art}.
	For large values of $L$ and large $\alpha$, the error of EM tends to that of estimating a signal in AWGN, implying that it detects the shifts accurately. For smaller values of $\alpha$, the error grows rapidly, especially when $\alpha<2$.
	We note that the observed transition in the vicinity of $\alpha=2$, at the values of $L$ considered in Figure~\ref{fig:em} (few $100$s), appears to not be very sharp. Our proofs suggest that perhaps this behavior is to be expected: the concentration rates we are able to derive for some of the quantities relevant to the analysis is quite slow (inverse polynomial in $L$, with a very small exponent when $\alpha$ is close to $2$).
} 
\revDel{
	
	Although our results are asymptotic in $L$,  the phase transition at $\alpha=2$ predicted by Theorem~\ref{thm:highsnrregime} is evident already for relatively small $L$. 
	Figure~\ref{fig:em} presents the root MSE (RMSE) as a function of $\alpha$ for different values of $L$. 
	We take our estimator $\hat{X}$ to be the output of the expectation-maximization (EM) algorithm~\cite{dempster1977maximum,bendory2017bispectrum}, which is the {standard} choice for {MRA}; see details in Section~\ref{sec:prior_art}.
	For large values of $L$ and large $\alpha$, the error of EM tends to that of estimating a signal in AWGN, implying that it detects the shifts accurately. For smaller values of $\alpha$, the error grows rapidly, especially when $\alpha<2$. 
}

\begin{figure}[ht]
	\centering
	\includegraphics[width=.8\columnwidth]{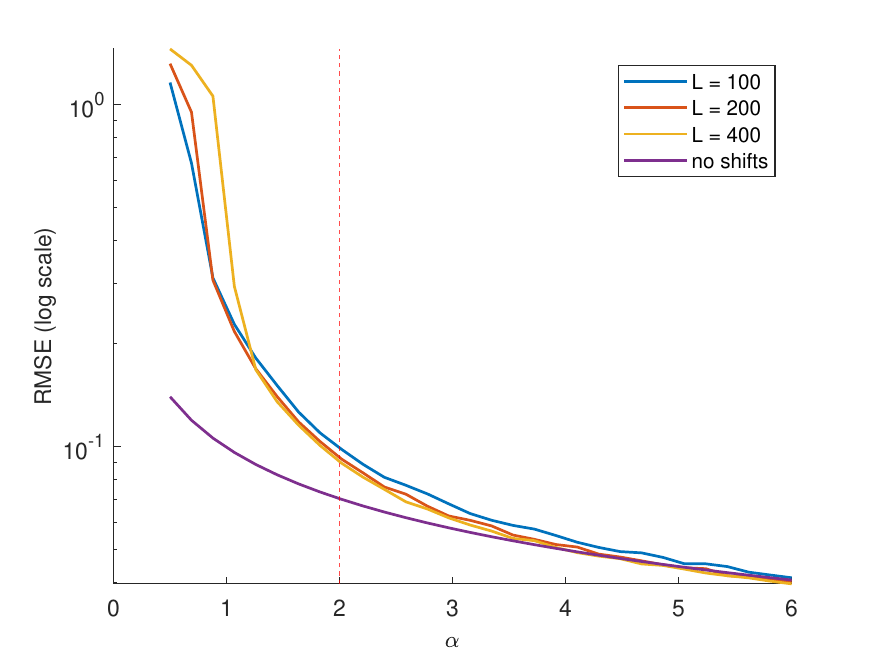}
	\caption{\label{fig:em}  The RMSE of EM (averaged over 100 trials) as a function of $\alpha$ for different values of~$L$. The number of measurements was set to be $n(L) = 100L/\log(L)$. \revAdd{An example of a single measurement appears in Figure~\ref{fig:example}.}
		For large values of~$\alpha$, the error reduces to the error of estimating a signal in AWGN,  $\sqrt{\frac{\sigma^2}{\sigma^2+n}}=\frac{1}{\sqrt{1+100\alpha}},$  suggesting that EM performs as if the shifts were known. 
		For small values of $\alpha$, and in particular $\alpha<2$, the error rapidly increases.}	
\end{figure}


\paragraph{Connection with template matching}  
At this point, the reader may wonder what is the intuitive interpretation of $\alpha=2$. To answer this question we now  introduce the \emph{template matching problem}, which is studied in detail in Section~\ref{sect:template-matching}. In this problem, we are given $X$ and one MRA measurement $Y=R_\ell X+Z$, where $X$, $R_\ell$ and $Z$ are distributed as above, and our goal is to recover the shift $R_{\ell}$. 
We will see that in the asymptotic setting, $\alpha=2$ is the critical threshold for this problem. That is, the error probability in recovering $R_\ell$ from $(X,Y)$ approaches $0$ for all $\alpha>2$, and approaches $1$ for all $\alpha<2$.  

In the MRA problem, recovering the shifts is harder, as we do not have access to $X$. We nevertheless show that for $\alpha>2$, given enough measurements, it is possible to recover a fraction approaching~$1$ of the shifts correctly. 
On the other hand, recovering a large fraction of the shifts correctly for $\alpha< 2$  is impossible since it is impossible even in the template matching model.
Intuitively, if we cannot recover almost all shifts, the attained MSE should be much worse than in estimating a signal in AWGN, which means that the sample complexity should be much higher for $\alpha<2$.  Our bounds in Section~\ref{sect:ITbounds} formalize this intuition.

To illustrate the phase transition for template matching, we conducted a ``genie-aided'' experiment, presented in Figure~\ref{fig:genie}.
In this experiment, we use the true $X$ (the ``genie'') in order to estimate the shifts by $\hat\ell_i=\arg\max_{\ell\in\{0,\ldots,L-1\}} \langle R_\ell X,Y_i\rangle $. Then, we  estimate the signal by aligning the measurements and averaging  $\hat{X}=\frac{1}{n} \sum_{i=1}^nR_{-\hat{\ell_i}}Y_i$. 
For large values of $\alpha$, the recovery error converges to the  error of estimating a signal in AWGN. For smaller $\alpha$ values, and in particular $\alpha<2$, the recovery error rapidly increases.

\begin{figure}[ht]
	\centering
	\includegraphics[width=0.8\columnwidth]{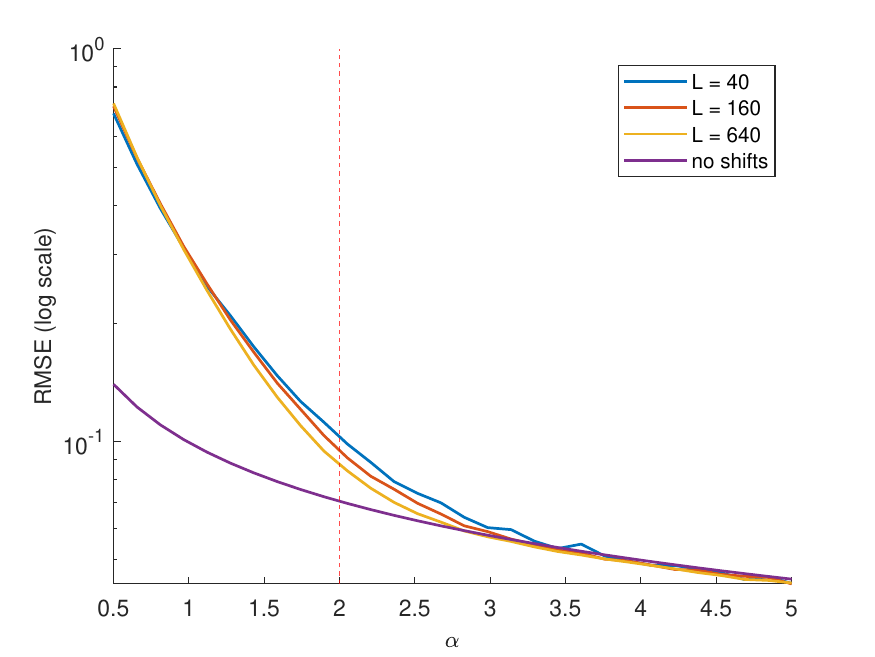}
	\caption{\label{fig:genie} A ``genie-aided''  experiment:
		the true $X$ is used to estimate the shifts $\hat\ell_1,\ldots,\hat\ell_n$, as in the template matching problem, and then the signal is estimated by aligning all measurements and averaging $\hat{X}=\frac{1}{n}\sum_{i=1}^nR_{-\hat{\ell_i}}Y_i$. The figure presents the RMSE (averaged over 50 trials) as a function of $\alpha$ for different values of $L$. The number of measurements was set to be $n(L) = 100L/\log(L)$. For large values of $\alpha$, the error reduces to the error of estimating a signal in AWGN (i.e., when the shifts are known) $\sqrt{\frac{\sigma^2}{\sigma^2+n}}=\frac{1}{\sqrt{1+100\alpha}}$. 
		For small values of $\alpha$, and in particular $\alpha<2$, the template matching error quickly increases.
	} 
\end{figure}

\paragraph{Tighter lower bound for the low SNR regime}
Theorem~\ref{thm:highsnrregime} shows that for all $\alpha\leq 2$ and fixed ${\varepsilon<1}$ the shifts make a difference: 
the sample complexity with unknown shifts (i.e., the MRA problem) is \revAdd{$\omega\left(\sigma^2(\alpha)\log(1/\varepsilon)\right)$}, and is therefore substantially greater than the sample complexity when the shifts are known.
For $\alpha<1$, we were able to prove a much stronger lower bound on the sample complexity.

\begin{theorem}
	\revAdd{
		For any $0<\alpha<1$, and $0<\eps<1$,
		\begin{align}
			\SC(L,\alpha,\varepsilon)= \Omega\left(  L^{2-\alpha}\log(1/\eps) \right) \,.
		\end{align}
	}
	\label{thm:alphalessthan1}
\end{theorem}
Theorems~\ref{thm:highsnrregime} and \ref{thm:alphalessthan1} are proved in Section~\ref{sect:ITbounds}. 

\paragraph{The sample complexity of the projected MRA model} 
Recall that MRA serves as a toy model of the cryo-EM reconstruction problem. An additional complication arising in cryo-EM is a fixed tomographic projection, a line integral,  also known as the X-ray transform.
To account for this effect, we extend our basic model~\eqref{eq:model}  to the \emph{projected multi-reference alignment problem} (PMRA) model:\footnote{We mention that other projected MRA models were studied in~\cite{bandeira2017estimation,bendory2020super}.}
\begin{align} \label{eq:PMRA_model}
	Y_i=\pi_S R_{\ell_i}X+\sigma Z_i.
\end{align} 
Here, $\pi_S:\RR^L\to \RR^{L'}$ is matrix projecting a vector in $\RR^L$ to $\RR^{L'}$ by keeping only the coordinates that belong to a subset $S\subset[L]$  of size $L'\leq L$ and discarding the rest{, and $Z_i\stackrel{i.i.d.}{\sim}\m{N}(0,I)$ are $L'$-dimensional i.i.d. Gaussian vectors.}
We assume that $S$ is fixed and known 
to the estimator. 
As in MRA without the projection, the goal is to reconstruct $X$ up to {a circular shift}, that is, produce an estimate $\widehat{X}$ such that $\Expt\rho(X,\widehat{X})$ is as small as possible. 

We study the PMRA problem in an asymptotic setting where $L,L',\sigma^2\to \infty$ simultaneously. It makes sense to adopt a slightly different scaling for the noise in PMRA, as 
\begin{equation}\label{eq:sigma-pmra}
	\sigma^2 = \sigma^2_{\mathrm{PMRA}}(\alpha) = \frac{L'}{\alpha\log(L)} .
\end{equation}
The reason for this particular scaling will be made {clear} from the {analysis}: the numerator is the total signal energy available in a single measurement, $\Expt\|\pi_S R_{\ell_i}X\|^2 = L'$; the $\log(L)$ factor is log the size of the group of shifts. {In Section~\ref{sec:conclusion} we provide some remarks as to how to extend our results to other groups.}
Similarly to our notation for the MRA model, we denote the smallest attainable MSE in the PMRA model as $\MSEp(L,\alpha,n)$, and the sample complexity as $\SCp(L,\alpha,\varepsilon)$. 

\begin{theorem} \label{th:PMRA}
	Suppose that {$\sigma^2_{\mathrm{PMRA}}(\alpha)$} is scaled as in~\eqref{eq:sigma-pmra}, and $L,L'\to\infty$, so that $L'\le L$ and $L'=\omega(\log(L))$ (that is, $L$ grows strictly less than exponentially fast in $L'$).
	The sample complexity of the PMRA model~\eqref{eq:PMRA_model} obeys the following lower bounds:
	\begin{enumerate}
		\item For any $\alpha>2$ and $0<\varepsilon<1$ we have that
		\begin{align}
			\SCp(L,\alpha,\varepsilon)\geq \frac{L}{L'}\left(\frac{1}{\varepsilon}-1\right)\sigma_{\mathrm{PMRA}}^2(\alpha)(1+o(1)) .
		\end{align}
		
		\item For any $\alpha\leq 2$ and $0<\varepsilon<1$ we have that
		\revAdd{
			\begin{align}
				\SCp(L,\alpha,\varepsilon)=\omega\left(\frac{L}{L'} \sigma_{\mathrm{PMRA}}^2(\alpha) \log(1/\varepsilon)\right).
			\end{align}
		}
	\end{enumerate}
	\label{thm:pmra}
\end{theorem}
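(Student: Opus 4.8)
The plan is to establish both lower bounds by adapting the information-theoretic machinery used to prove Theorems~\ref{thm:highsnrregime} and~\ref{thm:alphalessthan1} to the projected measurement $Y=\pi_S R_\ell X+\sigma Z$. Two features of the projected model drive everything, and I would record them first since they feed both parts. First, a single measurement carries signal energy $\Expt\|\pi_S R_\ell X\|^2=L'$ and ``sees'' only an $L'/L$ fraction of the coordinates of $X$; this is the source of the $L/L'$ factor. Second, under the scaling~\eqref{eq:sigma-pmra} the per-measurement capacity and the template-matching threshold are unchanged from the unprojected case: one computes $I(X,\ell;Y)=\tfrac{L'}{2}\log(1+1/\sigma^2)$, and since $L'=\omega(\log L)$ forces $1/\sigma^2=\alpha\log L/L'\to 0$, this equals $\tfrac{\alpha}{2}\log L\,(1+o(1))$, while the correlation detector recovers $\ell$ from $(X,Y)$ iff $\alpha>2$, exactly as in Section~\ref{sect:template-matching}.

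For part~1 I would use a genie argument: revealing the shifts $\ell_1,\dots,\ell_n$ to the estimator can only decrease the error, so it suffices to lower bound the MSE in the resulting linear-Gaussian model $Y_i=\pi_S R_{\ell_i}X+\sigma Z_i$ with the $\ell_i$ known. There the Bayes posterior covariance is diagonal, giving MMSE $\tfrac{1}{L}\sum_{j=0}^{L-1}\tfrac{\sigma^2}{\sigma^2+m_j}$, where $m_j$ counts the measurements observing coordinate $j$; convexity of $t\mapsto \tfrac{\sigma^2}{\sigma^2+t}$ together with $\tfrac1L\sum_j m_j=nL'/L$ yields the bound $\tfrac{\sigma^2}{\sigma^2+nL'/L}$. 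Requiring this to be at most $\varepsilon$ and inverting gives $n\ge \tfrac{L}{L'}(\tfrac1\varepsilon-1)\sigma^2$. The only subtlety is that $\rho$ is the orbit distortion rather than ordinary squared error; I would pass from $\Expt\rho$ to ordinary MSE exactly as in Section~\ref{sect:ITbounds}, where it is shown that aligning to a nonidentity shift cannot help for large $L$, which is what produces the $(1+o(1))$.

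For part~2 I would use the rate-distortion chain $R(\varepsilon)\le I(X;\hat X)\le I(X;Y^n)\le n\,I(X;Y)$, where $R(\varepsilon)=\tfrac{L}{2}\log\tfrac1\varepsilon\,(1+o(1))$ is the rate-distortion function of $X\sim\m{N}(0,I_L)$ under $\rho$ (the orbit correction is $O(\log L)$, negligible against $L$). The crux is the single-measurement estimate $I(X;Y)=o(\log L)$ for every $\alpha\le 2$. Writing $I(X;Y)=I(X,\ell;Y)-I(\ell;Y\mid X)$, the first term is $\tfrac{\alpha}{2}\log L\,(1+o(1))$ by the computation above; the point is that for $\alpha\le 2$ the second term nearly matches it, i.e.\ $I(\ell;Y\mid X)\ge \tfrac{\alpha}{2}\log L-o(\log L)$. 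Heuristically the $L$ possible shifts form a codebook sent over an $L'$-dimensional AWGN channel of capacity $\tfrac{\alpha}{2}\log L$; the rate $\log L$ needed to index the shift meets or exceeds capacity precisely when $\alpha\le 2$, so $\ell$ cannot be decoded, yet the channel still passes essentially all $\tfrac{\alpha}{2}\log L$ nats about $\ell$, and these cancel the first term. Feeding $I(X;Y)=o(\log L)$ into the chain gives $n\ge \tfrac{(L/2)\log(1/\varepsilon)}{o(\log L)}=\omega(L/\log L)=\omega\!\big(\tfrac{L}{L'}\tfrac{\sigma^2}{\varepsilon}\big)$.

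The main obstacle is this last estimate, $I(X;Y)=o(\log L)$ for $\alpha\le 2$; rigorously, one must lower bound the mixture entropy $h(Y\mid X)$, where $Y\mid X=\tfrac1L\sum_\ell \m{N}(\pi_S R_\ell X,\sigma^2 I)$, showing it is within $o(\log L)$ of the moment-matched Gaussian entropy $h(\m{N}(0,(1+\sigma^2)I_{L'}))$. This is where the $\alpha=2$ threshold is consumed: the component means are separated by $\approx\sqrt{2L'}$ while the noise has radius $\sigma\sqrt{L'}$ with $\sigma\to\infty$, so the components overlap heavily and the mixture is nearly Gaussian. I would quantify the near-Gaussianity through a second-moment ($\chi^2$) estimate of the mixture against the Gaussian, importing the bound from the MRA analysis in Section~\ref{sect:ITbounds} and checking that replacing $L$ by $L'$ in the signal energy while keeping the shift group of size $L$ leaves both the threshold at $\alpha=2$ and the residual term at $o(\log L)$.
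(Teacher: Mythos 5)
Your overall skeleton is the paper's: the rate--distortion chain with the $\log L$ orbit penalty, a shift-revealing bound for part~1, and the decomposition $I(X;Y)=\frac{L'}{2}\log(1+\sigma^{-2})-I(\ell;Y\mid X)$ for part~2. But the crux of part~2 --- establishing $I(X;Y)=o(\log L)$ for \emph{all} $\alpha\le 2$ --- is not delivered by the tool you name. You propose to control the mixture entropy $h(Y\mid X)$ by ``a second-moment ($\chi^2$) estimate of the mixture against the Gaussian, importing the bound from the MRA analysis.'' The bound of that type in the paper is Proposition~\ref{prop:MI-at-low-SNR} (the Jensen/annealed computation), and it provably fails in the range $1<\alpha\le 2$: transplanted to PMRA, where $\sigma^{-2}L'=\alpha\log L$, it gives $I(X;Y)\le\log\left(1+L^{-1}e^{\sigma^{-2}L'}\right)+O(\sigma^{-4}L')=\log\left(1+L^{\alpha-1}\right)+o(\log L)$, which is $(\alpha-1+o(1))\log L$ for $\alpha>1$ --- not $o(\log L)$. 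The threshold that the annealed/second-moment calculation captures is $\alpha=1$, not $\alpha=2$; this is precisely why the paper needs a second, different argument. What actually closes the range $1<\alpha\le 2$ is the Fano-type list-decoding bound: the analogue of Lemma~\ref{lem:not-many-candidiates} (with $\m{S}_\tau$ defined via $\langle \pi_S R'X,Y\rangle/\|\pi_S R'X\|^2$, and concentration at scale $L'$ beating the union bound over $L$ shifts because $L'=\omega(\log L)$) shows the true shift lies with high probability in a list of size $L^{1-\alpha/2+o(1)}$, whence $H(R\mid X,Y)\le(1-\frac{\alpha}{2}+o(1))\log L$ and $I(X;Y)=o(\log L)$ via~\eqref{eq:MI-entropy-gap}. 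Your heuristic that the channel ``passes essentially all $\frac{\alpha}{2}\log L$ nats about $\ell$'' states exactly the right target, but the $\chi^2$ route cannot reach it.

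A secondary issue in part~1: your genie-plus-direct-MMSE computation gives the right answer for ordinary squared error, but the passage from the orbit distortion $\rho$ to ordinary MSE is not available ``as in Section~\ref{sect:ITbounds}'' in the form you invoke. The paper never shows that misalignment cannot help an arbitrary estimator; it absorbs the orbit minimization \emph{inside} the information-theoretic chain, paying $\log L$ nats in the rate--distortion function (Proposition~\ref{prop:mrardf}). Accordingly, the paper's proof of part~1 stays in that chain: it bounds $I(X;Y^n)\le I(X;Y^n\mid R^n)\le\frac{L}{2}\log\left(1+\frac{L'}{L}n\sigma^{-2}\right)$ by conditioning on the shifts and applying Jensen to the per-coordinate observation counts, then invokes Corollary~\ref{prop:IT-lower-bound}. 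If you want to keep the direct MMSE route you must supply a separate argument that $\Expt\min_\ell\|X-R_\ell\widehat{X}\|^2$ is not much smaller than the MMSE (the minimizing $\ell$ depends on $X$, so $R_\ell\widehat{X}$ is not an estimator); otherwise, run your genie/Jensen computation on the mutual information instead and you recover the paper's argument.
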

The proof of the theorem {relies heavily on the proof of Theorem~\ref{thm:highsnrregime}. Due to space constraints, a proof sketch is relegated to the SI appendix, see Section~\ref{sect:pmra-proof}.}
We conjecture that at high SNR ($\alpha>2$), the lower bound given in Theorem~\ref{thm:pmra} is in fact tight at very low MSE (formally $\varepsilon\to 0$, as in Theorem~\ref{thm:highsnrregime}). 


\revAdd{
	
	\paragraph{Extension to other signal priors and group actions}	
	In section~\ref{sec:conclusion} we describe briefly how one could modify our proofs to account for other i.i.d. signal priors (besides Gaussian) and finite group actions. 
}

\section{Prior art} \label{sec:prior_art}
The multi-reference alignment problem was introduced by~\cite{bandeira2014multireference}, and fully formulated in~\cite{bandeira2020non}.  The general MRA model reads 
\begin{align} \label{eq:mra_general}
	Y_i=T_i (g_i \circ X)+\sigma Z_i,  \qquad i=1,\ldots,n, 
\end{align} 
where $g_i$ is a random element of a compact group $G$ (drawn from a possibly unknown distribution over $G$) acting on a vector space $X\in\mathbb{X}$, and $T_i, \, i=1,\ldots,n,$ are known linear operators. If $T_i=I$ for all $i$, $g_i$ are drawn uniformly from the group of cyclic shifts $\mathbb{Z}_L$, and  $X\sim\mathcal{N}(0,I)$, then~\eqref{eq:mra_general} reduces to the MRA model~\eqref{eq:model}.
This model can be thought of as a special case of a Gaussian mixture model, where all centers are connected through a group action (i.e., a cyclic shift).
If $T_i=\pi_S$ for all $i$, we get the projected MRA model~\eqref{eq:PMRA_model}. 
In cryo-EM---the main motivation of this work---$G$ is the group of 3-D rotations $SO(3)$, $\mathbb{X}$ is the space of 3-D ``band-limited'' functions (that is, functions that can be expanded by finitely many basis functions), and $T_i$ encodes the (fixed) tomographic projection, as well as other linear effects, such as the microscope's point spread function (which varies across images) and sampling~\cite{singer2018mathematics,bendory2020single}.

The sample complexity of the MRA model~\eqref{eq:model}, in the minimax sense, was first studied in~\cite{bandeira2017optimal,perry2019sample}.
The focus of these works, as well as the rest of the works mentioned in this section, is  on the regime where the noise level $\sigma$ and number of measurements $n$ diverge, while the dimension of each measurement $L$ is fixed, implying $\SNR\to 0$.  
These results were extended to the general MRA model~\eqref{eq:mra_general}  by~\cite{bandeira2017estimation}  and~\cite{abbe2018estimation} (the latter generalizes the framework proposed in~\cite{abbe2018multireference}). These  papers constitute an intimate connection between the MRA model and the method of moments---a classical estimation technique. 
Let $\bar{d}$ be the lowest order moment that distinguishes two different signals (signals that are not in the same orbit) given a specific MRA model (namely, fixed $T_i,\mathbb{X}$, and a distribution over $G$).  Then, unless {$n\cdot \SNR^{\bar{d}}\to\infty$,} 
the MSE is bounded from below.  More informally, the moments determine the optimal (minimax) estimation rate of the problem. 
For example, for the MRA model~\eqref{eq:model} it is known that the third moment determines a generic signal uniquely (in this work we only consider normal i.i.d.\ signals that fall into this category), i.e., $\bar{d}=3$, and  thus~$n\cdot\SNR^3\gg 1$ 
is a necessary condition.
Remarkably, this phenomenon was observed empirically  in context of cryo-EM  early on by Sigworth~\cite{sigworth1998maximum}.

\revAdd{In this work, we propose an alternative explanation for the statistical difficulty of MRA at low SNR, in a setting where the signal $X$ is ``generic'' (specifically, $X\sim \m{N}(0,I)$) and the dimension is very large. The separation between the two SNR regimes we identify is \emph{not} given in terms of moments;
	instead, it is characterized in terms of a very natural estimation-theoretic question: is it possible, in an information-theoretic sense, to consistently recover the unknown shifts (nuisance parameters) themselves? As we scale $\SNR=\alpha\log L$, the threshold $\alpha=2$, separating the high and low SNR regimes, is exactly the threshold for the shift recovery problem.
	Note that in this high-dimensional setting, we find that the low SNR regime in fact extends beyond the case $\SNR\to 0$ to unbounded values of $\SNR$ (provided that it grows slowly enough with $L$)---this is in contrast to previous works that study MRA in fixed dimension.}
\revDel{In this work we discover that the statistical properties of MRA in high-dimensions, at least for $X\sim\mathcal{N}(0,I)$, are not characterized by moments, but rather by the parameter~$\alpha$ that balances the noise level and the dimension~\eqref{eq:alpha}. 
	In particular, in our setting $\SNR=\alpha\log L$ diverges, rather than $\SNR\to 0$ as in previous works. In this sense, our results imply that the ``low SNR'' regime is not only $\SNR\to 0$, and actually extends into unbounded values of $\SNR$ provided that it grows slowly enough with $L$.}


From the algorithmic perspective, two main computational frameworks were applied to MRA problems. The first approach is based on expectation-maximization (EM)---a popular heuristic to maximize the posterior distribution~\cite{dempster1977maximum}.
EM is  the most popular and successful methodology to elucidate high-resolution 3-D structures using cryo-EM~\cite{scheres2012relion,bendory2020single}, and it was  successfully applied  to a variety of MRA setups~\cite{bendory2017bispectrum,boumal2018heterogeneous,abbe2018multireference,ma2019heterogeneous,bendory2020super}.
A  recent work \cite{fan2020likelihood} studies the likelihood landscape for the general MRA model~\eqref{eq:mra_general}, where $G$ is a discrete group and $T_i=I$. The latter paper shows that when the dimension is fixed and the SNR is sufficiently high, the log likelihood has certain favorable features from an optimization perspective; their results give a compelling argument for why EM seems to give good performance for MRA in high SNR.
In~\cite{brunel2019learning}, it is shown that usually maximum likelihood achieves the parametric rate $\rho(X,\widehat{X}_{\mathrm{MLE}})\sim 1/n$, although in some cases the rate can be $\sim 1/\sqrt{n}$.

The second  algorithmic framework is based on the method of moments. This approach has an appealing property: it requires only one pass over the measurements, and thus its computational load is relatively low, unless $L$ is large~\cite{bendory2017bispectrum,boumal2018heterogeneous,abbe2018multireference,ma2019heterogeneous,perry2019sample,pumir2019generalized}. 
In addition, as mentioned,   it achieves  the optimal estimation rate  when $L$ is fixed and $\SNR\to 0$.  Consequently, a variety of moment-based algorithms were proposed. 
For example, the authors of \cite{perry2019sample} suggest estimating the  third-order  tensor moment of the signal $T^{(3)} = L^{-1}\sum_{\ell=0}^{L-1} (R_{\ell} X)^{\otimes 3}$, from which $X$ can be recovered by Jenrich's method \cite{harshman1970foundations,leurgans1993decomposition}. Using the robustness analysis of \cite{goyal2014fourier}, they were able to show that $n=O\left(\varepsilon^{-1}\sigma^{6}\mathrm{poly}(L)\right)$ samples suffice to achieve $\rho(X,\widehat{X})\le \varepsilon$ with constant probability. This bound depends polynomially on both  the dimensional and on the inverse smallest DFT coefficient of $X$; 
when $X\sim\m{N}(0,I)$, one can verify that typically {all the DFT coefficients of $X$ are greater than $\Omega(L^{-1/2})$.} 
The $\mathrm{poly}(L)$ dependence is not computed explicitly, but to the best of our understanding, the analysis of~\cite{goyal2014fourier} provides a significantly worse dimensional scaling than the $\Omega(L^2)$ in our lower bound (as $\alpha\to 0$).
Another work \cite{bendory2017bispectrum} studies recovery by bispectrum inversion, which is equivalent to the third-order moment if the distribution of shifts is uniform.
They argue that when $L$ is fixed, the sample complexity should scale like $O(\sigma^{6})$,  hiding an implicit dependence on $L$. 
The method of moments was also applied to cryo-EM and related technologies, see for example~\cite{kam1980reconstruction,donatelli2015iterative,levin20183d,sharon2020method}, as well as to additional MRA setups~\cite{abbe2017sample,aizenbud2019rank,hirn2019wavelet}.

A recent work~\cite{katsevich2020likelihood} establishes an enticing connection between likelihood-based techniques and the method of moments for the general MRA model~\eqref{eq:mra_general}  for fixed $L$, $\SNR\to 0$, and $T_i=I$. Specifically, it was shown that  likelihood optimization in the low SNR regime reduces to a sequence of moment matching problems.
In addition, the method of moments is also closely-related to invariant theory and thus tools from the latter field can be applied to analyze MRA models; see in particular~\cite{bandeira2017estimation}.

\section{Phase transition of template matching}\label{sect:template-matching}
Suppose that the shifts $R_{\ell_i}$ are all known. In this scenario, estimating the signal is easy: one needs to align each observation $R_{\ell_i}^{-1}y_i$ and average out the noise. Therefore, if possible, it makes sense to try and estimate the shifts.
In this section, we study the problem of estimating a shift 
when the signal is assumed to be known (which is not the case in MRA); we refer to this problem as \emph{template matching}.
Specifically,  suppose that one has access to a signal, a ``template'' $X\in \RR^L$, and observes a {single} sample $Y=R_{\ell}X + \sigma Z$, where $X\sim \m{N}(0,I)$, {$R_\ell\sim\Unif(\{0,\ldots,L-1\})$} is a random uniform shift, $Z\sim \m{N}(0,I)$, and $R_{\ell}$, $Z$ and $X$ are mutually independent.  The goal, then, is to recover $R_\ell$ from $X$ and $Y$.\footnote{A more general setting, where $X$ is not necessarily Gaussian, and $R_{\ell}X$ goes through some general channel, not necessarily Gaussian,  was studied by Wang, Hu, and Shayevitz~\cite{ws17}, but under different asymptotics.}

While the template matching problem seems to be significantly easier than the MRA problem, we show a surprising phenomenon: in high dimensions, template matching and MRA share the exact same phase transition point. 
In particular, it turns out that in high dimensions, under our parameterization $\sigma^2(\alpha)$, which amounts to $L/\sigma^2 = \alpha\log(L)$, the template matching problem displays a \emph{sharp recoverability threshold}. That is: 
(i) whenever $\alpha>2$, the random shift can be recovered with error probability $p_{e}\to 0$ as $L\to\infty$; (ii) whenever $\alpha<2$, the shift cannot be consistently recovered, and in fact for any estimator, $p_{e}\to 1$. 

Observe that the optimal estimator (in the sense of maximum a posteriori probability) for $R_\ell$ is given by:
\begin{equation}
	\widehat{R}_{\mathrm{MAP}} = \argmin_{\ell'}\|X-R_{\ell'}^{-1}Y\|^2 = \argmax_{\ell'} \frac{\langle X, R_{\ell'}^{-1}Y \rangle}{\|X\|^2} \,.
\end{equation}
Denote its error probability by 
\begin{equation}
	p_{e} = \Pr\left( R_{\ell} \ne \widehat{R}_{\mathrm{MAP}} \right) .
\end{equation}

We start by establishing that with overwhelming probability, the template $X$ is ``incoherent'', in the sense that the correlations $\langle X,R_{\ell'}X\rangle/\|X\|^2$ are very small, unless $\ell'=0$. The lemma is proved in Appendix~\ref{sec:proof_lemma_signal-is-nice}.

\begin{lemma}\label{lem:signal-is-nice}
	For $\kappa>0$, let $\m{A}(\kappa)$ be the event that 
	\[
	\left| L^{-1}\|X\|^2-1 \right| < \kappa \quad \textrm{ and }\quad \max_{\ell'\ne 0}L^{-1}\left| \langle X, R_{\ell'}X \rangle \right| \le \kappa
	,
	\]
	and let $\overline{\m{A}(\kappa)}$ be its complement.
	Then, 
	\[
	\Pr(\overline{\m{A}(\kappa)}) \le 2L\exp\left( -c L\min(\kappa,\kappa^2) \right),
	\] 
	for a universal constant $c>0$. 
	In particular, one can choose a sequence $\kappa=\kappa_L$ such that $\kappa\to 0$ sufficiently slowly, for example, $\kappa=CL^{-1/2}\log(L)$ for $C>0$ large enough, so that {$\Pr(\m{A}_{L}(\kappa_{L})) =1- o(1)$.} 
\end{lemma}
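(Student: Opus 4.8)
The plan is to treat the two constraints defining $\m{A}(\kappa)$ separately, bound the failure probability of each by a sub-exponential concentration inequality, and then combine them by a union bound. Both quantities are quadratic forms in the Gaussian vector $X$: the squared norm is $\|X\|^2 = X^\T X$, and each correlation is $\langle X, R_{\ell'} X\rangle = X^\T R_{\ell'} X$, where $R_{\ell'}$ is the orthogonal cyclic-shift permutation matrix. For the norm, since $\|X\|^2 = \sum_{j=0}^{L-1} X_j^2 \sim \chi^2_L$, standard concentration (the Laurent--Massart bounds, or Bernstein's inequality applied to the sub-exponential variables $X_j^2 - 1$) yields
\[
\Pr\left(\left| L^{-1}\|X\|^2 - 1\right| \ge \kappa\right) \le 2\exp\left(-c_0 L \min(\kappa, \kappa^2)\right)
\]
for a universal $c_0>0$; the $\min(\kappa,\kappa^2)$ form reflects the Gaussian tail for small deviations and the exponential tail for large ones.

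For the correlations, fix $\ell' \ne 0$ and write $Q_{\ell'} = X^\T R_{\ell'} X = \tfrac12 X^\T (R_{\ell'} + R_{\ell'}^\T) X$, a quadratic form in the symmetric matrix $M_{\ell'} = \tfrac12(R_{\ell'} + R_{\ell'}^\T)$. Because a nontrivial shift fixes no coordinate, $R_{\ell'}$ has zero diagonal, so $\Expt Q_{\ell'} = \trace(R_{\ell'}) = 0$. The Hanson--Wright inequality then gives
\[
\Pr(|Q_{\ell'}| \ge t) \le 2\exp\left(-c_1 \min\left(\frac{t^2}{\|M_{\ell'}\|_F^2}, \frac{t}{\|M_{\ell'}\|_{\mathrm{op}}}\right)\right).
\]
Here $\|M_{\ell'}\|_{\mathrm{op}} \le 1$ because $R_{\ell'}$ is orthogonal, while $\|M_{\ell'}\|_F^2 \le \|R_{\ell'}\|_F^2 = L$ because $R_{\ell'}$ is a permutation matrix. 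Taking $t = L\kappa$ produces exactly the desired bound $2\exp(-c_1 L\min(\kappa,\kappa^2))$.

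There are $L-1$ correlation events (one per $\ell'\in\{1,\dots,L-1\}$) and one norm event, so a union bound over these $L$ events yields $\Pr(\overline{\m{A}(\kappa)}) \le 2L\exp(-cL\min(\kappa,\kappa^2))$ with $c = \min(c_0,c_1)$, as claimed. For the final assertion, substitute $\kappa_L = CL^{-1/2}\log L$: since $\kappa_L \to 0$ we have $\min(\kappa_L,\kappa_L^2) = \kappa_L^2 = C^2 L^{-1}(\log L)^2$ for $L$ large, hence $L\min(\kappa_L,\kappa_L^2) = C^2(\log L)^2$, and the bound becomes $2\exp\!\big(\log L - cC^2 (\log L)^2\big) = o(1)$, which holds since $(\log L)^2$ dominates $\log L$.

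The main obstacle is the correlation step: the shift matrices are not symmetric, so one must symmetrize and then verify that $\|M_{\ell'}\|_{\mathrm{op}} = O(1)$ and $\|M_{\ell'}\|_F = O(\sqrt{L})$ — these two bounds are precisely what generate the two regimes in $\min(\kappa,\kappa^2)$, and getting them right is what makes the union bound survive against the factor $L$. If one prefers a self-contained argument that avoids quoting Hanson--Wright, the cycle structure of the permutation $j \mapsto (j+\ell')\bmod L$ lets one write $Q_{\ell'}$ as a sum over disjoint cycles of circularly consecutive products $X_j X_{\sigma(j)}$ and bound its moment generating function directly, at the cost of a more laborious computation.
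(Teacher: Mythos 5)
Your proof is correct and follows essentially the same route as the paper's: apply the Hanson--Wright inequality to the quadratic forms $X^\T R_{\ell'}X$ using $\|R_{\ell'}\|_{\mathrm{op}}=1$ and $\|R_{\ell'}\|_F^2=L$, then union bound over the $L$ shifts. The only cosmetic differences are that you handle the norm term via $\chi^2$ concentration rather than the $\ell'=0$ case of Hanson--Wright, and you symmetrize the shift matrices (harmless but unnecessary, since Hanson--Wright as stated applies to arbitrary $A$).
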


Let 
\begin{equation}
	\Theta_{\ell'} 
	= \frac{\langle X,R_{\ell'}^{-1}Y\rangle}{\|X\|^2} 
	= \frac{\langle X,R_{\ell-\ell'}X\rangle}{\|X\|^2} + \frac{\sigma \langle X,R_{\ell'}^{-1}Z \rangle}{\|X\|^2} ,
\end{equation}
and
\begin{equation}
	W_{\ell'} = \|X\|^{-1}\langle X,R_{\ell'}^{-1}Z \rangle .
\end{equation}
Recalling that
$
\widehat{R}_{\mathrm{MAP}} = \argmax_{\ell'} \Theta_{\ell'}
$,
and plugging $\sigma^2=(\alpha \log(L))^{-1}L$, Lemma~\ref{lem:signal-is-nice} implies that with high probability, 
\begin{equation}\label{eq:MLE-thetas}
	\Theta_{\ell'} = \begin{cases}
		1 + (1+ o(1)) \frac{1}{\sqrt{\alpha \log(L)}}\cdot W_\ell\quad&\textrm{ if }\ell'=\ell, \\
		o(1) + (1+ o(1)) \frac{1}{\sqrt{\alpha\log(L)}} \cdot W_{\ell'} \quad&\textrm{ if }\ell'\ne \ell. 
	\end{cases}
\end{equation}

\revDel{Since for every $\ell'$, $W_{\ell'}\sim \m{N}(0,1)$, it is obvious that $\Theta_{\ell}\limp 1$ as $L\to\infty$.}
\revAdd{Notice that for every $\ell'$, $W_{\ell'}\sim \m{N}(0,1)$, being the projection of $R_{\ell'}^{-1}Z\sim \m{N}(0,I)$ onto a unit vector $X/\|X\|$. This clearly implies that $\Theta_{\ell}\limp 1$ as $L\to\infty$.}
Thus, to analyze the error of the MAP estimator, it simply remains to understand the behavior of $\max_{\ell'}W_{\ell'}$.  
To this end, we recall the following three results. 
We start with a well-known fact about the maximum of i.i.d.\ standard Gaussians:
\begin{lemma}\label{lem:gaussain-max-exp}
	Let $Z_1,\ldots,Z_L$ be i.i.d $\m{N}(0,1)$ random variables. Then, as $L\to\infty$,
	\[
	\Expt\left[\max_{\ell}Z_\ell \right]/\sqrt{2\log(L)} \to 1 .
	\]
\end{lemma}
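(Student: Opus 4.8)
The plan is to prove the two matching bounds $\limsup_{L\to\infty}\Expt[\max_\ell Z_\ell]/\sqrt{2\log L}\le 1$ and $\liminf_{L\to\infty}\Expt[\max_\ell Z_\ell]/\sqrt{2\log L}\ge 1$ separately, the former by a soft moment-generating-function argument and the latter by a direct tail computation.

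For the upper bound, I would invoke Jensen's inequality for the convex map $x\mapsto e^{sx}$ with a free parameter $s>0$: writing $M=\max_\ell Z_\ell$, we have $e^{s\Expt M}\le \Expt e^{sM}=\Expt\max_\ell e^{sZ_\ell}\le\sum_{\ell=1}^L\Expt e^{sZ_\ell}=Le^{s^2/2}$, using the standard Gaussian MGF. Taking logarithms gives $\Expt M\le (\log L)/s+s/2$, and optimizing at $s=\sqrt{2\log L}$ yields $\Expt M\le\sqrt{2\log L}$ for every $L$, hence $\limsup\le 1$ immediately.

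For the lower bound, fix $\varepsilon\in(0,1)$ and set $t=t_L=(1-\varepsilon)\sqrt{2\log L}$. Using independence, $\Pr(M<t)=(1-q_L)^L$ with $q_L=\Pr(Z_1\ge t)$; the Gaussian lower-tail (Mills ratio) estimate gives $q_L=\Theta\!\big(t_L^{-1}e^{-t_L^2/2}\big)=L^{-(1-\varepsilon)^2}/\mathrm{polylog}(L)$, so $Lq_L\to\infty$ because $1-(1-\varepsilon)^2>0$. Consequently $\Pr(M\ge t_L)=1-(1-q_L)^L\to 1$. To transfer this to the expectation I would split $\Expt M\ge t_L\Pr(M\ge t_L)+\Expt[M\,\Ind\{M<t_L\}]$ and control the second term using $M\ge Z_1$, so that $\Expt[M\,\Ind\{M<t_L\}]\ge\Expt[Z_1\,\Ind\{M<t_L\}]\ge-\Expt|Z_1|=-\sqrt{2/\pi}$. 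This gives $\Expt M\ge (1-\varepsilon)\sqrt{2\log L}\,(1-o(1))-O(1)$, whence $\liminf\Expt M/\sqrt{2\log L}\ge 1-\varepsilon$; letting $\varepsilon\downarrow 0$ completes the argument.

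The genuinely delicate step is the lower-tail transfer rather than the probabilistic estimate itself: the near-maximum bound $\Pr(M\ge t_L)\to 1$ is standard, but since $M$ is unbounded below one must rule out a large negative contribution on the complementary event. The bound $M\ge Z_1$ (or any single coordinate) neatly caps this contribution by an $O(1)$ constant, which is negligible against $\sqrt{2\log L}$; alternatively one could invoke uniform integrability of $M/\sqrt{2\log L}$ together with the convergence in probability $M/\sqrt{2\log L}\limp 1$ that the same tail estimates yield. I would also verify the Mills-ratio constants only to the extent needed to confirm $Lq_L\to\infty$, since the polylogarithmic factors are immaterial for the exponent comparison.
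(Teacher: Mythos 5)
Your argument is correct. For the upper bound you use exactly the paper's argument: Jensen applied to $e^{s M}$, the union-type bound $\Expt\max_\ell e^{sZ_\ell}\le\sum_\ell\Expt e^{sZ_\ell}=Le^{s^2/2}$, and optimization at $s=\sqrt{2\log L}$, yielding $\Expt M\le\sqrt{2\log L}$ for every $L$ (and, as the paper notes, without using independence). Where you genuinely diverge is the lower bound: the paper does not prove it at all, but instead cites extreme value theory (Example 1.1.7 of de Haan--Ferreira), whereas you give a complete, self-contained elementary derivation via independence, the Mills-ratio lower tail $q_L=\Pr(Z_1\ge t_L)\gtrsim t_L^{-1}L^{-(1-\varepsilon)^2}$, and the resulting $Lq_L\to\infty$, hence $\Pr(M\ge(1-\varepsilon)\sqrt{2\log L})\to 1$. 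Your transfer from convergence in probability to the expectation is the right point to worry about, and your handling is sound: the pointwise inequality $M\,\Ind\{M<t_L\}\ge Z_1\,\Ind\{M<t_L\}$ caps the contribution of the complementary event by $-\Expt|Z_1|=-\sqrt{2/\pi}$, which is negligible against $\sqrt{2\log L}$. What your route buys is a proof that needs nothing beyond the Gaussian tail estimate and requires independence only for the lower bound; what the paper's citation buys is brevity and, via extreme value theory, sharper information (the full distributional limit of $M$ after centering and scaling) that is not needed for this lemma.
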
 
The upper bound $\Expt\left[\max_{\ell}Z_l\right] \le \sqrt{2\log(L)}$ is elementary, and holds even when $Z_1,\ldots,Z_L$ are not independent. The proof follows from $\Expt\max_{\ell}Z_\ell \le \beta^{-1}\log\Expt \max_{\ell}e^{\beta Z_\ell} \le \beta^{-1} \log\Expt \sum_{\ell=1}^L e^{\beta Z_\ell} = \beta/2 + \beta^{-1}\log(L)$, which holds for all $\beta>0$; now take $\beta=\sqrt{2\log(L)}$. The proof of the matching  lower bound, on the other hand, is  more involved and follows from results in extreme value theory, see, for instance, Example 1.1.7 in~\cite{de2007extreme}. We also use the following ``quantitative'' version of the Sudakov-Fernique inequality: 
\begin{lemma}[Theorem 2.2.5 in \cite{adler2009random}]\label{lem:sudakov-fernique}
	Let $(X_1,\ldots,X_L)$ and $(Y_1,\ldots,Y_L)$ be Gaussian vectors so that ${\Expt[X_i]=\Expt[Y_i]}$ for all $i$. Set 
	\[
	\gamma_{i,j}^X = \Expt(X_i-X_j)^2,\quad \gamma_{i,j}^Y = \Expt(Y_i-Y_j)^2,
	\]
	and $\gamma=\max_{i,j}|\gamma_{i,j}^X-\gamma_{i,j}^Y|$. Then 
	\[
	\left| \Expt\left[\max_i X_i\right]-\Expt\left[\max_i Y_i\right]\right| \le \sqrt{2\gamma\log(L)} .
	\] 
\end{lemma}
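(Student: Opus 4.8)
The plan is to deduce the quantitative bound from the classical (qualitative) Sudakov--Fernique comparison inequality, by sandwiching $X$ between $Y$ and a slightly inflated copy of $Y$. First I would record the elementary observation that makes $\gamma$ natural: since $\Expt X_i=\Expt Y_i$ for every $i$, the mean increments $(\Expt X_i-\Expt X_j)^2$ and $(\Expt Y_i-\Expt Y_j)^2$ coincide, so $\gamma_{i,j}^X-\gamma_{i,j}^Y=\Var(X_i-X_j)-\Var(Y_i-Y_j)$; in particular $|\gamma_{i,j}^X-\gamma_{i,j}^Y|\le\gamma$ for all $i,j$, and the diagonal terms vanish. Without loss of generality I would also take $X$ and $Y$ independent, since replacing $Y$ by an independent copy changes neither its law (hence neither $\Expt\max_i Y_i$ nor $\gamma^Y$) nor the quantity to be bounded.

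Next I would introduce auxiliary i.i.d.\ standard Gaussians $g_1,\dots,g_L$, independent of everything, and set $W_i=Y_i+\sqrt{\gamma}\,g_i$. This process has the same coordinatewise means as $X$, and for $i\ne j$ its increment variances satisfy $\Expt(W_i-W_j)^2=\gamma_{i,j}^Y+2\gamma\ge\gamma_{i,j}^X$, because $\gamma_{i,j}^X-\gamma_{i,j}^Y\le\gamma\le 2\gamma$ and $\Expt(g_i-g_j)^2=2$. The classical Sudakov--Fernique inequality (equal means, dominating increments) then yields $\Expt\max_i X_i\le\Expt\max_i W_i$. Using the pointwise subadditivity $\max_i(Y_i+\sqrt\gamma g_i)\le\max_i Y_i+\sqrt\gamma\max_i g_i$ together with the elementary bound $\Expt\max_i g_i\le\sqrt{2\log L}$ recorded after Lemma~\ref{lem:gaussain-max-exp}, I obtain $\Expt\max_i X_i\le\Expt\max_i Y_i+\sqrt{2\gamma\log L}$. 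Since $\gamma$ is symmetric in $X$ and $Y$, exchanging their roles gives the reverse inequality, and the two combine to the claimed two-sided bound. The only creative ingredient here is the choice of the inflated comparison process $W=Y+\sqrt\gamma\,g$, which simultaneously dominates the increments of $X$ (and, by symmetry, those of $Y$).

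For a self-contained argument that does not invoke Sudakov--Fernique as a black box — and which is essentially how that inequality itself is proved — I would instead run the interpolation (``smart path'') method directly, and this is where the main obstacle lies. Replacing the maximum by the soft-max $\phi_\beta(z)=\beta^{-1}\log\sum_i e^{\beta z_i}$, which obeys $\max_i z_i\le\phi_\beta(z)\le\max_i z_i+\beta^{-1}\log L$, I would define $Z(t)=\mu+\sqrt{t}\,\widetilde X+\sqrt{1-t}\,\widetilde Y$ for independent centered copies $\widetilde X,\widetilde Y$ with common mean $\mu$, and differentiate $F(t)=\Expt\phi_\beta(Z(t))$. Gaussian integration by parts (Stein's identity) turns $F'(t)$ into $\tfrac{\beta}{2}\sum_{i,j}(\Sigma^X_{i,j}-\Sigma^Y_{i,j})\,\Expt[H_{i,j}(Z(t))]$, where $H_{i,j}=p_i\delta_{i,j}-p_ip_j$ encodes the Hessian of $\phi_\beta$ and $p$ is the induced probability vector. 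The crucial and delicate step is the algebraic manipulation that exploits the vanishing row sums $\sum_j H_{i,j}=0$ to rewrite this quadratic form purely in terms of the increment gaps $\gamma_{i,j}^X-\gamma_{i,j}^Y$ (using that their diagonal vanishes), so that all covariance information except the controlled increments drops out. One then bounds $|F'(t)|\le\tfrac{\beta\gamma}{4}$ via $\sum_{i\ne j}p_ip_j\le 1$, integrates over $t\in[0,1]$, and optimizes $\beta=2\sqrt{\log(L)/\gamma}$ to recover a bound of the same order — in fact $\sqrt{\gamma\log L}$, which is even stronger than the stated $\sqrt{2\gamma\log L}$. Justifying the differentiation under the expectation and Stein's identity is routine, since $\phi_\beta$ has bounded gradient and Hessian and the Gaussian law has all moments.
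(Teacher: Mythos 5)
Your proposal is correct. Note that the paper does not prove this lemma at all --- it is imported verbatim as Theorem 2.2.5 of \cite{adler2009random} --- so you are supplying an argument where the paper gives only a citation. Your first argument is complete and clean: the inflated process $W_i=Y_i+\sqrt{\gamma}\,g_i$ has the same means as $X$ and satisfies $\Expt(W_i-W_j)^2=\gamma^Y_{i,j}+2\gamma\ge\gamma^X_{i,j}$ for $i\ne j$ (trivially for $i=j$), so the qualitative Sudakov--Fernique comparison with equal means applies, and combining the pointwise bound $\max_i W_i\le\max_i Y_i+\sqrt{\gamma}\max_i g_i$ with $\Expt\max_i g_i\le\sqrt{2\log L}$ and the $X\leftrightarrow Y$ symmetry gives exactly the stated constant. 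This reduction is arguably preferable in the context of the paper, since it reuses only facts already quoted there (the qualitative comparison and the elementary Gaussian-max bound). Your second, interpolation-based sketch is essentially the proof in Adler--Taylor: the identification of $F'(t)$ via Stein's identity, the use of the vanishing row sums of $H_{i,j}=p_i\delta_{i,j}-p_ip_j$ to pass from covariance gaps to increment gaps (with the diagonal of $\gamma^X-\gamma^Y$ vanishing), the bound $|F'(t)|\le\beta\gamma/4$, and the optimization $\beta=2\sqrt{\log(L)/\gamma}$ are all correct and indeed yield the slightly sharper constant $\sqrt{\gamma\log L}$. Either route suffices; no gaps.
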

To get concentration around the mean, we use (a simple case of) the Borell-TIS inequality:
\begin{lemma}
	\label{lem:borell-tis}
	Let $(X_1,\ldots,X_L)$ be a Gaussian vector, and set $\sigma^2 = \max_i \Expt[X_i^2]$. Then
	\[
	\Pr\left( \left|  \max_{i} X_i - \Expt\left[ \max_i X_i \right] \right| \ge t  \right) \le 2e^{-t^2/2\sigma^2} .
	\]
\end{lemma}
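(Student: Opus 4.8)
The plan is to deduce this as a special case of the concentration of measure phenomenon for Lipschitz functions of a standard Gaussian vector. First I would write the given Gaussian vector as a linear image of i.i.d.\ standard normals: since $(X_1,\ldots,X_L)$ is jointly Gaussian, there exist a matrix $A$ with rows $a_1,\ldots,a_L$ and a standard Gaussian vector $g\sim\m{N}(0,I_m)$ (for some $m$) with
\[
X_i = \Expt[X_i] + \langle a_i, g\rangle \qquad \text{for every } i.
\]
With this representation $\Expt[X_i^2] = \Expt[X_i]^2 + \|a_i\|^2$, so in particular $\max_i \|a_i\|^2 \le \sigma^2$.

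Next I would show that the map $F(g) := \max_i \left( \Expt[X_i] + \langle a_i, g\rangle\right)$ is $\sigma$-Lipschitz. Using that a pointwise maximum is $1$-Lipschitz with respect to the sup-norm of the differences, followed by Cauchy--Schwarz,
\[
|F(g) - F(g')| \le \max_i |\langle a_i, g - g'\rangle| \le \Big(\max_i \|a_i\|\Big)\,\|g - g'\| \le \sigma \, \|g-g'\| .
\]
Hence $\max_i X_i = F(g)$ is a $\sigma$-Lipschitz function of the standard Gaussian vector $g$. Finally I would invoke the Gaussian concentration inequality for Lipschitz functions: if $F$ is $\sigma$-Lipschitz and $g\sim\m{N}(0,I_m)$, then $\Pr(|F(g) - \Expt F(g)| \ge t) \le 2 e^{-t^2/2\sigma^2}$. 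Since $\Expt F(g) = \Expt[\max_i X_i]$, this is precisely the claimed bound.

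The only nontrivial ingredient is the Gaussian Lipschitz concentration inequality itself, which I expect to be the main obstacle if a fully self-contained argument is desired. The cleanest route is the Herbst argument: combining the Gaussian logarithmic Sobolev inequality with the bound $\|\nabla F\| \le \sigma$ (valid almost everywhere, as $F$ is Lipschitz, by Rademacher's theorem after a standard smoothing step) yields, for the centered variable $G = F - \Expt F$ and $H(\lambda) = \Expt e^{\lambda G}$, the differential inequality $\big(\lambda^{-1}\log H(\lambda)\big)' \le \sigma^2/2$. Integrating from $0$ (where $\lambda^{-1}\log H(\lambda)\to \Expt[G]=0$) gives $\log H(\lambda)\le \sigma^2 \lambda^2/2$; a Chernoff bound $\Pr(G\ge t)\le \exp(-\lambda t + \sigma^2\lambda^2/2)$ optimized at $\lambda = t/\sigma^2$ produces $e^{-t^2/2\sigma^2}$, and applying the same reasoning to $-F$ (also $\sigma$-Lipschitz) together with a union bound accounts for the factor $2$. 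Alternatively, one may simply cite this standard fact (via Gaussian isoperimetry) and regard the two displayed reductions above as constituting the entire proof.
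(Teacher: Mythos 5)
Your proof is correct. Note that the paper does not actually prove this lemma: it is stated as a known result (the Borell--TIS inequality) with a citation to Adler and Taylor, Theorem 2.1.1, plus the remark that the two-sided version follows from the one-sided one. Your argument --- writing $X_i=\Expt[X_i]+\langle a_i,g\rangle$ for a standard Gaussian $g$, observing that $g\mapsto\max_i X_i$ is $\sigma$-Lipschitz since $\max_i\|a_i\|^2\le\max_i\Expt[X_i^2]=\sigma^2$, and invoking Gaussian concentration for Lipschitz functions (via log-Sobolev and the Herbst argument, or Gaussian isoperimetry) --- is precisely the standard proof of the cited result, so it supplies a self-contained justification where the paper only gives a reference. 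Two small points worth appreciating: (i) you correctly handle the fact that the lemma is stated with $\sigma^2=\max_i\Expt[X_i^2]$ rather than $\max_i\Var(X_i)$, since $\|a_i\|^2=\Var(X_i)\le\Expt[X_i^2]$ even for non-centered coordinates, so the stated (slightly weaker) bound holds; (ii) the constant in the exponent, $t^2/2\sigma^2$, requires the sharp form of Gaussian Lipschitz concentration, which your Herbst computation (optimizing the Chernoff bound at $\lambda=t/\sigma^2$) does deliver.
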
 
See, e.g., \cite[Theorem 2.1.1]{adler2009random} (there only a one sided bound is stated; the other side follows the same way). 
The following is now an immediate corollary of Lemmas~\ref{lem:signal-is-nice}, \ref{lem:gaussain-max-exp},\ref{lem:sudakov-fernique} and \ref{lem:borell-tis}:
\begin{theorem}[Sharp threshold for template matching]\label{thm:template-sharp-thresh}
	If $\alpha>2$, then $p_{e}\to 0$ as $L\to\infty$. Conversely, if $\alpha<2$, then $p_{e}\to 1$. 
	\label{thm:template}
\end{theorem}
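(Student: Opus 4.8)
The plan is to exploit the reduction already carried out in~\eqref{eq:MLE-thetas}. On the event $\m{A}(\kappa)$ of Lemma~\ref{lem:signal-is-nice}, which has probability $1-o(1)$ for the choice $\kappa=CL^{-1/2}\log(L)$, the MAP statistics satisfy $\Theta_\ell = 1 + (1+o(1))(\alpha\log L)^{-1/2}W_\ell$ and $\Theta_{\ell'} = o(1) + (1+o(1))(\alpha\log L)^{-1/2}W_{\ell'}$ for $\ell'\ne\ell$. A recovery error occurs exactly when $\max_{\ell'\ne\ell}\Theta_{\ell'}\ge\Theta_\ell$, so the entire problem collapses to comparing the asymptotic value of $\Theta_\ell$ against that of $\max_{\ell'\ne\ell}\Theta_{\ell'}$. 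Since $W_\ell\sim\m{N}(0,1)$ is $O_p(1)$, we immediately get $\Theta_\ell\limp 1$. The crux is therefore to establish that, with high probability, $\max_{\ell'\ne\ell}W_{\ell'} = (1+o(1))\sqrt{2\log L}$, which would yield $\max_{\ell'\ne\ell}\Theta_{\ell'} = (1+o(1))\sqrt{2/\alpha}$.

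The substance of the argument, and its main obstacle, is controlling $\max_{\ell'}W_{\ell'}$ despite the fact that the $W_{\ell'}$ are \emph{correlated} Gaussians, so Lemma~\ref{lem:gaussain-max-exp} cannot be applied directly. First I would condition on $X$ and observe that, since $R_{\ell'}$ is orthogonal, $W_{\ell'}=\|X\|^{-1}\langle R_{\ell'}X,Z\rangle$ is a mean-zero Gaussian vector in $Z$ with unit variances and covariances $\|X\|^{-2}\langle X,R_{\ell''-\ell'}X\rangle$. On $\m{A}(\kappa)$ the off-diagonal covariances are $O(\kappa)=o(1)$, so for every $\ell'\ne\ell''$ the quantity $\Expt[(W_{\ell'}-W_{\ell''})^2\mid X]$ differs from the i.i.d. value $2$ by at most $\gamma=O(\kappa)$. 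Applying the quantitative Sudakov--Fernique inequality (Lemma~\ref{lem:sudakov-fernique}), comparing $(W_{\ell'})$ with a vector of i.i.d. standard Gaussians, gives $\left|\Expt[\max_{\ell'}W_{\ell'}\mid X]-\Expt\max_{\ell'}Z_{\ell'}\right|\le\sqrt{2\gamma\log L}$. With $\kappa=CL^{-1/2}\log(L)$ this discrepancy is $O(L^{-1/4}(\log L))=o(\sqrt{\log L})$, and combining with Lemma~\ref{lem:gaussain-max-exp} yields $\Expt[\max_{\ell'}W_{\ell'}\mid X]=(1+o(1))\sqrt{2\log L}$. Concentration around this conditional mean then follows from Borell--TIS (Lemma~\ref{lem:borell-tis}) with $\sigma^2=\max_{\ell'}\Expt[W_{\ell'}^2\mid X]=1$: taking $t=\eps\sqrt{\log L}$ bounds the fluctuation probability by $2L^{-\eps^2/2}\to 0$, so $\max_{\ell'}W_{\ell'}=(1+o(1))\sqrt{2\log L}$ with high probability.

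To finish I would simply compare $\sqrt{2/\alpha}$ with $1$. If $\alpha>2$ then $\sqrt{2/\alpha}<1$, so on $\m{A}(\kappa)$ and the high-probability event above we have $\max_{\ell'\ne\ell}\Theta_{\ell'}<\Theta_\ell$ for $L$ large, giving $p_e\to 0$. If $\alpha<2$ then $\sqrt{2/\alpha}>1$, so with high probability at least one incorrect shift satisfies $\Theta_{\ell'}>\Theta_\ell$, giving $p_e\to 1$. Since all these statements hold conditionally on $X\in\m{A}(\kappa)$ and $\Pr(\overline{\m{A}(\kappa)})=o(1)$, integrating out the conditioning changes each probability by at most $o(1)$, so the conclusions pass to the unconditional $p_e$. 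I expect the only genuinely delicate point to be verifying that the Sudakov--Fernique error term is negligible against $\sqrt{\log L}$, i.e. that $\kappa\log L=o(\log L)$; everything else is bookkeeping that the four cited lemmas have already set up.
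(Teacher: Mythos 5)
Your proposal is correct and follows essentially the same route as the paper's own proof: condition on the incoherence event $\m{A}(\kappa)$ from Lemma~\ref{lem:signal-is-nice}, compare the correlated maximum $\max_{\ell'}W_{\ell'}$ to an i.i.d.\ Gaussian maximum via the quantitative Sudakov--Fernique bound (Lemma~\ref{lem:sudakov-fernique}) together with Lemma~\ref{lem:gaussain-max-exp}, concentrate with Borell--TIS (Lemma~\ref{lem:borell-tis}), and conclude by comparing $\sqrt{2/\alpha}$ with $1$ in~\eqref{eq:MLE-thetas}. The delicate point you flag, $\sqrt{2\gamma\log L}=o(\sqrt{\log L})$ for $\gamma=O(\kappa)$, is exactly the step the paper relies on as well, and your accounting of it is correct.
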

\begin{proof}
	We start by estimating $\Expt \max_{\ell'} W_{\ell'}$. Choose $\kappa=o(1)$ such that the event $\m{A}(\kappa)$ of Lemma~\ref{lem:signal-is-nice} holds with probability $1-o(1)$. 
	Conditioned on $X$, $\{ W_{\ell'} \}_{\ell'=0,\ldots,L-1}$ is a centered Gaussian vector, with covariance 
	\[
	C_{i,j}(X)=\Expt[W_i W_j\,\big|\,X] = \|X\|^{-2}\langle R_i X, R_j X\rangle,
	\]
	whereby under $\m{A}$, $\left| C_{i,j}(X)-\delta_{i,j} \right| =o(1)$.
	
	Let $(\tilde{W}_1,\ldots,\tilde{W}_{L-1})$ be i.i.d $\m{N}(0,1)$ random variables. By Lemmas~\ref{lem:gaussain-max-exp} and \ref{lem:sudakov-fernique}, conditioned on $X$ and under $\m{A}$,
	\[
	\Expt[\max_{\ell'}W_{\ell'}\,\big|\,X,\m{A}]= \Expt[\max_{\ell'}\tilde{W}_{\ell'}] + o(\sqrt{\log(L)}) = \sqrt{ (2{+}o(1))\log(L)} .
	\] 
	Lemma~\ref{lem:borell-tis} gives us a uniform (in $X$) concentration inequality, conditioned on $X$ and under $\m{A}$,
	\[
	\Pr\left( \left|\max_{\ell'}W_{\ell'} - \sqrt{2\log(L)}\right| \ge \sqrt{\varepsilon\log(L)} \,\Big|\, X,\m{A}\right) \le 2L^{-(\varepsilon{+}o(1))/2},
	\] 
	so that 
	\[
	\Pr\left( \left|\max_{\ell'}W_{\ell'} - \sqrt{2\log(L)}\right| \ge \sqrt{\varepsilon\log(L)} \right) \le 2L^{-(\varepsilon{+}o(1))/2} + \Pr\left(\overline{\m{A}}\right)  = o_{\varepsilon}(1).
	\]
	Thus, we have shown that $\max_{\ell'}W_{\ell'}/\sqrt{2\log(L)}\limp 1$. Using equation~\eqref{eq:MLE-thetas}, we deduce that $\Theta_{\ell}\limp 1$ whereas $\max_{\ell'\neq \ell}\Theta_{\ell'} \limp \sqrt{2/\alpha}$. 
	Since $\widehat{R}_{\mathrm{MAP}}=\argmax_{\ell'}\Theta_{\ell'}$, we conclude  that $p_e\to 0$ when $\alpha>2$ and $p_e\to 1$ when $\alpha<2$.  
\end{proof}

\paragraph{A remark on the relation between template matching and synchronization.}
In the MRA model, one does not have access to the true template and thus needs to estimate the relative shifts based solely on the data{; this problem is referred to as \emph{synchronization.} }



For simplicity, let us assume we are given two measurements $Y_1=X+\sigma Z_1$ and $Y_2=R_{\ell}X + \sigma Z_2$, and would like to estimate $R_{\ell}$ (recall that $X$ is unknown). The optimal (MAP) estimator is $\widehat{R}_{\mathrm{syn}}  
= \argmax_{\ell'} \Pr(R_{\ell'}|Y_1,Y_2) $.  
It is straightforward to show that  
\begin{align*}
	\widehat{R}_{\mathrm{syn}} 
	&= \argmax_{\ell'} \langle Y_1,R_{\ell'}^{-1}Y_2 \rangle 
	= \argmax_{\ell'} \langle (X+\sigma Z_1),R_{\ell'}^{-1}(R_{\ell}X+\sigma Z_2) \rangle	 \\
	&= \argmax_{\ell'} \left\{ \langle X,R_{\ell-\ell'}X\rangle + \sigma\langle X,R_{\ell'}^{-1}Z_2\rangle + \sigma\langle X,R_{\ell-\ell'}^{-1}Z_1\rangle + \sigma^2 \langle Z_1,R_{\ell'}^{-1}Z_2 \rangle \right\}. 
\end{align*}
In order for this to consistently return the true relative shift $R_{\ell}$, one needs to ensure that the ``noise'' term,
\[
\sigma\langle X,R_{\ell'}^{-1}Z_2\rangle + \sigma\langle X,R_{\ell-\ell'}^{-1}Z_1\rangle + \sigma^2 \langle Z_1,R_{\ell'}^{-1}Z_2\rangle
\]
is small compared to $\|X\|^2\sim L$. The ``typical'' size of the first two terms is
$
\sigma\langle X,R_{\ell'}^{-1}Z_2\rangle + \sigma\langle X,R_{\ell-\ell'}^{-1}Z_1\rangle \sim \sigma\sqrt{L} 
$,  
whereas the third is $\sigma^2 \langle Z_1,R_{\ell'}^{-1}Z_2\rangle \sim \sigma^2\sqrt{L}$, and is therefore the dominant one for large $\sigma$. Thus, to succeed with non-vanishing probability, we need that $\sigma^2\sqrt{L} \lessapprox L$, that is, $\sigma^2 \lessapprox \sqrt{L}$. In the regime we are interested in, the noise level is $\sigma^2\sim L/\log(L)$, and this turns out to be far too large. 

We mention in passing that if many measurements are available, one can  leverage the redundancy in the data to recover the true relative shifts in challenging environments; see for example~\cite{singer2011angular,singer2011three,boumal2016nonconvex,perry2018message,romanov2019noise}.

\section{Sample complexity lower bounds}
\label{sect:ITbounds}

\subsection{The information-theoretic method for estimation lower bounds}
We employ a standard information-theoretic method of obtaining estimation error lower bounds, via rate-distortion theory {(see e.g.~\cite{polyanskiy2014lecture})}. We refer the reader to SI Appendix~\ref{sec:ITback} for a basic review of the information-theoretic definitions and facts we use in this section. Let $\widehat{X}$ be an estimator of $X$ from the measurements $Y^n=(Y_1,\ldots,Y_n)$, which achieves expected error (``distortion'')
\begin{equation}\label{eq:RDF}
	\Expt \rho(X,\widehat{X}) = L^{-1}\Expt\min_{\ell=0,\ldots,L-1}\|X-R_\ell^{-1}\widehat{X}\|^2 \le \varepsilon .
\end{equation}
Since the estimator depends only on the measurements, and not on $X$,
the triplet $X-Y^n-\widehat{X}$ constitutes a Markov chain. Hence, by the data processing inequality (Proposition~\ref{prop:MIproperties} item \ref{mi:dpi}) we have that  $I(X;\widehat{X})\le I(X;Y^n)$. 
\revAdd{We lower-bound $I(X;\widehat{X})$ by the \emph{rate distortion function} (RDF) $R(\cdot)$ associated with the source $X\sim \m{N}(0,I)$, 
	and distortion measure $\rho(\cdot,\cdot)$: 
	\[
	R(\varepsilon) = \min_{P_{W|X} : \Expt\rho(X,W)\le \varepsilon} I(X;W) .
	\]
	The minimization here is done over conditional distributions $P_{W|X}$, or equivalently, over joint distributions $P_{X,W}$ whose $X$-marginal is $P_X$---in our case $\m{N}(0,I)$---obeying the average distortion constraint $\Expt\rho(X,W)\le \varepsilon$. Since the conditional distribution $P_{\widehat{X}|X}$ is, by definition, feasible for this minimization problem, we have $R(\varepsilon)\le I(X;\widehat{X})$.  
	Combining this with the upper bound $I(X;\hat{X})\le I(X;Y^n)$, we get
	\begin{equation}\label{eq:RDF-less-than-MI}
		R(\varepsilon) \le I(X;Y^n) ,
	\end{equation}
	and we shall next derive a lower bound for $R(\varepsilon)$ in terms of $\varepsilon$. 
}

\subsection{A lower bound on the rate-distortion function}

We start by obtaining a lower bound on the RDF. 
While the RDF problem for a Gaussian source under MSE distortion measure is classical, the MSE up to the best alignment (the distortion measure we consider) is somewhat non-standard.
Obtaining a precise expression for the true RDF seems difficult, but a simple lower bound can be obtained as follows. 

\begin{proposition}
	For an $L$ dimensional i.i.d. Gaussian vector $X\sim\m{N}(0,I)$, and distortion measure $\rho(\cdot,\cdot)$ as defined in~\eqref{eq:rho_def}, the rate distortion function satisfies
	\begin{align*}
		R(\varepsilon) \geq \frac{L}{2}\log\left(\frac{1}{\varepsilon}\right)-\log(L).
	\end{align*}
	\label{prop:mrardf}
\end{proposition}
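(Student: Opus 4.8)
The plan is to compare our alignment-invariant distortion to the classical quadratic rate-distortion problem for a white Gaussian source, whose RDF equals $\frac{L}{2}\log(1/\varepsilon)$, and to argue that the freedom to realign the reconstruction costs at most $\log L$ nats. Fix any $W$ attaining $\Expt\rho(X,W)\le\varepsilon$. Let $\ell^\star=\ell^\star(X,W)=\argmin_{\ell}\|X-R_\ell W\|^2$ be the optimal alignment (breaking ties by, say, the smallest index), and set $W'=R_{\ell^\star}W$. By construction $\frac{1}{L}\|X-W'\|^2=\rho(X,W)$, so $W'$ attains ordinary mean-squared distortion $\frac{1}{L}\Expt\|X-W'\|^2\le\varepsilon$.

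First I would lower bound $I(X;W')$ by the standard Shannon argument for an i.i.d.\ Gaussian source. Writing $I(X;W')=h(X)-h(X\mid W')\ge h(X)-h(X-W')$, and bounding $h(X-W')$ by the entropy of a Gaussian with the same total second moment---which, by the maximum-entropy principle and concavity of $\log$ (equivalently AM--GM on the eigenvalues of the second-moment matrix), is maximized by the isotropic choice with per-coordinate variance $\varepsilon$---yields $h(X-W')\le \frac{L}{2}\log(2\pi e\varepsilon)$, and hence $I(X;W')\ge \frac{L}{2}\log(2\pi e)-\frac{L}{2}\log(2\pi e\varepsilon)=\frac{L}{2}\log(1/\varepsilon)$. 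For $\varepsilon\ge 1$ the claimed inequality is vacuous since $R(\varepsilon)\ge 0$, so we may assume $\varepsilon<1$.

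The main point, and the only real obstacle, is that $W'$ is \emph{not} a legitimate reconstruction of $X$: it depends on $X$ itself through the alignment $\ell^\star$, so one cannot invoke the data-processing inequality to pass from $I(X;W')$ to $I(X;W)$ directly. Instead I would account for the alignment explicitly. Since $W'=R_{\ell^\star}W$ is a deterministic function of the pair $(W,\ell^\star)$, the data-processing inequality gives $I(X;W')\le I(X;W,\ell^\star)$, and the chain rule together with $I(X;\ell^\star\mid W)\le H(\ell^\star)\le \log L$ (as $\ell^\star$ takes at most $L$ values) gives
\[
I(X;W')\le I(X;W)+\log L .
\]
Combining this with the Gaussian lower bound from the previous step yields $I(X;W)\ge \frac{L}{2}\log(1/\varepsilon)-\log L$ for every $W$ feasible at distortion $\varepsilon$. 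Taking the infimum over all such $W$ gives $R(\varepsilon)\ge \frac{L}{2}\log(1/\varepsilon)-\log(L)$, as claimed.
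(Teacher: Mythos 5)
Your proof is correct and follows essentially the same route as the paper's: both introduce the distortion-minimizing shift, observe that the aligned reconstruction meets the ordinary MSE constraint so the quadratic Gaussian rate-distortion bound $\tfrac{L}{2}\log(1/\varepsilon)$ applies, and then charge at most $H(\ell^\star)\le\log L$ for the alignment via the chain rule and data-processing inequality. The only cosmetic difference is that you re-derive the Gaussian bound from the Shannon lower bound, whereas the paper cites the quadratic Gaussian RDF directly.
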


\begin{proof}
	By definition of the rate distortion function, to establish the claim we need to show that for any conditional distribution (``test-channel'') $P_{W|X}$ that satisfies the constraint 
	$\Expt\rho(X,W)\le \varepsilon$, where $\rho(X,W)=L^{-1}\min_{\ell=0,\ldots L-1}\|X-R_{\ell}^{-1}W\|^2$, it holds that $I(X;W) \geq \frac{L}{2}\log\left(\frac{1}{\varepsilon}\right)-\log(L)$. To that end, let $R=R(X,W)=\argmin_{\ell'\in [0,\ldots,{L-1}]}\|X-R_{\ell'}W\|$ be the difference minimizing shift. 
	By the chain law of MI (Proposition~\ref{prop:MIproperties} item \ref{mi:chainrule}), 
	\begin{align}
		I(X;W)
		=I(X;W,R)-I(X;R|W)
		\ge I(X;W,R)-\log(L),\label{eq:mrardf1}
	\end{align}
	where we used $I(X;R|W)\le H(R|W)\le \log(L)$; the former follows from the definition of MI and non-negativity of entropy (Proposition~\ref{prop:entropy} item \ref{entropy:noneg}), and the latter follows from Proposition~\ref{prop:entropy} item \ref{entropy:unif} as the random variable $R$ can take at most $L$ values. Recall that $L^{-1}\mathbb{E}\|X-RW\|^2\leq \varepsilon$ by definition of $R$. We therefore have that
	\begin{align*}
		I(X;RW)\geq \min_{P_{W'|X}:L^{-1}\mathbb{E}\|X-W'\|^2\leq \varepsilon} I(X;W')=\frac{L}{2}\log\left(\frac{1}{\varepsilon}\right),
	\end{align*}
	where in the second equality we have used the well-known expression for the quadratic Gaussian rate distortion function (Proposition~\ref{prop:GaussianRDF}). Thus, using the data processing inequality (Proposition~\ref{prop:MIproperties} item \ref{mi:dpi}), we have
	\begin{align*}
		I(X;W,R)\geq I(X;RW)\geq \frac{L}{2}\log\left(\frac{1}{\varepsilon}\right).
	\end{align*}
	Substituting this into~\eqref{eq:mrardf1} establishes the claim.
\end{proof}

Combining Proposition~\ref{prop:mrardf} with equation~\eqref{eq:RDF-less-than-MI}, we get 
\revAdd{
	\[
	I(X;Y^n) \ge R(\varepsilon) \ge \frac{L}{2}\log\left(\frac{1}{\varepsilon}\right)-\log(L)\,.
	\]
	Setting $\varepsilon = \Expt\rho(X,\widehat{X})$, we have obtained the following bound: 
}
\begin{corollary}\label{prop:IT-lower-bound}
	Suppose that $X\sim \m{N}(0,I)$ is an $L$ dimensional i.i.d.\ Gaussian vector, $\widehat{X}$ is any estimator of $X$ from $Y_1,\ldots,Y_n$, and $\rho(\cdot,\cdot)$ is as defined in~\eqref{eq:rho_def}. Then
	\[
	\Expt \rho(X,\widehat{X}) \ge \exp \left(- \frac{2I(X,Y^n) + 2\log(L)}{L} \right) = \exp\left(-2L^{-1}\cdot I(X,Y^n) {+} o(1) \right).
	\] 
	Equivalently,
	\begin{align*}
		\MSE(L,\alpha,n)\ge \exp \left(- \frac{2I(X,Y^n) + 2\log(L)}{L} \right) = \exp\left(-2L^{-1}\cdot I(X,Y^n) {+} o(1) \right).
	\end{align*}
\end{corollary}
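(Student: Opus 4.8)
The plan is to combine the two facts already in hand—the data-processing bound $R(\varepsilon)\le I(X;Y^n)$ from equation~\eqref{eq:RDF-less-than-MI}, and the explicit rate-distortion lower bound of Proposition~\ref{prop:mrardf}—and to rearrange algebraically. No new probabilistic argument is required; the corollary is purely a matter of chaining inequalities and solving for the distortion.

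First I would set $\varepsilon := \Expt\rho(X,\widehat{X})$, the distortion actually attained by the given estimator. Since $X - Y^n - \widehat{X}$ is a Markov chain, the data processing inequality together with the definition of the rate-distortion function yields
\[
R(\varepsilon) \le I(X;\widehat{X}) \le I(X;Y^n),
\]
which is exactly the content of \eqref{eq:RDF-less-than-MI}. Next I would insert the explicit lower bound of Proposition~\ref{prop:mrardf} on the left-hand side, giving the sandwich
\[
\frac{L}{2}\log\left(\frac{1}{\varepsilon}\right) - \log(L) \;\le\; R(\varepsilon) \;\le\; I(X;Y^n).
\]

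Solving the outer inequality for $\varepsilon$—isolating $\log(1/\varepsilon)$, dividing by $L/2$, and exponentiating—produces
\[
\Expt\rho(X,\widehat{X}) = \varepsilon \;\ge\; \exp\left(-\frac{2 I(X;Y^n) + 2\log(L)}{L}\right),
\]
which is the first claimed bound. The equivalent $o(1)$ form follows immediately by noting that $2\log(L)/L = o(1)$ as $L\to\infty$, so the $\log(L)$ contribution is absorbed into the additive $o(1)$ inside the exponent. Finally, since the right-hand side depends only on the model (through $I(X;Y^n)$) and not on the choice of $\widehat{X}$, taking the infimum over all estimators on the left gives the stated restatement in terms of $\MSE(L,\alpha,n)$.

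Because both ingredients have already been established, there is no genuine obstacle; the only points requiring care are the direction of the inequality upon rearrangement (the lower bound on $R(\varepsilon)$ becomes a \emph{lower} bound on $\varepsilon$ because $\log(1/\varepsilon)$ is decreasing in $\varepsilon$), and the observation that the additive $\log(L)$ term contributes only $o(1)$ after division by $L$, which is what licenses the second, simplified form of the bound.
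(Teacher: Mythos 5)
Your proposal is correct and is exactly the paper's argument: the paper derives this corollary by ``combining Proposition~\ref{prop:mrardf} with equation~\eqref{eq:RDF-less-than-MI}'' and rearranging, precisely as you do. The care you note about the direction of the inequality and the absorption of $2\log(L)/L$ into the $o(1)$ term is all that is needed.
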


\revAdd{Corollary \ref{prop:IT-lower-bound} tells us that an upper bound on the {MI} $I(X;Y^n)$ would give us a lower bound on the expected error of any estimator  of $X$ from $Y^n=(Y_1,\ldots,Y_n)$. We devote the next section to deriving such upper bounds.}

\subsection{Upper bounds on the mutual information}

\revDel{In light of Corollary~\ref{prop:IT-lower-bound}, an upper bound on the {MI} $I(X;Y^n)$ provides a lower bound on the expected error of any estimator of $X$ from $Y^n=(Y_1,\ldots,Y_n)$.}

We start with the rather trivial observation that the MI between the signal~$X$ and the measurements~$Y^n$ is smaller than the MI in a problem where there are no random shifts, which is equal to  $\frac{L}{2}\log(1+n\sigma^{-2})$. 
The next lemma formalizes this intuition and quantifies the MI difference between the two problems.

\begin{lemma}\label{lem:gaussian-MI}
	The mutual information between the signal $X$ and measurements $Y_1,\ldots,Y_n$ is
	\begin{align}
		I(X;Y^n) = \frac{L}{2}\log(1+n\sigma^{-2}) - I(R^n;X|Y^n),\label{eq:alphageq1MIbound}
	\end{align}
	where $R^n=(R_{\ell_1},\ldots,R_{\ell_n})$. In particular, $I(X;Y^n) \le \frac{L}{2}\log(1+n\sigma^{-2})$.
\end{lemma}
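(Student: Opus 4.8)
The plan is to introduce the shifts $R^n=(R_{\ell_1},\ldots,R_{\ell_n})$ as auxiliary random variables and expand the joint mutual information $I(X;Y^n,R^n)$ via the chain rule (Proposition~\ref{prop:MIproperties}.\ref{mi:chainrule}) in two orderings. On one hand,
\[
I(X;Y^n,R^n) = I(X;R^n) + I(X;Y^n\mid R^n) = I(X;Y^n\mid R^n),
\]
where $I(X;R^n)=0$ because the shifts $\ell_i$ are drawn independently of $X$. On the other hand,
\[
I(X;Y^n,R^n) = I(X;Y^n) + I(X;R^n\mid Y^n).
\]
Equating the two expansions and using the symmetry of (conditional) mutual information gives
\[
I(X;Y^n) = I(X;Y^n\mid R^n) - I(R^n;X\mid Y^n),
\]
so the lemma reduces to showing that $I(X;Y^n\mid R^n)=\tfrac{L}{2}\log(1+n\sigma^{-2})$.

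The key observation is that conditioning on the shifts turns the MRA model into the ordinary AWGN problem. I would fix a realization $R^n=r^n=(r_1,\ldots,r_n)$ and note that each cyclic shift is an orthogonal (permutation) matrix, so applying the deterministic invertible map $Y_i\mapsto r_i^{-1}Y_i=:\tilde Y_i$ leaves the mutual information unchanged and yields $\tilde Y_i = X + \sigma\, r_i^{-1} Z_i$. By rotation invariance of the standard Gaussian, $r_i^{-1}Z_i$ are again i.i.d.\ $\m{N}(0,I)$, so conditioned on any realization of the shifts we observe exactly $n$ independent noisy copies of $X$ in AWGN; in particular $I(X;Y^n\mid R^n=r^n)$ does not depend on $r^n$, and hence equals $I(X;Y^n\mid R^n)$ after taking the expectation over $R^n$. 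To evaluate it I would pass to the sufficient statistic $\bar Y=\tfrac1n\sum_i \tilde Y_i = X + \tfrac{\sigma}{n}\sum_i r_i^{-1}Z_i$, whose effective per-coordinate noise variance is $\sigma^2/n$. Since the $L$ coordinates of $X$ are independent standard Gaussians, applying the classical scalar Gaussian channel formula (Proposition~\ref{prop:GaussianRDF} / the standard capacity computation) coordinatewise gives $\tfrac{L}{2}\log(1+n/\sigma^2)$, establishing the identity~\eqref{eq:alphageq1MIbound}. The final inequality $I(X;Y^n)\le \tfrac{L}{2}\log(1+n\sigma^{-2})$ is then immediate from the non-negativity of $I(R^n;X\mid Y^n)$.

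This argument is short and I do not anticipate a genuine obstacle; the only point requiring care is the reduction step, namely justifying cleanly that conditioning on $R^n$ exactly recovers the shift-free AWGN channel. The two ingredients there are that cyclic shifts are orthogonal (so the change of variables $Y_i\mapsto R_{\ell_i}^{-1}Y_i$ is an MI-preserving bijection once $R^n$ is fixed) and that the standard Gaussian noise is invariant under such orthogonal maps (so the transformed noises remain i.i.d.\ $\m{N}(0,I)$). Everything else is bookkeeping with the chain rule and the well-known Gaussian mutual information, both available from the results already stated.
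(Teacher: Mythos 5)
Your proposal is correct and follows essentially the same route as the paper: both expand $I(X;Y^n,R^n)$ by the chain rule, exploit the independence of the shifts and the rotation invariance of the Gaussian noise to reduce the remaining term to the shift-free AWGN problem, and finish with the sufficient-statistic reduction and the standard Gaussian mutual information formula. The only (cosmetic) difference is that the paper first passes to $\tilde Y_i=R_{\ell_i}^{-1}Y_i$ and shows $I(X;R^n\mid\tilde Y^n)=0$, whereas you condition on $R^n$ directly and use $I(X;R^n)=0$; both steps are valid and interchangeable.
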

\begin{proof}
	Let $\tilde{Y}_i = R_{\ell_i}^{-1}Y_i=X+\sigma R_{\ell_i}^{-1}Z_i$. We may write
	\begin{align*}
		I(X;Y^n)&=I(X;Y^n,R^n)-I(X;R^n|Y^n)\\
		&=I(X;\tilde{Y}^n,R^n)-I(X;R^n|Y^n)\\
		&=I(X;\tilde{Y}^n)+I(X;R^n|\tilde{Y}^n)-I(X;R^n|Y^n),
	\end{align*}
	where the first and third equalities follow by the chain rule for MI (Proposition~\ref{prop:MIproperties} item \ref{mi:chainrule}), and the second follows from Proposition~\ref{prop:MIproperties} item \ref{mi:invertible}, and the fact that the mapping $(Y^n,R^n)\mapsto (\tilde{Y}^n,R^n)$ is invertible. By the fact that the Gaussian distribution is rotation invariant, and in particular $R^{-1}_{\ell_i} Z\sim\m{N}(0,I)$, we have that $R^n$ is statistically independent of $(X,\tilde{Y}^n)$, and consequently
	\begin{align*}
		I(X;R^n|\tilde{Y}^n)=H(R^n|\tilde{Y}^n)-H(R^n|\tilde{Y}^n,X)=H(R^n)-H(R^n)=0,
	\end{align*}
	where the first equality follows by definition of conditional mutual information and the second by Proposition~\ref{prop:MIproperties}.\ref{entropy:concavity}. It remains to compute $I(X;\tilde{Y}^n)$. 
	\revDel{To this end, note that $P_{\tilde{Y}^n|X=x}=\m{N}^{\otimes n}(x,\sigma^2 I)$, that is, $X$, and $\tilde{Y}^n$ have the same joint distributed as $X$ and $(X+\sigma Z_1,\ldots,X+\sigma Z_n)$, i.e., as $n$ measurements of a signal in AWGN. It is well known that the sample average $\frac{1}{n}\sum_{i=1}^n X+\sigma Z_i$ is a sufficient statistic of $(X+\sigma Z_1,\ldots,X+\sigma Z_n)$ for $Y$. We therefore have that
		\begin{equation}
			\begin{split}
				I(X;\tilde{Y}^n) &= I(X;X+\sigma Z_1,\ldots,X+\sigma Z_n)= I\left(X;X+\frac{\sigma}{n}\sum_{i=1}^n  Z_i\right) \\ &= I\left(X; X + \m{N}(0,(\sigma^2/n) I)\right) = \frac{L}{2}\log(1+n\sigma^{-2}),
			\end{split}	
	\end{equation}}
	\revAdd{To this end, note that conditioned on $X=x$, the measurements $\tilde{Y}_1,\ldots,\tilde{Y}_n$ are simply i.i.d. Gaussian measurements $Y_i\sim \m{N}(x,\sigma^2 I)$. It is well-known that in this case, the sample mean $\frac{1}{n} \sum_{i=1}^n \tilde{Y}_i = X$ is a sufficient statistic of $\tilde{Y}^n$ for $X$. Conditioned on $X=x$, the sample mean has distribution $\frac{1}{n}\sum_{i=1}^n \tilde{Y}_i \sim \m{N}(x,\sigma^2/n \cdot I)$, therefore,
		\begin{equation}
			\begin{split}
				I(X;\tilde{Y}^n) = I\left(X;\frac1n \sum_{i=1}^n \tilde{Y}_i \right)  = I\left(X; X + \m{N}(0,\sigma^2/n \cdot I)\right) = \frac{L}{2}\log(1+n\sigma^{-2}),
			\end{split}	
	\end{equation}}
	
	where the last equality follows from Proposition~\ref{prop:MIproperties} item \ref{mi:Gaussian}.
\end{proof}

Combining Corollary~\ref{prop:IT-lower-bound} and Lemma~\ref{lem:gaussian-MI}, we obtain the following lower bound, that essentially says the MSE in the MRA model is no better than in estimating a signal in AWGN.

\begin{corollary}
	\label{cor:GaussianMSE_LB}
	The smallest attainable MSE in the MRA model satisfies
	\begin{align*}
		\MSE(L,\sigma^2,n)\geq \frac{L^{-\frac{2}{L}}}{1+n\sigma^{-2}}=\frac{1}{1+n\sigma^{-2}}(1+o(1)),
	\end{align*}
	and the sample complexity satisfies
	\begin{align*}
		\SC(L,\sigma^2,\varepsilon)\geq \left\lceil \left(\frac{L^{-\frac{2}{L}}}{\varepsilon}-1\right)\sigma^2\right\rceil=n^*_{\text{AWGN}}(L,\sigma^2,\varepsilon)(1+o(1)).
	\end{align*}
\end{corollary}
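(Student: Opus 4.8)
The plan is to chain together the two results just established, with no new ideas required. Corollary~\ref{prop:IT-lower-bound} gives the MSE lower bound
\[
\MSE(L,\sigma^2,n) \ge \exp\!\left(-\frac{2I(X;Y^n)+2\log(L)}{L}\right),
\]
while Lemma~\ref{lem:gaussian-MI} gives $I(X;Y^n) \le \frac{L}{2}\log(1+n\sigma^{-2})$. The right-hand side of the displayed bound is monotone decreasing in $I(X;Y^n)$, so substituting the largest admissible value of the mutual information yields a valid lower bound on the MSE.

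First I would substitute the MI bound and simplify the exponent. Plugging in $2I(X;Y^n) \le L\log(1+n\sigma^{-2})$ gives
\[
\MSE(L,\sigma^2,n) \ge \exp\!\left(-\log(1+n\sigma^{-2}) - \tfrac{2\log(L)}{L}\right) = \frac{L^{-2/L}}{1+n\sigma^{-2}},
\]
where the last step uses $e^{-\log(1+n\sigma^{-2})} = (1+n\sigma^{-2})^{-1}$ and $e^{-2\log(L)/L} = L^{-2/L}$. Since $\tfrac{2\log(L)}{L}\to 0$, we have $L^{-2/L} = 1+o(1)$, which gives the second stated form of the MSE bound.

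Next I would convert this into the sample-complexity bound by inverting the inequality. If an estimator achieves $\MSE(L,\sigma^2,n)\le\varepsilon$, then the lower bound forces $\frac{L^{-2/L}}{1+n\sigma^{-2}}\le\varepsilon$, equivalently $n \ge \left(\frac{L^{-2/L}}{\varepsilon}-1\right)\sigma^2$. Because $n$ is integer-valued and $\SC(L,\sigma^2,\varepsilon)$ is the smallest such $n$, taking the ceiling yields $\SC(L,\sigma^2,\varepsilon)\ge\left\lceil\left(\frac{L^{-2/L}}{\varepsilon}-1\right)\sigma^2\right\rceil$; the $L^{-2/L}=1+o(1)$ factor then identifies this with $n^*_{\text{AWGN}}(L,\sigma^2,\varepsilon)(1+o(1))$.

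There is no genuine obstacle here --- the argument is a routine algebraic combination of the two preceding results. The only points demanding care are (i) checking that the substitution preserves the direction of the inequality, which it does precisely because the bound in Corollary~\ref{prop:IT-lower-bound} is decreasing in $I(X;Y^n)$; and (ii) respecting the integer constraint on $n$ via the ceiling when passing from the MSE bound to the sample-complexity bound.
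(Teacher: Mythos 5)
Your proposal is correct and follows exactly the route the paper intends: the corollary is obtained by substituting the bound $I(X;Y^n)\le \frac{L}{2}\log(1+n\sigma^{-2})$ from Lemma~\ref{lem:gaussian-MI} into the MSE lower bound of Corollary~\ref{prop:IT-lower-bound} and then inverting to get the sample-complexity statement. Your attention to the monotonicity of the bound in $I(X;Y^n)$ and to the integer constraint on $n$ is exactly the right level of care for this routine combination.
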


Lemma~\ref{lem:gaussian-MI} tells us that the gap between $I(X;Y^n)$ and the MI in estimating a signal in AWGN, without the shifts, $\frac{L}{2}\log(1+n\sigma^{-2})$, is $I(X;R^n|Y^n)$. This quantity is intimately related to a multi-sample version of the template matching problem, as was considered in Section~\ref{sect:template-matching}. 
This connection will be exploited later on, when we derive an upper bound on the {single sample} MI $I(X;Y_i)$.


\paragraph{Information combining} Observe that the measurements $Y_1,\ldots,Y_n$ are mutually independent \revAdd{and identically distributed} conditioned on $X$; that is, the samples are obtained by passing the same signal $X$ independently through a memoryless channel\revDel{ $P_{Y^n|X}=P^{\otimes n}_{Y|X}$}. By Proposition~\ref{prop:MIproperties} item \ref{mi:memoryless}, this implies that
\begin{align}
	I(X;Y^n)\leq \sum_{i=1}^n I(X;Y_i)=nI(X;Y),\label{eq:MImemoryless}
\end{align} 
where $Y=R_\ell X+\sigma Z$ is a single measurement in the MRA model. Substituting~\eqref{eq:MImemoryless} into Corollary~\ref{prop:IT-lower-bound}, yields the following.

\begin{proposition}
	The smallest attainable MSE in the MRA model satisfies
	\begin{align*}
		\MSE(L,\sigma^2,n)\geq L^{-\frac{2}{L}}\exp\left(-n\frac{2}{L}I(X;Y)\right)=\exp\left(-n\frac{2}{L}I(X;Y)\right)(1+o(1)),
	\end{align*}
	and the sample complexity satisfies
	\begin{align*}
		\SC(L,\sigma^2,\varepsilon)\geq \frac{L}{2}\cdot\frac{\log\left(\frac{1}{\varepsilon}\right)-\frac{2\log{(L)}}{L}}{I(X;Y)}=\log\left(\frac{1}{\varepsilon}\right)\cdot \frac{L}{2I(X;Y)}(1+o(1)),
	\end{align*}
	where $Y=R_\ell X+\sigma Z$ is a single measurement in the MRA model.
	\label{prop:singlesampleMIbasedBounds}
\end{proposition}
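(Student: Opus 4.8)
The plan is to simply substitute the information-combining bound into the rate-distortion lower bound already in hand, so no new idea is required beyond what has been set up. Concretely, Corollary~\ref{prop:IT-lower-bound} supplies the estimation lower bound $\MSE(L,\sigma^2,n)\ge \exp\big(-(2I(X;Y^n)+2\log L)/L\big)$ valid for an arbitrary number of measurements, while the conditional independence of $Y_1,\ldots,Y_n$ given $X$ yields the tensorization $I(X;Y^n)\le n\,I(X;Y)$ recorded in~\eqref{eq:MImemoryless}. Since the right-hand side of the corollary is monotone decreasing in $I(X;Y^n)$, I would replace $I(X;Y^n)$ by its upper bound $n\,I(X;Y)$ to obtain $\MSE(L,\sigma^2,n)\ge \exp\big(-(2n\,I(X;Y)+2\log L)/L\big)=L^{-2/L}\exp\big(-\tfrac{2n}{L}I(X;Y)\big)$, which is the first claim; the rewriting $L^{-2/L}=1+o(1)$ follows because $\tfrac{2}{L}\log L\to 0$ as $L\to\infty$.

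For the sample-complexity bound I would invert the MSE estimate. By definition, $\SC(L,\sigma^2,\varepsilon)$ is the least $n$ achieving $\MSE(L,\sigma^2,n)\le\varepsilon$; hence any admissible $n$ must satisfy $L^{-2/L}\exp\big(-\tfrac{2n}{L}I(X;Y)\big)\le\varepsilon$. Taking logarithms and solving for $n$ — here one has to be careful to preserve the inequality direction, since the aim is a \emph{lower} bound on $n$ — gives $n\ge \tfrac{L}{2I(X;Y)}\big(\log(1/\varepsilon)-\tfrac{2}{L}\log L\big)$, which is exactly the stated bound. The final $(1+o(1))$ form again absorbs the vanishing correction $\tfrac{2}{L}\log L$, using that $\log(1/\varepsilon)$ is a fixed positive constant.

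There is essentially no obstacle here: both displayed inequalities are a direct algebraic consequence of Corollary~\ref{prop:IT-lower-bound} together with~\eqref{eq:MImemoryless}, and the only points requiring care are the monotonicity invoked when substituting the information-combining bound, the correct inequality direction when solving for $n$, and the bookkeeping of the $L^{-2/L}$ factor. The genuine difficulty of the problem is deferred to the separate, and much harder, task of upper-bounding the single-sample mutual information $I(X;Y)$ for a single MRA measurement $Y=R_\ell X+\sigma Z$; that bound is what ultimately distinguishes the two regimes $\alpha>2$ and $\alpha\le 2$, and it is precisely the quantity governed by the template-matching phenomenon analyzed in Section~\ref{sect:template-matching}.
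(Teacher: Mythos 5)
Your proposal is correct and follows exactly the paper's route: the paper states this proposition as an immediate consequence of substituting the tensorization bound $I(X;Y^n)\le n\,I(X;Y)$ from~\eqref{eq:MImemoryless} into Corollary~\ref{prop:IT-lower-bound} and then inverting the resulting MSE bound to isolate $n$. The algebra, the monotonicity justification, and the handling of the $L^{-2/L}$ correction are all as intended.
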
 


It is important to emphasize at this point that the bound in~\eqref{eq:MImemoryless} becomes very loose for $n$ sufficiently large. Indeed, Lemma~\ref{lem:gaussian-MI} implies that $I(X;Y^n)$ should scale at best logarthmically, rather than linearly, with~$n$. Consequently, the lower bound on $\MSE(L,\sigma^2,n)$ in Proposition~\ref{prop:singlesampleMIbasedBounds} decreases exponentially fast with $n$, whereas we know from Corollary~\ref{cor:GaussianMSE_LB} that it cannot decrease faster than the parametric rate of $1/n$ as in estimating a signal in AWGN.
Despite its grossly wrong dependence on $n$, the upper bound $I(X;Y^n)\leq nI(X;Y)$ {does} suffice to say something non-trivial about the sample complexity of the problem. As seen from Proposition~\ref{prop:singlesampleMIbasedBounds}: in order for the estimation error to be {strictly bounded away from one}, one needs at least $\Omega(L\cdot I(X;Y)^{-1})$ samples. We will see that this rather ``na\"ive'' analysis is already enough to accurately separate between a ``high SNR'' and a ``low SNR'' regime, where the behavior of the MRA problem is qualitatively different. 
Intuitively, as the measurements $Y_1,\ldots,Y_n$ are only dependent through the random variable $X$, if $n$ is so small that it is impossible to learn much about $X$ from $Y^n$, the dependence between $Y_1,\ldots,Y_n$ must be weak. Thus, in that regime, ignoring this dependence and bounding $I(X;Y^n)\leq n I(X;Y)$ is a rather accurate estimate. 

The problem of obtaining a stronger bound on multi-sample MI $I(X;Y^n)$ in terms of the single-sample MI $I(X;Y)$ is an instance of a so-called \emph{information combining} problem. Several problems of this type have been studied in the information theory literature, mostly dealing with binary channels~\cite{ssz05,lhhh05}. In our case, we believe this problem to be quite hard, at least in the low SNR regime, and thus  we could not obtain a tighter bound. Deriving such bounds can yield stronger lower bounds on $\MSE(L,\alpha,n)$ in the low-SNR regime ($\alpha<2$) than the ones we obtain here using the simple bound $I(X;Y^n)\leq n I(X;Y))$.

\paragraph{Roadmap} We will devote the rest of this section to deriving upper bounds on $I(X;Y)$. These bounds, together with Proposition~\ref{prop:singlesampleMIbasedBounds}, will immediately imply lower bounds on  the MSE and the sample complexity. 
In particular, we will derive two bounds, using different methods, that will be effective in two SNR regimes. 
\begin{itemize}
	\item We estimate the mutual information using Jensen's inequality to facilitate the computation of several expectations. One could expect this method to give somewhat tight results when~$I(X;Y)$ is very small, and indeed, we shall see that when $0<\alpha < 1$, we obtain a bound $I(X;Y)=O(L^{\alpha-1})$, which tends to $0$ as $L\to\infty$. For $\alpha\ge 1$, the obtained bound will turn out to be too loose. 
	
	\item In Lemma~\ref{lem:gaussian-MI} we have found that $I(X;X+\sigma Z)-I(X;Y)=I(X,R_{\ell}|Y)$. We lower bound this gap using a Fano-like inequality, which in the case $\alpha<2$ amounts to ``quantifying'' how well $R_{\ell}$ can be estimated from $X$ and $Y$, in a somewhat more precise sense than Theorem~\ref{thm:template-sharp-thresh} (which tells us that in this case, the error is $p_{e}=1-o(1)$). This will allow us to show that when $\alpha<2$, $I(X;Y)=o(\log(L))$. We will not, however, be able to recover the estimate  in the case of $0<\alpha<1$ using this method.
	
	
\end{itemize}

\subsubsection{MI bound at very low SNR ($\alpha<1$)}

We first express $I(X;Y)$ in the following way: 
\begin{lemma}\label{lem:MI-calculation-step-1}
	Suppose that $X\sim\m{N}(0,I)$, $Z\sim \m{N}(0,I)$, and {$R\sim\Unif(\{R_0,\ldots,R_{L-1}\})$} are mutually independent. Then, 
	\[
	I(X;Y) = \frac{L}{2}\log(1+\sigma^{-2})-L\sigma^{-2} + \Expt_{X,Z} \left[ \log \Expt_{R} \exp\left( \frac{1}{\sigma^2} \langle X+\sigma Z,RX \rangle \right) \right].
	\]
\end{lemma}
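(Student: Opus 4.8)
The plan is to compute $I(X;Y)$ directly from the definition $I(X;Y)=\Expt_{X,Y}\left[\log\frac{p_{Y|X}(Y|X)}{p_Y(Y)}\right]$ by writing down both densities explicitly. Since a cyclic shift $R$ is a permutation and hence orthogonal, $\|Rx\|^2=\|x\|^2$, so expanding $\|y-Rx\|^2$ the conditional density of the Gaussian mixture $Y=RX+\sigma Z$ simplifies to
\[
p_{Y|X}(y|x)=\frac{1}{(2\pi\sigma^2)^{L/2}}\exp\left(-\frac{\|y\|^2+\|x\|^2}{2\sigma^2}\right)\Expt_R\exp\left(\frac{\langle y,Rx\rangle}{\sigma^2}\right),
\]
where the cross term $\langle y,Rx\rangle$ is the only place the shift survives.

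For the marginal I would invoke rotation invariance: conditioned on the realized shift being any fixed $R_\ell$, we have $R_\ell X\sim\m{N}(0,I)$ because $R_\ell$ is orthogonal, so $Y\sim\m{N}(0,(1+\sigma^2)I)$ regardless of $\ell$; since this law does not depend on $\ell$, the mixture---i.e. the true marginal---is exactly $\m{N}(0,(1+\sigma^2)I)$, and $p_Y$ is the corresponding Gaussian density. Forming the log-ratio and taking $\Expt_{X,Y}$, the ratio of normalizing constants contributes precisely $\frac{L}{2}\log(1+\sigma^{-2})$, and the purely quadratic terms $-\frac{\|Y\|^2+\|X\|^2}{2\sigma^2}+\frac{\|Y\|^2}{2(1+\sigma^2)}$ are evaluated using $\Expt\|X\|^2=L$ and $\Expt\|Y\|^2=L(1+\sigma^2)$; a short calculation collapses them to exactly $-L\sigma^{-2}$. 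What remains is the term $\Expt_{X,Y}\left[\log\Expt_R\exp\left(\frac{\langle Y,RX\rangle}{\sigma^2}\right)\right]$.

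The last step---the only one requiring genuine care---is to show that this remaining expectation, taken over the true MRA joint law where $Y=R^{*}X+\sigma Z$ for a random shift $R^{*}$, coincides with the stated $\Expt_{X,Z}\left[\log\Expt_R\exp\left(\frac{\langle X+\sigma Z,RX\rangle}{\sigma^2}\right)\right]$, i.e.\ the expression obtained by pretending the realized shift were the identity. I would prove this by conditioning on $X$ and $R^{*}$ and applying the orthogonal map $(R^{*})^{-1}$ inside both inner products: $\langle R^{*}X+\sigma Z,RX\rangle=\langle X+\sigma (R^{*})^{-1}Z,(R^{*})^{-1}RX\rangle$. Reindexing the inner average by $\tilde R=(R^{*})^{-1}R$, which again ranges uniformly over the group of shifts, and using that $(R^{*})^{-1}Z\sim\m{N}(0,I)$ has the same law as $Z$, the conditional expectation over $Z$ becomes independent of $R^{*}$ and equal to its identity-shift value; averaging over $R^{*}$ then changes nothing. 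Substituting all three evaluated pieces back into the log-ratio yields the claimed identity. The main obstacle is exactly this symmetrization---keeping straight which shift is the dummy $R$ being marginalized and which is the true realized $R^{*}$---since everything else reduces to a mechanical Gaussian moment computation.
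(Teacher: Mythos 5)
Your proposal is correct and follows essentially the same route as the paper: the paper organizes the computation as $I(X;Y)=h(Y)-h(Y|X)$ rather than as a single log-likelihood-ratio expectation, but the ingredients are identical --- the explicit Gaussian-mixture conditional density with $\|y-Rx\|^2$ expanded via orthogonality of $R$, the observation that $Y\sim\m{N}(0,(1+\sigma^2)I)$ by rotation invariance, the collapse of the quadratic terms to $-L\sigma^{-2}$, and the same symmetrization (writing $Y$ as a shift of $X+\sigma Z$ and reindexing the inner average over the group) to remove the realized shift. Your handling of the final symmetrization step, which you correctly flag as the only delicate point, matches the paper's argument for ``dropping'' $R'$.
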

\begin{proof}
	Write $I(X;Y)=h(Y)-h(Y|X)$. Note that for any shift $R_\ell$, $R_{\ell}X\sim \m{N}(0,I)$ and therefore $Y\sim\m{N}(0,(1+\sigma^2)I)$; this means that $Y=R_\ell X+\sigma Z$ is independent of $R_\ell$. The differential entropy of $Y$ is 
	$h(Y)=h(\m{N}(0,(1+\sigma^2)I)=\frac{L}{2}\log(2\pi e) + \frac{L}{2}\log(1+\sigma^2)$, by Proposition~\ref{prop:entropy} item \ref{entropy:gauss}. 
	
	Let us now write the conditional differential entropy explicitly. The conditional density of $Y$ given~$X$ is $p_{Y|X}(y|x)=\Expt_{R}\left[ (2\pi\sigma^2)^{-L/2}\exp\left(-\frac{1}{2\sigma^2}\|y-R x\|^2\right) \right]$ for uniform~$R$. The conditional entropy is then simply
	\begin{align*}
		h(Y|X) 
		&= \Expt_{X,Y}\left[ -\log p_{Y|X}(Y|X)\right] \\
		&= \frac{L}{2}\log(2\pi \sigma^2) - \Expt_{X,Y} \left[ \log \Expt_{R} \exp\left( -\frac{1}{2\sigma^2}\|Y-RX\|^2 \right) \right] \\
		&= \frac{L}{2}\log(2\pi \sigma^2) - \Expt_{X,Y} \left[ \log \Expt_{R} \exp\left( -\frac{1}{2\sigma^2}\left( \|Y\|^2+\|X\|^2 - 2\langle Y,RX \rangle \right) \right) \right] \\
		&= \frac{L}{2}\log(2\pi \sigma^2) + \frac{L+(1+\sigma^2)L}{2\sigma^2}- \Expt_{X,Y} \left[ \log \Expt_{R} \exp\left( \frac{1}{\sigma^2} \langle Y,RX \rangle \right) \right] .
	\end{align*}
	\revAdd{It remains to compute the expectation with respect to the joint distribution of $X$ and $Y$ in the last term. Recall that we can write $Y=R'X + \sigma Z$ for $R' \sim \Unif(\{R_0,\ldots,R_{L-1}\}) $ and $Z\sim \m{N}(0,I)$, both independent of $X$. Alternatively, we could also write $Y=R'(X+\sigma Z)$, which defines the exact same joint distribution between $X$ and $Y$, due to the orthogonal invariance of $Z\sim \m{N}(0,I)$; this second form is slightly more convenient in what follows.}\revDel{ We can write $Y=R'(X+\sigma Z)$, where $R'$ is another uniform shift (independent of $X,Y,R$); here we used the orthogonal invariance of $Z\sim\m{N}(0,I)$.} Since $R$ is uniformly distributed, 
	\begin{align*}
		\Expt_{X,Z,R'} \left[ \log \Expt_{R} \exp\left( \frac{1}{\sigma^2} \langle R'(X+\sigma Z),RX\rangle  \right) \right]
		&= \Expt_{X,Z,R'} \left[ \log \Expt_{R} \exp\left( \frac{1}{\sigma^2} \langle (X+\sigma Z),(R')^{-1}RX \rangle \right) \right] \\
		&= \Expt_{X,Z} \left[ \log \Expt_{R} \exp\left( \frac{1}{\sigma^2} \langle (X+\sigma Z),RX \rangle \right) \right],
	\end{align*}
	that is, we can ``drop'' $R'$. The claimed formula now readily follows. 
\end{proof}

The following proposition is the main estimate of this section. The proof uses some properties of the spectrum of $R_\ell$, stated and proved in~Appendix~\ref{sec:spectrum_shift_operator}. 

\begin{proposition}\label{prop:MI-at-low-SNR}
	We have the following upper bound on the single sample MI:
	\[
	I(X;Y) \le \log\left(1+L^{-1}e^{\sigma^{-2}L}\right) + O(\sigma^{-4}L) .
	\]
	In particular, if $\sigma^{-2}L=\alpha\log(L)$ for $0<\alpha<1$, then the MI asymptotically vanishes as $L\to\infty$ with $I(X;Y)\leq L^{-1+\alpha}(1+o(1))$. 
\end{proposition}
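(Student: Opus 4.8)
The plan is to start from the exact identity for $I(X;Y)$ in Lemma~\ref{lem:MI-calculation-step-1} and control its only nontrivial term, $\Expt_{X,Z}\left[\log\Expt_R\exp(\sigma^{-2}\langle X+\sigma Z,RX\rangle)\right]$, by pushing the outer expectation inside the logarithm. Since $\log$ is concave, Jensen's inequality gives $\Expt_{X,Z}[\log(\cdot)]\le\log\Expt_{X,Z}[\cdot]$, and after swapping the order of the (nonnegative) expectations by Fubini the whole problem reduces to evaluating, for each fixed shift $R_\ell$, the Gaussian integral $g(\ell):=\Expt_{X,Z}\exp(\sigma^{-2}\langle X+\sigma Z,R_\ell X\rangle)$. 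Folding the remaining terms into the prefactor, this yields $I(X;Y)\le\log\big((1+\sigma^{-2})^{L/2}e^{-L\sigma^{-2}}\cdot\tfrac1L\sum_\ell g(\ell)\big)$.

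To evaluate $g(\ell)$, first integrate out $Z\sim\m{N}(0,I)$ for fixed $X$, using orthogonal invariance $\|R_\ell X\|=\|X\|$, which gives $g(\ell)=\Expt_X\exp\big(\sigma^{-2}X^\T R_\ell X+\tfrac12\sigma^{-2}\|X\|^2\big)$. This is the moment generating function of a Gaussian quadratic form, so $g(\ell)=\det(I-2A_s)^{-1/2}$ with $A_s=\sigma^{-2}\tfrac{R_\ell+R_\ell^\T}{2}+\tfrac12\sigma^{-2}I$, valid once $I-2A_s\succ0$ (which holds for $\sigma^2>3$, hence for all large $L$ under our scaling). Using the spectrum of the shift operator from Appendix~\ref{sec:spectrum_shift_operator}, the eigenvalues of $(R_\ell+R_\ell^\T)/2$ are $\cos(2\pi k\ell/L)$, $k=0,\dots,L-1$, whence $g(\ell)=\prod_{k=0}^{L-1}\big(1-\sigma^{-2}(1+2\cos(2\pi k\ell/L))\big)^{-1/2}$.

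I would then split $\tfrac1L\sum_\ell g(\ell)$ into $\ell=0$ and $\ell\ne0$. For $\ell=0$ all cosines are $1$, so $g(0)=(1-3\sigma^{-2})^{-L/2}$; since $\log\big((1+\sigma^{-2})/(1-3\sigma^{-2})\big)=4\sigma^{-2}+O(\sigma^{-4})$, this term contributes $L^{-1}e^{\sigma^{-2}L}e^{O(\sigma^{-4}L)}$, the dominant term of the claim. For $\ell\ne0$ write $x_k=\sigma^{-2}(1+2\cos(2\pi k\ell/L))\le3\sigma^{-2}$ and expand $\log g(\ell)=-\tfrac12\sum_k\log(1-x_k)=\tfrac12\sum_k x_k+\tfrac12\sum_k\sum_{m\ge2}x_k^m/m$. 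The only spectral input needed is $\sum_k\cos(2\pi k\ell/L)=0$ for $\ell\not\equiv0$, which makes the first term exactly $\tfrac12\sigma^{-2}L$; the tail is bounded uniformly by $\sum_{m\ge2}(3\sigma^{-2})^m=O(\sigma^{-4})$, so $\log g(\ell)=\tfrac12\sigma^{-2}L+O(\sigma^{-4}L)$ uniformly in $\ell$. Combined with $\tfrac{L}{2}\log(1+\sigma^{-2})-L\sigma^{-2}=-\tfrac12\sigma^{-2}L+O(\sigma^{-4}L)$, each $\ell\ne0$ contribution collapses to $e^{O(\sigma^{-4}L)}$, so the bracket is at most $(1+L^{-1}e^{\sigma^{-2}L})e^{O(\sigma^{-4}L)}$; taking logs gives the stated bound. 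The ``in particular'' claim follows by substituting $\sigma^{-2}L=\alpha\log L$, giving $L^{-1}e^{\sigma^{-2}L}=L^{\alpha-1}\to0$ and $\sigma^{-4}L=\alpha^2(\log L)^2/L=o(L^{\alpha-1})$ for $\alpha>0$, so $I(X;Y)\le L^{-1+\alpha}(1+o(1))$.

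The computation is essentially mechanical once the MGF is written as a determinant; the two places needing a little care are checking positive-definiteness of $I-2A_s$ (which holds for all large $L$) and making the Taylor tail uniform in $\ell$, which the crude bound $|1+2\cos(2\pi k\ell/L)|\le3$ achieves with \emph{no} exceptional cases. The real limitation is conceptual: Jensen's inequality $\Expt\log\le\log\Expt$ is lossy precisely when the inner quantity $\Expt_R\exp(\sigma^{-2}\langle X+\sigma Z,RX\rangle)$ fluctuates substantially, so the bound is only informative (tending to $0$) when $\alpha<1$, i.e.\ when the dominant $\ell=0$ contribution $L^{-1}e^{\sigma^{-2}L}$ is itself small. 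For $1\le\alpha<2$ this estimate is too weak, which is why the second method outlined in the roadmap, lower-bounding the gap $I(X;R_\ell\mid Y)$ via a Fano-type inequality, is needed.
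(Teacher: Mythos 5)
Your proof is correct and follows essentially the same route as the paper's: Jensen's inequality to pull the expectation inside the logarithm, evaluation of the resulting Gaussian quadratic-form MGF via the spectrum of $R_\ell+R_\ell^\T$ from Lemma~\ref{lem:DFT-basis}, a first-order expansion of the log with a uniform $O(\sigma^{-4}L)$ tail, and the split between $\ell=0$ and $\ell\ne 0$. The only differences are cosmetic --- you apply Jensen jointly over $(X,Z)$ rather than sequentially, and you keep the $\tfrac12\sigma^{-2}\|X\|^2$ term inside the determinant (eigenvalues $\sigma^{-2}(1+2\cos)$, requiring $\sigma^2>3$) where the paper factors it out first --- and both lead to the same estimate.
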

\begin{proof}
	By the concavity of the $\log$ function, we always have $\Expt_W \log(W) \le \log(\Expt W)$. Thus,
	\begin{align*}
		\Expt_{X,Z} \left[ \log \Expt_{R} \exp\left( \frac{1}{\sigma^2} \langle X+\sigma Z,RX \rangle \right) \right]
		&\le \Expt_{X} \left[ \log \Expt_{Z, R} \exp\left( \frac{1}{\sigma^2} \langle X+\sigma Z,RX \rangle \right) \right] \\
		&= \Expt_{X} \left[ \log \Expt_{R} \exp\left( \frac{1}{\sigma^2} \langle X,RX \rangle + \frac{1}{2\sigma^2}\|RX\|^2 \right) \right] \\
		&= \Expt_{X} \left[ \log \Expt_{R} \exp\left( \frac{1}{\sigma^2} \langle X,RX \rangle + \frac{1}{2\sigma^2}\|X\|^2 \right) \right] \\
		&= \frac{1}{2}\sigma^{-2}L + \Expt_{X} \left[ \log \Expt_{R} \exp\left( \frac{1}{\sigma^2} \langle X,RX \rangle \right) \right] \\
		&\le \frac{1}{2}\sigma^{-2}L + \log \Expt_{R, X} \exp\left( \frac{1}{\sigma^2} \langle X,RX \rangle \right) .
	\end{align*}
	Plugging into the expression in Lemma~\ref{lem:MI-calculation-step-1}, we get
	\[
	I(X;Y) \le \frac{L}{2}\log(1+\sigma^{-2})-\frac{1}{2}L\sigma^{-2} + \log \Expt_{R, X} \exp\left( \frac{1}{\sigma^2} \langle X,RX \rangle \right) .
	\]
	Note that as $L,\sigma^2\to \infty$, already $\frac{L}{2}\log(1+\sigma^{-2})-\frac{1}{2}L\sigma^{-2}=O(\sigma^{-4}L)$. Observe that $\langle X,RX \rangle = \langle X,R^\T X\rangle = \frac12 \langle X, (R+R^\T)X\rangle$. By Lemma~\ref{lem:DFT-basis}, all the matrices $R_\ell +R_\ell^\T$ are diagonalized
	by some orthonormal basis with eigenvalues $\{2\cos\left(\frac{2\pi}{L} k\ell \right)\}_{k=0}^{L-1}$. By the orthogonal invariance of $X\sim \m{N}(0,I)$, there are i.i.d.\ $W_{k,\ell} \sim \m{N}(0,1)$ such that for all $\ell$,
	\[
	\sigma^{-2} \langle X,R_\ell X\rangle = \sigma^{-2}\sum_{k=0}^{L-1}\cos\left(\frac{2\pi}{L} k\ell \right) W_{k,\ell}^2 .
	\]
	Recall that the moment generating function of a $\chi^2$ random variable is 	\[
	\Expt_{W\sim \m{N}(0,1)}[e^{tW^2}] = (1-2t)^{-1/2} \quad\textrm{ for } t>1/2\,,
	\]
	\revAdd{see, e.g, \cite[page 621]{casella2002statistical}.}
	Therefore, assuming $\sigma^2$ is sufficiently large (e.g., $\sigma^2>2$),
	\begin{align*}
		\log \Expt_{R, X} \exp\left( \frac{1}{\sigma^2} \langle X,RX \rangle \right)
		&= \log \left[ L^{-1}\sum_{\ell=0}^{L-1} \prod_{k=0}^{L-1} \left( 1-2\sigma^{-2}\cos\left(\frac{2\pi}{L} k\ell \right) \right)^{-1/2} \right] \\
		&= \log\sum_{\ell=0}^{L-1} e^{\psi_\ell} - \log(L),
	\end{align*}
	where 
	\[
	\psi_{\ell} = -\frac12 \sum_{k=0}^{L-1} \log\left( 1-2\sigma^{-2}\cos\left(\frac{2\pi}{L} k\ell \right)\right) .
	\]
	Expanding the $\log$ function to first order around $1$ and noting that $\sum_{k=0}^{L-1}\cos\left(\frac{2\pi}{L}k\ell\right)=L\cdot \Ind_{\{\ell=0\}}$ (see Lemma~\ref{lem:DFT-basis}), for large values of $L$ and $\sigma^2$, we get 
	\[
	\psi_\ell = \sum_{k=0}^{L-1} \sigma^{-2}\cos\left(\frac{2\pi}{L} k\ell \right) + O(\sigma^{-4}L) = 
	\begin{cases}
		\sigma^{-2}L + O(\sigma^{-4}L) \quad&\textrm{ if }\ell = 0, \\
		O(\sigma^{-4}L) \quad&\textrm{ otherwise.}
	\end{cases}
	\]
	Thus, we have the estimate
	\begin{align*}
		\log\sum_{\ell=0}^{L-1} e^{\psi_\ell}-\log(L) &= \log \left( \frac{1}{L}e^{\sigma^{-2}L + O(\sigma^{-4}L)} + \frac{L-1}{L}e^{O(\sigma^{-4}L)} \right) \\&= \log\left( 1+L^{-1}e^{\sigma^{-2}L} \right) + O(\sigma^{-4}L) ,
	\end{align*}
	from which the claimed result immediately follows.
\end{proof}





Observe that for $\alpha>1$, Proposition~\ref{prop:MI-at-low-SNR} gives an upper bound of the order $I(X;Y)=O(\log(L))$. It will turn out that when $\alpha>2$, this is indeed the right order of magnitude. However, for $1<\alpha\le 2$ the bound is too loose, and in fact $I(X;Y)=o(\log(L))$. 


\subsubsection{MI bound using template matching}
\label{subsec:MIviaTM}

We start from Lemma~\ref{lem:gaussian-MI} which gives, for $n=1$ and $Y=RX+\sigma Z$, $I(X;Y) = \frac{L}{2}\log(1+\sigma^{-2}) - I(R;X|Y)$. We make the important observation that $R$ and $Y$ are independent; indeed, regardless of $R$, it holds that $Y|R\sim \m{N}(0,(1+\sigma^2)I)$. 
We remark, however, that when $n>1$, $Y^n$ is {not} independent of $R^n$.
We can therefore use Proposition~\ref{prop:entropy} item \ref{entropy:concavity}, and Proposition~\ref{prop:entropy} item \ref{entropy:unif} to write
\[
I(R;X|Y)=H(R|Y)-H(R|X,Y)=H(R)-H(R|X,Y)=\log(L)-H(R|X,Y),
\]
so that 
\begin{equation}\label{eq:MI-entropy-gap}
	I(X;Y)=\frac{L}{2}\log(1+\sigma^{-2})-\log(L)+H(R|X,Y). 
\end{equation}

The following is now an immediate consequence of Fano's inequality (Proposition~\ref{prop:fano}) and Theorem~\ref{thm:template-sharp-thresh}. 
\begin{proposition}\label{prop:fano-immediate}
	Suppose that $\sigma^{-2}L = \alpha\log(L)$ with $\alpha>2$. Then,
	\begin{align*}
		I(X;Y) &= \frac{L}{2}\log(1+\sigma^{-2})-(1{+}o(1))\log(L) \\&= \left(\frac{\alpha}{2}-1+o(1)\right)\log(L) + O(\sigma^{-4}L) .
	\end{align*}
\end{proposition}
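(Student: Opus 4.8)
The plan is to read off the exact decomposition~\eqref{eq:MI-entropy-gap}, namely $I(X;Y)=\frac{L}{2}\log(1+\sigma^{-2})-\log(L)+H(R|X,Y)$, and then argue that the residual term $H(R|X,Y)$ is negligible on the scale of $\log(L)$ whenever $\alpha>2$. The crucial observation is that $H(R|X,Y)$ is precisely the conditional entropy of the shift given the template $X$ and the measurement $Y$, i.e.\ exactly the quantity governed by the template matching problem of Theorem~\ref{thm:template-sharp-thresh}. Since the AWGN term $\frac{L}{2}\log(1+\sigma^{-2})$ is explicit, once $H(R|X,Y)=o(\log L)$ is established both displayed equalities follow by bookkeeping.

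First I would invoke Fano's inequality (Proposition~\ref{prop:fano}) with the MAP estimator $\widehat{R}_{\mathrm{MAP}}(X,Y)$ from the template matching section. Because $R$ takes at most $L$ values and $\widehat{R}_{\mathrm{MAP}}$ is a deterministic function of $(X,Y)$, conditioning reduces entropy and Fano yields
\[
H(R|X,Y)\le H(R|\widehat{R}_{\mathrm{MAP}})\le H_b(p_{e})+p_{e}\log(L),
\]
where $H_b$ denotes the binary entropy function and $p_{e}=\Pr(R\ne\widehat{R}_{\mathrm{MAP}})$ is the template matching error probability. Applying Theorem~\ref{thm:template-sharp-thresh}, for $\alpha>2$ we have $p_{e}\to 0$ as $L\to\infty$; hence $H_b(p_{e})=O(1)$ (indeed $\to 0$) and $p_{e}\log(L)=o(\log L)$, so $H(R|X,Y)=o(\log L)$. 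Substituting into~\eqref{eq:MI-entropy-gap} gives the first claimed equality, $I(X;Y)=\frac{L}{2}\log(1+\sigma^{-2})-(1+o(1))\log(L)$.

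To obtain the second equality I would Taylor-expand the AWGN term: $\log(1+\sigma^{-2})=\sigma^{-2}-\tfrac12\sigma^{-4}+O(\sigma^{-6})$, so that $\frac{L}{2}\log(1+\sigma^{-2})=\tfrac12\sigma^{-2}L+O(\sigma^{-4}L)$. Plugging in the parametrization $\sigma^{-2}L=\alpha\log(L)$ turns the leading term into $\frac{\alpha}{2}\log(L)$, and collecting terms yields
\[
I(X;Y)=\left(\frac{\alpha}{2}-1+o(1)\right)\log(L)+O(\sigma^{-4}L),
\]
as claimed. (Note $O(\sigma^{-4}L)=\alpha^2(\log L)^2/L=o(1)$ here, so it is in fact subsumed by the $o(1)\log(L)$ term, but I would keep it explicit to match the form of Proposition~\ref{prop:MI-at-low-SNR}.)

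I do not anticipate a genuine obstacle: the substantive content is already packaged in Theorem~\ref{thm:template-sharp-thresh}, which supplies $p_{e}\to 0$, and the rest is Fano plus a first-order expansion. The only point demanding care is verifying that both Fano contributions vanish relative to $\log(L)$: the term $H_b(p_{e})$ is bounded by a constant and is therefore automatically $o(\log L)$, whereas $p_{e}\log(L)=o(\log L)$ hinges specifically on $p_{e}=o(1)$. This is exactly where the threshold $\alpha>2$ enters, and the same argument breaks down for $\alpha<2$, where $p_{e}\to 1$ and $H(R|X,Y)$ is comparable to $\log(L)$.
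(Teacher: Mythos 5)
Your proposal is correct and follows essentially the same route as the paper: it starts from the decomposition~\eqref{eq:MI-entropy-gap}, bounds $H(R|X,Y)$ via Fano's inequality applied to $\widehat{R}_{\mathrm{MAP}}$ together with the $p_e\to 0$ conclusion of Theorem~\ref{thm:template-sharp-thresh}, and finishes with the same first-order expansion of $\frac{L}{2}\log(1+\sigma^{-2})$. The only cosmetic difference is that you use the binary-entropy form of Fano where the paper uses the $\log 2$ form; both yield $H(R|X,Y)=o(\log L)$.
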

\begin{proof}
	We estimate $H(R|X,Y)$. Clearly, $H(R|X,Y)\ge 0$ by non-negativity of entropy (Proposition~\ref{prop:entropy} item \ref{entropy:noneg}). As for an upper bound,
	by Fano's inquality (Proposition~\ref{prop:fano}), for any estimator $\widehat{R}$ of $R$ from $X,Y$, the error probability $p_e=\Pr(R\ne \widehat{R})$ satisfies
	\[
	H(R|X,Y) \le \log{2}+ p_e \log(L).
	\]
	By Theorem~\ref{thm:template-sharp-thresh}, $\widehat{R}_{\mathrm{MAP}}$ has error $p_e\to 0$, which means that $H(R|X,Y)=o(1)\cdot \log(L)=o(\log(L))$. 
	Plugging this into equation~\eqref{eq:MI-entropy-gap} and expanding $\frac{L}{2}\log(1+\sigma^{-2})=\frac{\alpha}{2}\log(L) + O(\sigma^{-4}L)$, we obtain the desired estimate for $I(X;Y)$.   
\end{proof}

\revAdd{
	Proposition~\ref{prop:fano-immediate} above will not be needed for our main results, but its proof serves as good exposition towards bounding the conditional entropy $H(R|X,Y)$ in the harder case $\alpha\le 2$.
}
When $\alpha<2$ we have $p_e\to 1$, so that it is no longer true that $H(R|X,Y)=o(\log(L))$. Indeed, since $I(X;Y)=(\alpha/2-1)\log(L) + O(\sigma^{-4}L) + H(R|X,Y)$, we must have that $H(R|X,Y) \ge (1-\alpha/2-o(1))\log(L)$, since the MI is non-negative. While, indeed, in this regime $R$ cannot be recovered from $X,Y$, we can still obtain a non-trivial upper bound (of the form $c(\alpha)\log(L)$ for some $c(\alpha)<1$) on the conditional entropy $H(R|X,Y)$; the idea is that given $X,Y$, we can form a relatively small list that contains $R$ with high probability.

Our goal, then, is to non-trivially upper bound $H(R|X,Y)$ in the regime $\alpha\leq 2$ where $p_e \not\to 0$. Let $\tau>0$, and denote by $S_{\tau}$ the set of $\tau$-likely shifts:
\begin{equation}\label{eq:S-tau}
	\m{S}_{\tau} = \left\{ R'\,:\, \frac{\langle X,(R')^{-1}Y\rangle }{\|X\|^2} \ge 1-\tau \right\} .
\end{equation}
The analysis of Section~\ref{sect:template-matching} tells us that for any $\tau>0$, the true shift $R$ belongs with high probability to the set $\m{S}_{\tau}$. Moreover, when $\alpha>2$ (and $\tau>0$ is a sufficiently small constant), in fact with high probability $S_{\tau}=\{R\}$. When $\alpha\le 2$ this will no longer be the case; nonetheless, we show that $|\m{S}_\tau|$ is with high probability {significantly} smaller than $L$. This means that given~$X$ and~$Y$, we can produce a list of likely candidates for $R$ which is much smaller than the entire group of shifts. The following lemma is proved in the SI Appendix, Section~\ref{sec:proof_lem_not_many_candidates}.

\begin{lemma}\label{lem:not-many-candidiates}
	Let $\kappa,\tau,\zeta>0$. Set $M=L^{1-\frac12 \alpha(1-\kappa)\left(1-\tau-\frac{\kappa}{1-\kappa}\right)^2 + \zeta}$, and assume that $\alpha\le 2$. Then 
	\begin{align}
		\Pr\left( R\notin \m{S}_\tau \textrm{ or } |\m{S}_\tau|> M \right) \le 2Le^{-c L\min(\kappa,\kappa^2) }  + L^{-\frac12 \alpha(1-\kappa)\left(1-\tau-\frac{\kappa}{1-\kappa}\right)^2} + 2L^{-\zeta},\label{eq:listsizeUB}
	\end{align}
	where $c>0$ is the universal constant of Lemma~\ref{lem:signal-is-nice}. 
\end{lemma}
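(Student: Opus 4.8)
The plan is to condition on the incoherence event $\m{A}=\m{A}(\kappa)$ of Lemma~\ref{lem:signal-is-nice}: outside $\m{A}$ I simply bound the probability by $\Pr(\overline{\m{A}})\le 2Le^{-cL\min(\kappa,\kappa^2)}$, which is exactly the first term, so it remains to work conditionally on $X\in\m{A}$. There I reuse the decomposition from Section~\ref{sect:template-matching}, $\Theta_{\ell'}=\frac{\langle X,R_{\ell-\ell'}X\rangle}{\|X\|^2}+\frac{\sigma}{\|X\|}W_{\ell'}$, recalling that $R'=R_{\ell'}\in\m{S}_\tau$ is precisely the event $\Theta_{\ell'}\ge 1-\tau$. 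On $\m{A}$ the deterministic overlap equals $1$ when $\ell'=\ell$ and has absolute value at most $\frac{\kappa}{1-\kappa}$ when $\ell'\ne\ell$ (using $\|X\|^2\ge(1-\kappa)L$ together with $\max_{\ell'\ne 0}L^{-1}|\langle X,R_{\ell'}X\rangle|\le\kappa$), while substituting $\sigma^2=L/(\alpha\log L)$ gives $\|X\|/\sigma\ge\sqrt{\alpha(1-\kappa)\log L}$. The key structural point is that, conditionally on $X$, each $W_{\ell'}$ is \emph{exactly} $\m{N}(0,1)$ (the shifts are orthogonal, so the conditional variance is $1$), even though the family $\{W_{\ell'}\}$ is strongly correlated. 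Everything therefore reduces to Gaussian tail estimates for individual $W_{\ell'}$.

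I split the bad event into a ``miss'' of the true shift and an ``oversized list''. For the miss, $R\notin\m{S}_\tau$ means $\Theta_\ell=1+\frac{\sigma}{\|X\|}W_\ell<1-\tau$, i.e. $W_\ell<-\tau\,\|X\|/\sigma$; the Gaussian lower tail gives $\Pr(R\notin\m{S}_\tau\mid\m{A})\le\exp\!\big(-\tfrac12\tau^2\|X\|^2/\sigma^2\big)\le L^{-\frac12\alpha(1-\kappa)\tau^2}$. For the list size, I use a first-moment argument on $N=\sum_{\ell'\ne\ell}\Ind[\Theta_{\ell'}\ge1-\tau]$. On $\m{A}$, for each $\ell'\ne\ell$ inclusion forces $\frac{\sigma}{\|X\|}W_{\ell'}\ge 1-\tau-\frac{\kappa}{1-\kappa}$ (the worst case being the most favorable correlation $+\frac{\kappa}{1-\kappa}$), so each summand has conditional probability at most $q:=L^{-\frac12\alpha(1-\kappa)\left(1-\tau-\frac{\kappa}{1-\kappa}\right)^2}$. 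Since expectation is linear, the correlations are irrelevant here and $\Expt[N\mid\m{A}]\le(L-1)q\le Lq$. As $|\m{S}_\tau|\le 1+N$ and $M=L^{1+\delta}q$, Markov's inequality yields $\Pr(|\m{S}_\tau|>M\mid\m{A})\le\Pr(N>M-1\mid\m{A})\le\frac{Lq}{M-1}\le 2L^{-\delta}$ for $L$ large (note $M\to\infty$, since $\alpha\le2$ forces the exponent of $M$ to be at least $\delta>0$). Adding the three contributions gives a bound of the stated three-term shape.

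The step I expect to be the most delicate is the exponent bookkeeping. On the one hand, the per-shift false-alarm exponent must come out as exactly $\tfrac12\alpha(1-\kappa)\left(1-\tau-\frac{\kappa}{1-\kappa}\right)^2$, so that with $M$ chosen as the $L^\delta$-inflated first moment $Lq$ the Markov step produces precisely $L^{-\delta}$; this requires pushing the incoherence slack $\frac{\kappa}{1-\kappa}$ through the threshold with the right sign. On the other hand, the true-shift miss exponent $\tfrac12\alpha(1-\kappa)\tau^2$ must be folded into the middle term, and here one uses that both crossing probabilities share the common scale $\|X\|/\sigma\ge\sqrt{\alpha(1-\kappa)\log L}$ and records whichever of the two thresholds is binding. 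Conceptually, however, the crucial observation — and what makes this argument far simpler than the sharp-threshold proof of Theorem~\ref{thm:template-sharp-thresh} — is that bounding $|\m{S}_\tau|$ from above needs only the first moment, so I never have to control the joint law of the dependent overlaps $\{\Theta_{\ell'}\}$; the Sudakov--Fernique and Borell--TIS machinery would be needed only for a matching lower bound on the list size, which is not required here.
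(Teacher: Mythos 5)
Your proposal is correct and follows essentially the same route as the paper's proof in Appendix~D: condition on the incoherence event $\m{A}(\kappa)$ of Lemma~\ref{lem:signal-is-nice}, bound the miss probability $\Pr(R\notin\m{S}_\tau\mid\m{A})$ by a single Gaussian tail with variance $\sigma^2/\|X\|^2\le\bigl(\alpha(1-\kappa)\log L\bigr)^{-1}$, and control $|\m{S}_\tau|$ by the first moment plus Markov's inequality, using $\alpha\le 2$ exactly as you do to ensure $Lq\ge 1$ so the Markov step yields $2L^{-\delta}$. The one wrinkle you correctly flag — that the miss probability comes out as $L^{-\frac12\alpha(1-\kappa)\tau^2}$ rather than the $L^{-\frac12\alpha(1-\kappa)\left(1-\tau-\frac{\kappa}{1-\kappa}\right)^2}$ appearing in the stated bound — is present in the paper's own derivation as well, so it is not a gap relative to the paper.
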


Lemma~\ref{lem:not-many-candidiates} implies that there are slowly decaying sequences $\tau=\tau_L=o(1),\delta=\delta_L=o(1)$ such that the event 
\[
\m{B} = \left\{ R\in \m{S}_{\tau_L}\textrm{ and } |\m{S}_{\tau_L}|\le L^{1-\frac12 \alpha + \delta_L} \right\}
\]
holds with high probability {of} $\Pr(\m{B})=1-o(1)$. We use this to bound the conditional entropy $H(R|X,Y)$, and obtain a bound on the MI:

\begin{proposition}\label{prop:MI-alpha-le-2}
	Suppose that $\alpha\le 2$. Then, 
	\[
	I(X;Y) = o(\log(L)) .
	\]
\end{proposition}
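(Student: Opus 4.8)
The plan is to read off the result directly from the entropy-gap identity~\eqref{eq:MI-entropy-gap} and the candidate-list bound of Lemma~\ref{lem:not-many-candidiates}. Expanding $\frac{L}{2}\log(1+\sigma^{-2}) = \frac{\alpha}{2}\log(L) + O(\sigma^{-4}L)$ and noting that $\sigma^{-4}L = \alpha^2(\log L)^2/L = o(\log(L))$ since $\sigma^2\to\infty$, equation~\eqref{eq:MI-entropy-gap} becomes
\begin{equation*}
I(X;Y) = \left(\tfrac{\alpha}{2}-1\right)\log(L) + H(R|X,Y) + o(\log(L)).
\end{equation*}
Because $\alpha\le 2$, the leading term is non-positive, so the entire problem reduces to establishing the upper bound $H(R|X,Y) \le \left(1-\tfrac{\alpha}{2}\right)\log(L) + o(\log(L))$. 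I note that the matching lower bound is free: non-negativity of the MI forces $H(R|X,Y)\ge (1-\tfrac{\alpha}{2})\log(L) - o(\log(L))$, as already observed in the text preceding the statement. Once the upper bound is in hand, substituting back gives $I(X;Y)\le o(\log(L))$, which together with $I(X;Y)\ge 0$ yields the claim.

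To bound $H(R|X,Y)$ I would run a standard ``list-decoding'' argument anchored to the good event $\m{B}$. Let $U = \I_{\m{B}}$ be its indicator. Using the chain rule and $H(U|X,Y)\le H(U)\le\log 2$,
\begin{equation*}
H(R|X,Y) \le H(U|X,Y) + H(R|X,Y,U) \le \log 2 + H(R|X,Y,U),
\end{equation*}
and I then split the last term according to the value of $U$. On $\m{B}$ (i.e.\ $U=1$) the set $\m{S}_{\tau_L}$ is a deterministic function of $(X,Y)$, contains $R$, and has cardinality at most $L^{1-\frac12\alpha+\delta_L}$, so $H(R|X,Y,U=1)\le\log|\m{S}_{\tau_L}|\le (1-\tfrac{\alpha}{2}+\delta_L)\log(L)$; on $\overline{\m{B}}$ (i.e.\ $U=0$) the shift $R$ ranges over at most $L$ values, giving $H(R|X,Y,U=0)\le\log(L)$. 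Weighting by the two probabilities and using $\Pr(\overline{\m{B}})=o(1)$ and $\delta_L=o(1)$ yields
\begin{equation*}
H(R|X,Y,U) \le \Pr(\m{B})\left(1-\tfrac{\alpha}{2}+\delta_L\right)\log(L) + \Pr(\overline{\m{B}})\log(L) = \left(1-\tfrac{\alpha}{2}\right)\log(L) + o(\log(L)).
\end{equation*}

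Combining the last three displays gives $H(R|X,Y)\le(1-\tfrac{\alpha}{2})\log(L)+o(\log(L))$, which closes the argument. The conceptual heavy lifting has already been done in Lemma~\ref{lem:not-many-candidiates}, so the only delicate point here is the bookkeeping of the bad event $\overline{\m{B}}$: on it the conditional entropy of $R$ can be as large as the full $\log(L)$, but it is multiplied by $\Pr(\overline{\m{B}})=o(1)$ and hence contributes only $o(\log(L))$. The hard part, therefore, is not any single estimate but ensuring that the two error budgets collapse together — namely that $\tau_L,\delta_L$ can be chosen to vanish slowly enough that $\Pr(\overline{\m{B}})=o(1)$ while still $\delta_L\log(L)=o(\log(L))$ — which is precisely what the slowly-decaying sequences guaranteed by Lemma~\ref{lem:not-many-candidiates} provide.
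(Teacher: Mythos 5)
Your argument is correct and follows essentially the same route as the paper: both start from the entropy-gap identity~\eqref{eq:MI-entropy-gap}, introduce the indicator of the good event $\m{B}$ from Lemma~\ref{lem:not-many-candidiates}, and use the chain rule together with the list-size bound $|\m{S}_{\tau_L}|\le L^{1-\alpha/2+\delta_L}$ to show $H(R|X,Y)\le(1-\tfrac{\alpha}{2}+o(1))\log(L)$, with the bad event contributing only $\Pr(\overline{\m{B}})\log(L)=o(\log(L))$. The only cosmetic difference is that you phrase the reduction as a two-sided pinning of $H(R|X,Y)$ before substituting back, whereas the paper substitutes the upper bound directly; the content is identical.
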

\begin{proof}
	We upper bound the conditional entropy $H(R|X,Y)$ using a ``Fano-like'' argument. Let $E$ be the indicator for the event $\m{B}$ above. Since $E$ is completely deterministic given $(R,X,Y)$, we have that $H(E|R,X,Y)=0$ by Proposition~\ref{prop:entropy} item \ref{entropy:noneg} and by the chain rule of entropy (Proposition~\ref{prop:entropy} item \ref{entropy:chainrule}) we have 
	\begin{align*}
		H(R|X,Y) &= H(R|X,Y)+H(E|R,X,Y)\\
		&= H(R,E|X,Y) \\
		&= H(E|X,Y) + H(R|X,Y,E) \\
		&\le H(E) + H(R|X,Y,E=1)\Pr(E=1) + H(R|X,Y,E=0)\Pr(E=0),
	\end{align*}
	where we have bounded $H(E|X,Y)\leq H(E)$ using Proposition~\ref{prop:entropy} item \ref{entropy:concavity}, and expanded $H(R|X,Y,E)$ according to the definition of conditional entropy, averaging only with respect to $E$.
	
	Now, given that $E=1$, we know that $R$ belongs to $\m{S}_{\tau_L}$, which has size $|\m{S}_{\tau_L}|\le M=L^{1-\frac12 \alpha + \delta_L}$. Hence, $ H(R|X,Y,E=1) \le \log(M)=\left(1-\frac12 \alpha + \delta_L \right)\log(L)$ by Proposition~\ref{prop:entropy} item \ref{entropy:unif}, and by the same reason $H(R|X,Y,E=0) \le \log(L)$. By definition, $\Pr(E=1)=\Pr(\m{B})=1-o(1)$, and $H(E)\le \log(2)$ by Proposition~\ref{prop:entropy} item \ref{entropy:unif}. Thus, $H(R|X,Y) \le \left( 1-\frac12 \alpha + o(1) \right)\log(L)$. Plugging this into Eq. \eqref{eq:MI-entropy-gap}, 
	\begin{align*}
		I(X;Y) 
		&= \frac{L}{2}\log(1+\sigma^{-2}) - \log(L) + H(R|X,Y) \\
		&= \left( \frac{\alpha}{2}-1+o(1) \right)\log(L) + O(\sigma^{-4}L) + \left(1-\frac{\alpha}{2} + o(1) \right)\log(L) \\
		&= o(\log(L)) + O(\sigma^{-4}L), 
	\end{align*} 
	as claimed.
\end{proof}

\begin{remark}
	One might wonder if the argument above (if carried out delicately enough) can match the estimate $I(X;Y)=O(L^{-1+\alpha})$ we have already seen for $\alpha<1$. Unfortunately, the bound $\Pr(|S_{\tau}|\ge M)\le 2L^{-\delta}$ (using Markov's inequality; see the proof of Lemma~\ref{lem:not-many-candidiates} in SI Appendix, Section~\ref{sec:proof_lem_not_many_candidates}) is already too crude for that purpose: since we need to choose $\delta=o(1)$, the $o(1)$ correction above must decay slower than $L^{-c}$ (for any $c>0$). 
\end{remark}

\subsubsection{Proof of main results}\label{sect:main-proofs}

We are ready to prove Theorem~\ref{thm:alphalessthan1} and the sample complexity lower bounds of Theorem~\ref{thm:highsnrregime}. 

\paragraph{Proof of Theorems~\ref{thm:highsnrregime} (lower bounds) and \ref{thm:alphalessthan1}.} 
\begin{itemize}
	\item Theorem~\ref{thm:highsnrregime}, $\alpha>2$ (lower bound): Corollary~\ref{cor:GaussianMSE_LB} immediately implies that 
	\[
	\lim_{\varepsilon\to 0}\lim_{L\to\infty}\frac{\SC(L,\alpha,\varepsilon)}{\sigma^2/\varepsilon}\ge 1\,.
	\]
	\item Theorem~\ref{thm:highsnrregime}, $\alpha\le 2$: Combining Proposition~\ref{prop:singlesampleMIbasedBounds} and Proposition~\ref{prop:MI-alpha-le-2},  give  
	\[
	\SC(L,\alpha,\varepsilon)=\omega\left(\frac{L}{\log(L)}\log(1/\varepsilon)\right) \revAdd{ = \omega\left(\sigma^2 \log(1/\varepsilon)\right)}\,.
	\]
	\item Theorem~\ref{thm:alphalessthan1}, $\alpha<1$: Combining Proposition~\ref{prop:singlesampleMIbasedBounds} and Proposition~\ref{prop:MI-at-low-SNR} yield 
	\[
	\SC(L,\alpha,\varepsilon)=\Omega(L^{2-\alpha}\log(1/\varepsilon))\,.
	\]
\end{itemize}
The proof of the upper bound $\lim_{\varepsilon\to 0}\lim_{L\to\infty}\frac{\SC(L,\alpha,\varepsilon)}{\sigma^2/\varepsilon}\le1$ for $\alpha>2$ (item (1) of Theorem~\ref{thm:highsnrregime}) appears in Section~\ref{sec:upper-bound}.

\section{Sample complexity upper bound for $\alpha>2$ via brute-force template matching}
\label{sec:upper-bound}

In this section we propose a recovery algorithm for the high SNR regime $\alpha>2$, which essentially matches our $\Omega(L/\log L)$ lower bound on the sample complexity. 
Our goal here is not to propose a new MRA algorithm, but rather to establish a matching upper bound on the \emph{statistical difficulty} of the problem; that is, we are studying the fundamental information-theoretic (rather than computational) limits of MRA.
\footnote{\revAdd{
		This distinction is not trivial in general. In the context of MRA, for instance, previous papers conjectured that a natural extension of the MRA model, called heterogeneous MRA, suffers from a fundamental computational-statistical gap~\cite{boumal2018heterogeneous,bandeira2017estimation}. We \emph{do not} claim, however, that such a computational-statistical gap holds for the MRA model considered in this paper, with $\alpha$ close to $2$.} 
}
\revDel{This is an important distinction because previous papers conjectured that a natural extension of the MRA model, called heterogeneous MRA, suffers from a fundamental computational-statistical gap~\cite{boumal2018heterogeneous,bandeira2017estimation}.}
In particular, the proposed algorithm is computationally intractable, and involves a brute-force search on an exponentially sized set of candidates. Moreover,  our approach is  tailored to the case $\alpha>2$, which is exactly the SNR regime where template matching is statistically possible. 

\paragraph{Outline of our algorithm}
Before diving into the technical details of our proposed scheme, we give a brief outline of the approach. 
The estimation algorithm works in two stages. Suppose we are given $n$ independent samples. We divide them into two subsamples of sizes $n_1$ and $n_2$, $n_1+n_2=n$. We do this so to ensure that the estimator $\widehat{Q}$ produced in step 1 is statistically independent of the additive noise in the  samples used for step 2. This simplifies our analysis considerably. The two stages performed by the algorithm are the following.
\begin{enumerate}
	\item \emph{Brute-force search for a template:} In the first stage, we use the first $n_1$ samples to find some direction $\widehat{Q}\in\Sphere^{L-1}$ (here $\Sphere^{L-1}$ is the unit sphere in $\RR^L$) such that $\widehat{Q}$ is sufficiently well-aligned with some shift of the true signal, that is, $\max_{\ell} L^{-1/2}\langle X,R_{\ell}^{-1}\widehat{Q}\rangle \ge 1-\eta$, where $\eta=\eta(\alpha)$ is small. To do this, we consider a fine-enough cover of the sphere, $\m{N}\subset \Sphere^{L-1}$, and take $\widehat{Q}\in \m{N}$ as the minimizer of a certain score:
	$\widehat{Q} = \argmin_{Q\in\m{N}} \sum_{i=1}^{n_1} s_i(Q)$,
	where $s_i(Q)$ is computed from the $i$-th sample $Y_i$. 
	Minimizing $\sum_{i=1}^{n_1} s_i(Q)$ over $\Sphere^{L-1}$ boils down to a brute-force search over the cover, whose size is exponential in $L$. Hence, this algorithm is not efficient.
	In principle, one could take at this point $\sqrt{L}\widehat{Q}\approx \|X\|\widehat{Q}$ as an estimator for~$X$. Unfortunately, the MSE of this estimator decays at a suboptimal rate with respect to the number of samples $n$; this is remedied by the second step.
	
	\item \emph{Alignment and averaging:} Using $\widehat{Q}$ from the previous step, we perform template matching on the remaining $n_2$ samples $Y_{1},\ldots,Y_{n_2}$ in order to estimate their shifts relative to $\widehat{Q}$:
	\[
	\widehat{R}_{\ell_i} = \argmax_{\ell} \langle Y_i,R_{\ell}\widehat{Q}\rangle . 
	\]
	The final estimator for $X$ is then the average of the aligned measurements:
	\[
	\widehat{X}=\frac{1}{n_2}\sum_{i=1}^{n_2} \widehat{R}_{\ell_{i}}^{-1}Y_{i}.
	\] 	 
\end{enumerate}
All the missing technical details are provided in the next two sections.
Due to space constraints, the proofs of all lemmas are given in the SI Appendix, Section~\ref{sec:proofs_sec_algorithm}.

\paragraph{Main result of this section.}
The main result of this section is the following:
\begin{proposition}\label{prop:upper-bound-informal}
	Suppose that $\alpha>2$, fix $\varepsilon>0$, and let $n,L\to \infty$. Then, there exists some $c(\alpha)>0$ depending on $\alpha$ such that if
	\[
	n_1 = c(\alpha)\sigma^2,\quad n_2 =  (1+o(1))\frac{\sigma^2}{\varepsilon}, 
	\]
	then the estimator $\widehat{X}$ returned by our algorithm satisfies $\rho(X,\widehat{X})\le \varepsilon$ with probability $1-o(1)$. 
\end{proposition}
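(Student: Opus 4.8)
The plan is to treat the two stages separately, using the sample split so that, throughout, $\widehat{Q}$ is independent of the noise $Z_i$ in the second batch. For the brute-force stage I would take the score to be the best-aligned matched filter, $s_i(Q)=-\max_{\ell}\langle Y_i,R_\ell Q\rangle$, so that $\widehat{Q}=\argmin_{Q\in\m{N}}\sum_{i=1}^{n_1}s_i(Q)$ is the net point maximizing the average correlation. Introduce the \emph{alignment} $a(Q)=\max_{\ell}L^{-1/2}\langle X,R_\ell^{-1}Q\rangle$ of a unit vector $Q$ with the orbit of $X$ (which is at most $1+o(1)$ by Cauchy--Schwarz). The entire content of stage~1 is the claim that, with probability $1-o(1)$, the minimizer obeys $a(\widehat{Q})\ge 1-\eta$ for a constant $\eta=\eta(\alpha)$ fixed so that $1-\eta>\sqrt{2/\alpha}$ (possible exactly because $\alpha>2$). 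This reduces to two facts: that $\m{N}$ contains a point whose score is small, and that every poorly-aligned net point has a strictly larger score.

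For the first fact, since $\m{N}$ is a $\delta$-net of $\Sphere^{L-1}$ with a small \emph{constant} $\delta\le\sqrt{2\eta}$, for every realization of $X$ there is $Q^\star\in\m{N}$ within distance $\delta$ of $X/\|X\|$, hence with $a(Q^\star)\ge 1-\eta$. For the second fact I would analyze $T_i(Q)=L^{-1/2}\max_\ell\langle Y_i,R_\ell Q\rangle$. Conditioning on $X$ and invoking the incoherence of Lemma~\ref{lem:signal-is-nice}, $\max_\ell\langle Y_i,R_\ell Q\rangle$ is a maximum of $L$ jointly Gaussian scores: at the aligning shift its mean is $\approx a(Q)\sqrt{L}$, while over the remaining shifts it behaves like the maximum of (essentially independent) $\m{N}(0,\sigma^2)$ variables. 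Combining Lemmas~\ref{lem:gaussain-max-exp} and~\ref{lem:sudakov-fernique} with the Borell--TIS concentration of Lemma~\ref{lem:borell-tis} gives, uniformly over the net, $\Expt[T_i(Q)]\to\max\bigl(a(Q),\sqrt{2/\alpha}\bigr)$ with sub-Gaussian fluctuations of scale $\sigma/\sqrt{L}=1/\sqrt{\alpha\log L}$. Thus $\tfrac{1}{n_1}\sum_i T_i(Q^\star)\to 1$, whereas any $Q$ with $a(Q)\le 1-\eta$ has mean at most $1-\eta$ (using $1-\eta>\sqrt{2/\alpha}$), a constant gap $\eta$ below $Q^\star$.

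\textbf{The main obstacle} is closing a union bound over the net, whose cardinality is exponential, $\log|\m{N}|=O(L)$, using only $n_1=O(\sigma^2)=O(L/\log L)$ samples. A single candidate's empirical average deviates from its mean by $\eta/2$ with probability at most $\exp(-c\,\eta^2\,n_1\,\alpha\log L)$, by the per-sample sub-Gaussian scale above. What saves the argument is the $\log L$ gain intrinsic to the regime: because each sample carries effective signal-to-noise of order $\log L$ for this detection problem, choosing $n_1=c(\alpha)\sigma^2$ with $c(\alpha)$ large enough makes $\eta^2 n_1\alpha\log L\gg L$, so the per-candidate bound beats $|\m{N}|$ and the union bound yields $a(\widehat{Q})\ge 1-\eta$ with probability $1-o(1)$. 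Making the expectation and fluctuation estimates uniform over $\m{N}$, and handling candidates of intermediate alignment rather than only those with $a\le 1-\eta$, is the delicate part of this step.

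For stage~2, $\widehat{Q}$ is fixed and independent of $Y_1,\dots,Y_{n_2}$, and satisfies $a(\widehat{Q})\ge 1-\eta$; say $\widehat{Q}$ aligns with the shift $R_{\ell^\star}$ of $X/\|X\|$. Repeating the template-matching computation of Section~\ref{sect:template-matching} with $\widehat{Q}$ in place of $X$, the correct shift scores $\approx(1-\eta)\sqrt{L}$ while the competing shifts score $\approx\sqrt{2/\alpha}\,\sqrt{L}$; since $1-\eta>\sqrt{2/\alpha}$, each shift is identified correctly with probability $1-p_e$, where $p_e=o(1)$ (a robust version of Theorem~\ref{thm:template-sharp-thresh}). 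Whenever the shift of sample $i$ is recovered, $\widehat{R}_{\ell_i}^{-1}Y_i=R_{\ell^\star}X+\sigma\tilde{Z}_i$ \emph{exactly}, with $\tilde{Z}_i\sim\m{N}(0,I)$, because the shift is an integer and is pinned down exactly once identified. Averaging the correctly aligned terms therefore yields $R_{\ell^\star}X$ plus $\tfrac{\sigma}{n_2}\sum_i\tilde{Z}_i$, contributing $\rho\approx\sigma^2/n_2$, which equals $\varepsilon(1-o(1))$ once $n_2=(1+o(1))\sigma^2/\varepsilon$. The contribution of the $\approx p_e n_2=o(n_2)$ misaligned samples is $o(1)$ --- it is an average of roughly-orthogonal shifted copies of $X$ --- and is absorbed into the $(1+o(1))$ slack in $n_2$; combined with the concentration of the noise term around its mean, this gives $\rho(X,\widehat{X})\le\varepsilon$ with probability $1-o(1)$, since $\rho$ quotients out the global shift $R_{\ell^\star}$.
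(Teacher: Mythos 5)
Your two-stage architecture (net search for a coarse template, then align-and-average on a fresh batch) is exactly the paper's, and the sample-budget bookkeeping is right. But there are two genuine gaps. The first is in Stage 1: you assert that for an arbitrary net point $Q$, the matched filter $\max_\ell\langle Y_i,R_\ell Q\rangle$ splits into ``signal at the aligning shift'' plus ``a max of $L$ essentially independent $\m{N}(0,\sigma^2)$'s over the rest,'' citing the incoherence of Lemma~\ref{lem:signal-is-nice}. That lemma controls $\langle X,R_\ell X\rangle$, i.e.\ $X$ against shifts of \emph{itself}; it says nothing about $\langle X,R_\ell Q\rangle$ for a worst-case $Q$ in an exponentially large net. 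A bad $Q$ could a priori have many shifts with correlation just below $1-\eta$, and the maximum of signal-plus-noise over $L^{\beta}$ such shifts would pick up an extra $\sqrt{2\beta/\alpha}$, which can push the score of a bad candidate above that of the good one and destroy the separation your union bound needs. The missing ingredient is a bound, uniform over $Q\in\Sphere^{L-1}$, on the number of shifts with $L^{-1/2}|\langle X,R_\ell^{-1}Q\rangle|\ge h$; the paper gets $O(h^{-2}\log L)$ (Lemma~\ref{lem:not-many-large-shifts}) from the identity $\sum_\ell\langle X,R_\ell^{-1}Q\rangle^2\le L\,\|\m{F}^*X\|_\infty^2$, which is why the class $\signalClass$ carries the extra condition $\|\m{F}^*X\|_\infty\le\sqrt{10\log L}$ on the DFT coefficients. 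With only $O(\log L)$ ``dangerous'' shifts, their noise maximum is $o(1)$ rather than $\sqrt{2/\alpha}$, and your claimed $\Expt[T_i(Q)]\to\max\bigl(a(Q),\sqrt{2/\alpha}\bigr)$ becomes provable. (Whether you then threshold per sample, as the paper does via the indicator score in Lemma~\ref{lem:bad-has-small-score}, or average the continuous statistic as you propose, is a cosmetic difference; both give $n_1=O_\eta(\sigma^2)$.)

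The second gap is in Stage 2: you write that whenever sample $i$ is correctly aligned, $\widehat{R}_{\ell_i}^{-1}Y_i=R_{\ell^\star}X+\sigma\tilde Z_i$ with $\tilde Z_i\sim\m{N}(0,I)$. Unconditionally $R_{\ell_i-\ell^\star}^{-1}Z_i$ is indeed standard Gaussian, but the event ``sample $i$ is correctly aligned'' is itself a function of $Z_i$, so the \emph{conditional} law of the aligned noise given that event is not $\m{N}(0,I)$, and you cannot simply average and invoke a $\chi^2$ concentration for $\|\widehat W\|$. The paper's proof of Proposition~\ref{prop:averaging-guarantee} flags exactly this and resolves it by bounding $\|\widehat W\|$ by a maximum over all $\binom{n_2}{K}L^K$ possible misalignment patterns $(S,\mathbf{R})$ with $|S|\le K=L^{-c}n_2$, each of which \emph{is} genuinely $\m{N}(0,n_2^{-1}I)$, and then beating the union bound with the $\chi^2$ tail. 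Your treatment of the misaligned samples' \emph{signal} contribution ($O(|\m{I}|/n_2)=o(1)$) is fine, but the noise term needs this extra argument; without it the final step ``$\rho\approx\sigma^2/n_2$'' is not established.
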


Note that when $\varepsilon>0$ is small, the sample complexity is dominated by $n_2$:
\[
n =  c(\alpha)\sigma^2 + (1+o(1))\frac{\sigma^2}{\varepsilon} \approx (1+o(1))\frac{\sigma^2}{\varepsilon},
\]
and thus almost independent of the constant $c(\alpha)$.
Proposition~\ref{prop:upper-bound-informal} should be compared with the optimal achievable MSE for estimating a signal in AWGN, without the shifts $L^{-1}\Expt \|X-\widehat{X}_{\mathrm{MMSE}}\|^2=\frac{\sigma^2}{\sigma^2 + n}$.  
\paragraph{Proof of Theorem~\ref{thm:highsnrregime} (upper bound)} 
The upper bound for $\alpha>2$ follows readily from  Proposition~\ref{prop:upper-bound-informal}. 
To show this, we
construct a new estimator $[\widehat{X}]$ as follows: $[\widehat{X}]=\widehat{X}$ if $\|\widehat{X}\|\le 10\sqrt{L}$ and $[\widehat{X}]=0$ otherwise. 
Note that under the high-probability event $\|X\|\le 2\sqrt{L}$, necessarily 
$\rho(X,[\widehat{X}]) \le \rho(X,\widehat{X})$. Write
\begin{align*}
	\Expt \rho(X,[\widehat{X}]) = \Expt\left[\rho(X,[\widehat{X}])\Ind_{\|X\|\le 2\sqrt{L}}\right] + \Expt\left[ \rho(X,[\widehat{X}])\Ind_{\|X\|> 2\sqrt{L}} \right] .
\end{align*}
Under $\|X\|\le 2\sqrt{L}$, the random variable $\rho(X,[\widehat{X}])$ is bounded \revAdd{by a constant}, hence by Proposition~\ref{prop:upper-bound-informal},
\[
\Expt\left[\rho(X,[\widehat{X}])\Ind_{\|X\|\le 2\sqrt{L}}\right] \le \varepsilon+o(1)\,,
\]
\revAdd{since $\rho(X,\widehat{X})\le \varepsilon$ holds w.p. $1-o(1)$.}
As for the other term, 
\[
\Expt\left[ \rho(X,[\widehat{X}])\Ind_{\|X\|> 2\sqrt{L}} \right] \le \Expt\left[L^{-1/2}(\|X\|+10L^{1/2})\Ind_{\|X\|> 2\sqrt{L}}\right] \le 6\Expt\left[L^{-1/2}\|X\|\Ind_{L^{-1/2}\|X\|> 2}\right] \,,
\]
\revAdd{so that by Cauchy-Schwartz, 
	\[
	\Expt\left[L^{-1/2}\|X\|\Ind_{L^{-1/2}\|X\|> 2}\right] \le \left( L^{-1}\Expt[\|X\|^2] \right)^{1/2} \left( \Pr(\|X\|> 2\sqrt{L}) \right)^{1/2} = o(1) \,.
	\]}
Thus, $[\widehat{X}]$ uses $n=\left[ (1+o(1))/\varepsilon + c(\alpha) \right]\sigma^2$ samples and achieves $\Expt\rho(X,[\widehat{X}])\le \varepsilon+o(1)$, so that 
\[
\limsup_{L\to\infty} \frac{\SC(L,\alpha,\varepsilon)}{\sigma^2/\varepsilon}\le 1+O_\alpha(\varepsilon).
\]

\paragraph{Class of ``nice signals.''}
Before getting to the details of the algorithm, in the analysis that follows, it is convenient to treat the signal $X$ as fixed and belonging some class of ``nice'' signals. Specifically, we require that: (i) the signal is sufficiently uncorrelated with its shifts, in that $L^{-1}\langle X,R_{\ell}X\rangle \approx 0$ for all $\ell\ne 0$, and its norm is concentrated around $L^{-1}\|X\|^2\approx 1$; (ii) The Fourier (DFT) coefficients of $X$ are uniformly bounded.  

Let $f_0,\ldots,f_{L-1}\in \CC^L$ be the DFT basis vectors, that is, $(f_\ell)_j=L^{-1/2}e^{\frac{2\pi i}{L}\ell j}$, and $\m{F}\in U(L)$ be the matrix whose columns are $f_0,\ldots,f_{L-1}$, so that $\m{F}^*X\in \CC^{L}$ are the Fourier coefficients of $X$ (here $\m{F}^*$ denotes the Hermitian conjugate of $\m{F}$.) For $\kappa>0$, we formally consider the set 
\begin{equation}
	\signalClass_\kappa = \left\{ X\in\RR^L\quad:\quad \max_{\ell}\left| L^{-1}\langle X, R_{\ell}X\rangle - \Ind_{\{\ell=0\}}\right|\le \kappa,\quad\textrm{and }\|\m{F}^* X\|_{\infty} \le \sqrt{10\log(L)} \right\},
\end{equation} 
where $\Ind_{\{\ell=0\}}=1$ when $\ell=0$ and is zero otherwise.
We take $\kappa=o(1)$ sufficiently large so to ensure that when $X\sim \m{N}(0,I)$, the constraint \mbox{$\max_{\ell}\left| L^{-1}\langle X, R_{\ell}X\rangle - \Ind_{\{\ell=0\}}\right|\le \kappa$} holds with probability $1-o(1)$ as $L\to\infty$; by Lemma~\ref{lem:signal-is-nice}, we may choose $\kappa=c\log(L)/\sqrt{L}$ for $c>0$ a large enough constant. Let $\signalClass$ be the set corresponding to such choice. To lighten the notation, we will not keep track of $\kappa$ explicitly, instead referring to all vanishing terms as $o(1)$. For the other constraint, the exact bound $\|\m{F}^*X\|_\infty \le \sqrt{10\log(L)}$ is somewhat arbitrary, in that $10$ can be replaced with any constant greater than $4$. 
The following is quite immediate at this point:

\begin{lemma} \label{lem:XinX}
	Suppose that $X\sim \m{N}(0,I)$. Then, $\Pr(X\notin \signalClass)=o(1)$. 
\end{lemma}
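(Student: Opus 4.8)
The plan is a straightforward union bound over the two constraints that define $\signalClass$. Writing the failure event $\{X\notin\signalClass\}$ as the union of $\{\max_{\ell}|L^{-1}\langle X,R_\ell X\rangle-\Ind_{\{\ell=0\}}|>\kappa\}$ and $\{\|\m{F}^*X\|_\infty>\sqrt{10\log(L)}\}$, it suffices to bound each event by $o(1)$ and add.

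First I would observe that the first constraint is, up to the (irrelevant) distinction between strict and non-strict inequalities, \emph{exactly} the event $\m{A}(\kappa)$ of Lemma~\ref{lem:signal-is-nice}: the $\ell=0$ term reads $|L^{-1}\|X\|^2-1|\le\kappa$ since $R_0=I$, and the $\ell\neq 0$ terms read $L^{-1}|\langle X,R_\ell X\rangle|\le\kappa$. Hence, for the choice $\kappa=c\log(L)/\sqrt{L}$ with $c$ large enough, Lemma~\ref{lem:signal-is-nice} gives $\Pr(\overline{\m{A}(\kappa)})\le 2L\exp(-cL\min(\kappa,\kappa^2))=o(1)$, and nothing further is needed for this constraint.

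Next I would handle the Fourier constraint directly. Since $\m{F}\in U(L)$ and $X\sim\m{N}(0,I)$, each coefficient $(\m{F}^*X)_k=L^{-1/2}\sum_{j}e^{-2\pi i kj/L}X_j$ is a centered (complex) Gaussian whose real and imaginary parts are jointly Gaussian, each with variance at most one, and with $\Expt|(\m{F}^*X)_k|^2=1$. A standard Gaussian/$\chi^2$ tail bound then gives, for each $k$, a bound of the form $\Pr(|(\m{F}^*X)_k|>\sqrt{10\log(L)})\le 2L^{-5}$; the exponent reflects the threshold $10\log(L)$, and any constant exceeding four suffices here, matching the remark in the text. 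Taking a union bound over the $L$ coordinates yields $\Pr(\|\m{F}^*X\|_\infty>\sqrt{10\log(L)})\le 2L^{-4}=o(1)$.

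Adding the two estimates gives $\Pr(X\notin\signalClass)=o(1)$, as claimed. I do not anticipate any genuine obstacle: the only points requiring mild care are (i) identifying the distribution of the Fourier coefficients so that a Gaussian tail bound applies with enough room to absorb the union bound over the $L$ coordinates, and (ii) choosing $\kappa$ to decay slowly enough that the autocorrelation estimate beats its own union bound over the $L$ shifts---but the latter is already supplied by Lemma~\ref{lem:signal-is-nice}.
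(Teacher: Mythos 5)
Your proposal is correct and follows essentially the same route as the paper: the autocorrelation constraint is discharged by Lemma~\ref{lem:signal-is-nice} with the chosen $\kappa$, and the Fourier constraint by Gaussian tail bounds on the real and imaginary parts of each DFT coefficient followed by a union bound over the $L$ coordinates. The only (immaterial) discrepancy is the per-coordinate tail constant: with each part's variance merely bounded by $1$, the bound is $4L^{-5/2}$ rather than $2L^{-5}$, but the union bound still yields $o(1)$ either way.
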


We note that it is likely that without assuming that the estimation is over a class of ``nice'' signals (for example, the class $\signalClass_\kappa$), the situation changes. On that note, we mention the work~\cite{brunel2019learning}, where it is shown that there are signals $X$ for which the MLE only attains the rate $\rho(X,\widehat{X}_{\mathrm{MLE}})\sim n^{-1/2}$. 

\subsection{Step 1: Brute force template matching}\label{sect:brute-force}

Recall that our intermediate goal here is to find a direction $\widehat{Q}\in\Sphere^{L-1}$ such that $\max_{\ell}L^{-1/2}\langle X,R_{\ell}^{-1}\widehat{Q}\rangle \ge 1-\eta$, where $\eta>0$ is some desired accuracy level. 
Since, assuming $X\in \signalClass$, for any $Q\in\Sphere^{L-1}$, 
\[
\left\| \frac{X}{\|X\|}-R_{\ell}^{-1}Q \right\|^2 = 2 - 2\left\langle \frac{X}{\|X\|}, R_{\ell}^{-1}Q\right\rangle = 2 - 2L^{-1/2}\langle X, R_{\ell}^{-1}Q\rangle + o(1),
\]
then taking $\m{N}$ to be a $\sqrt{\eta}$-cover of $\Sphere^{L-1}$, it must contain some $Q \in \m{N}$ with $L^{-1/2}\langle Q,R_{\ell}^{-1} X\rangle \ge 1-\frac12 \eta + o(1)$.  It is well known that one can find a cover of the sphere which is not too large:
\begin{lemma}\label{lem:net-size}[Lemma 5.13 in \cite{van2014probability}]
	There exists an $\sqrt{\eta}$-cover $\m{N}$ of $\Sphere^{L-1}$ of size $|\m{N}|\le (3/\sqrt{\eta})^{L}$. That is, there exists a set $\m{N}\subset\Sphere^{L-1}$ of size $|\m{N}|\le (3/\sqrt{\eta})^{L}$, such that $\forall X\in\Sphere^{L-1}\,, \exists Q\in\m{N}$ with $\|X-Q\|\leq\sqrt{\eta}$.
\end{lemma}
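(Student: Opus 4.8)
The plan is to use the classical volumetric packing--covering argument, so that everything reduces to comparing volumes of Euclidean balls. Set $\epsilon=\sqrt{\eta}$ for brevity, and assume $\eta\le 1$ so that $\epsilon\le 1$; in the application $\eta=\eta(\alpha)$ is a small constant, so this is harmless.

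First I would construct $\m{N}$ as a \emph{maximal} $\epsilon$-separated subset of $\Sphere^{L-1}$, i.e.\ a set of points pairwise at Euclidean distance at least $\epsilon$ to which no further point of the sphere can be added without violating separation. Such a set exists and is finite (the sphere is compact, so a maximal separated set cannot be infinite). The key observation is that maximality forces $\m{N}$ to be an $\epsilon$-cover: if some $x\in\Sphere^{L-1}$ satisfied $\|x-Q\|>\epsilon$ for every $Q\in\m{N}$, then $\m{N}\cup\{x\}$ would still be $\epsilon$-separated, contradicting maximality. Hence every point of the sphere lies within $\epsilon=\sqrt{\eta}$ of some point of $\m{N}$, which is exactly the covering property claimed.

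It then remains to bound $|\m{N}|$. Since the points of $\m{N}$ are pairwise at distance $\ge\epsilon$, the open balls $B(Q,\epsilon/2)$ for $Q\in\m{N}$ are pairwise disjoint, and each is contained in $B(0,1+\epsilon/2)$ because $\|Q\|=1$. Comparing volumes, and using that the volume of a Euclidean ball of radius $r$ in $\RR^L$ equals $c_L r^L$ for a dimensional constant $c_L$ whose exact value (involving $\Gamma$-functions) is irrelevant since it cancels, I would write
\[
|\m{N}|\cdot c_L (\epsilon/2)^L \;=\; \sum_{Q\in\m{N}}\Vol\big(B(Q,\epsilon/2)\big)\;\le\;\Vol\big(B(0,1+\epsilon/2)\big)\;=\;c_L(1+\epsilon/2)^L .
\]
Cancelling $c_L$ and rearranging gives
\[
|\m{N}|\;\le\;\left(\frac{1+\epsilon/2}{\epsilon/2}\right)^L\;=\;\left(\frac{2}{\epsilon}+1\right)^L .
\]
Finally, since $\epsilon\le 1$ we have $\tfrac{2}{\epsilon}+1\le\tfrac{3}{\epsilon}$, so $|\m{N}|\le (3/\epsilon)^L=(3/\sqrt{\eta})^L$, as claimed.

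There is no substantive obstacle here, as the result is entirely classical. The only points requiring a little care are: to work with the \emph{ratio} of ball volumes rather than their absolute values, so that the dimensional constant $c_L$ never actually enters; and to track the constant carefully so the bound emerges as $(3/\sqrt{\eta})^L$ rather than the slightly weaker $(2/\sqrt{\eta}+1)^L$, which is precisely where the harmless assumption $\eta\le 1$ is used.
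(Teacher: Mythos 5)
Your proof is correct: the maximal $\sqrt{\eta}$-separated set is indeed automatically a $\sqrt{\eta}$-cover, the disjoint-balls volume comparison gives $|\m{N}|\le (1+2/\sqrt{\eta})^L$, and the reduction to $(3/\sqrt{\eta})^L$ under $\eta\le 1$ is exactly right. The paper does not prove this lemma but cites it (Lemma 5.13 in van Handel's notes), and your volumetric packing--covering argument is precisely the standard proof of that cited result, so there is nothing further to compare.
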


For each $Q\in \m{N}$, we define its per-sample score:
\[
s_i(Q) = s^{\eta}_i(Q)=  \Ind\left[ \max_{\ell} L^{-1/2}\langle Y_i,R_{\ell}^{-1} Q \rangle \ge 1-\frac34 \eta \right], 
\]
and the total score $s(Q)=\sum_{i=1}^{n_1} s_i(Q)$, $n_1$ being the number of samples allocated for this step.
That is, $s(Q)$ is the number of samples $Y_i$ such that $L^{-1/2}\langle Q,R_{\ell}^{-1}Y_i\rangle  \ge 1-\frac34 \eta$ for some $\ell$. 
The returned estimator is then simply
\[
\widehat{Q} = \argmax_{Q\in \m{N}} s(Q) .
\]
Note that $s_i(\cdot)$ could be thought of as a discontinuous proxy for the log-likelihood (restricted to $X\in \Sphere^{L-1}$): $\log P(Y_i|X)= \log \sum_{\ell=0}^{L-1} \exp \left(  \frac{1}{\sigma^2}\langle X,R_{\ell}^{-1} Y_i\rangle  \right) + \mathrm{constant}$. When $\sigma$ is small, the log-likelihood is essentially dominated by $\max_{\ell} \sigma^{-2}\langle X, R_{\ell}^{-1} Y_i \rangle $. Maximizing the likelihood is computationally more straightforward (in the sense that this is a continuous optimization problem, no need to quantize the domain as we do); however, analyzing the MLE directly appears to be difficult~\cite{fan2020likelihood,katsevich2020likelihood}. 

We start by showing that there are only a few shifts $\ell$ such that $L^{-1/2}\langle X,R_{\ell}^{-1}Q\rangle$ are all large. 
\begin{lemma}\label{lem:not-many-large-shifts}
	Suppose that $X\in \signalClass$. For $Q\in \Sphere^{L-1}$, let 
	\[
	N_Q(h) = \left| \left\{ \ell \,:\, L^{-1/2}\left| \langle X,R_{\ell}^{-1} Q\rangle \right| \ge h \right\} \right| .
	\]
	Then, $N_Q(h) \le h^{-2}\|\m{F}^*X\|_\infty^2 \le h^{-2}\cdot 10\log(L)$.
\end{lemma}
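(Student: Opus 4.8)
The plan is to pass to the Fourier (DFT) domain, where cyclic shifts act diagonally, and then combine Parseval's identity with an elementary counting (Markov-type) argument. The whole point is that the quantities $\langle X, R_\ell^{-1} Q\rangle$, as $\ell$ ranges over the group, are themselves the Fourier coefficients of a single sequence, so their total $\ell^2$-energy is controlled, and then only a few of them can be large.

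First I would rewrite $\langle X, R_\ell^{-1} Q\rangle = \langle R_\ell X, Q\rangle$ and expand both vectors in the DFT basis. Writing $\hat{X}=\m{F}^* X$ and $\hat{Q}=\m{F}^* Q$ for the Fourier coefficients, Lemma~\ref{lem:DFT-basis} tells us that $R_\ell$ is diagonalized by the DFT basis, multiplying the $k$-th coefficient by a unit-modulus phase $e^{2\pi i k\ell/L}$. Hence, by unitarity of $\m{F}$,
\[
\langle X, R_\ell^{-1} Q\rangle = \sum_{k=0}^{L-1} e^{-2\pi i k\ell/L}\,\overline{\hat{X}_k}\,\hat{Q}_k = \sum_{k=0}^{L-1} e^{-2\pi i k\ell/L}\, c_k, \qquad c_k := \overline{\hat{X}_k}\,\hat{Q}_k .
\]

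Next I would sum over $\ell$ and use orthogonality of the characters, $\sum_{\ell=0}^{L-1} e^{-2\pi i (k-k')\ell/L} = L\,\delta_{k,k'}$, to obtain
\[
\sum_{\ell=0}^{L-1}\bigl|\langle X, R_\ell^{-1}Q\rangle\bigr|^2 = L\sum_{k=0}^{L-1}|c_k|^2 = L\sum_{k=0}^{L-1}|\hat{X}_k|^2|\hat{Q}_k|^2 \le L\,\|\m{F}^* X\|_\infty^2\,\|\hat{Q}\|^2 = L\,\|\m{F}^* X\|_\infty^2 ,
\]
where I pulled $\|\hat{X}\|_\infty^2 = \|\m{F}^* X\|_\infty^2$ out of the sum and used $\|\hat{Q}\|^2=\|Q\|^2=1$ because $Q\in\Sphere^{L-1}$.

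Finally I would convert this second-moment estimate into the claimed bound by counting: each $\ell$ counted in $N_Q(h)$ satisfies $|\langle X, R_\ell^{-1}Q\rangle|^2 \ge (h\sqrt{L})^2 = h^2 L$, so it contributes at least $h^2 L$ to the left-hand sum, whence $N_Q(h)\cdot h^2 L \le L\,\|\m{F}^* X\|_\infty^2$ and therefore $N_Q(h)\le h^{-2}\|\m{F}^* X\|_\infty^2$. The second inequality, $N_Q(h)\le h^{-2}\cdot 10\log(L)$, is then immediate from the defining constraint $\|\m{F}^* X\|_\infty \le \sqrt{10\log(L)}$ of the class $\signalClass$. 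No step here is a genuine obstacle; the only point demanding care is keeping the Parseval normalization consistent between the two applications (the factor $L$ in the sum over $\ell$ must not be double-counted against the unitary normalization of $\m{F}$).
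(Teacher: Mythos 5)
Your proof is correct and follows essentially the same route as the paper's: both reduce the claim to the second-moment bound $\sum_{\ell}|\langle X,R_{\ell}^{-1}Q\rangle|^2\le L\|\m{F}^*X\|_\infty^2$ via the DFT diagonalization of the shifts, and then conclude by a Markov-type counting argument. The only cosmetic difference is that the paper packages the Parseval computation as the operator-norm bound $Q^\T\m{M}(X)Q\le\|\m{M}(X)\|=\|\m{F}^*X\|_\infty^2$ for $\m{M}(X)=L^{-1}\sum_{\ell}(R_{\ell}X)(R_{\ell}X)^\T$, whereas you carry out the character-orthogonality sum directly.
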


We next show that if $\max_{\ell} L^{-1/2} \langle X, R^{-1}_\ell Q\rangle $ is small, then with high probability the score $s(Q)$ is not large. 

\begin{lemma}\label{lem:bad-has-small-score}
	Assume that $X\in\signalClass$, $\alpha>2$, $\eta<1-\sqrt{2/\alpha}$, and $L$ is large enough so that $\log(L) \le L^{3\eta^2\alpha/128}$. 	
	Suppose that $Q\in \Sphere^{L-1}$ is such that $\max_{\ell} L^{-1/2} \langle X, R_{\ell}^{-1} Q\rangle \le 1-\eta$, then 
	\[
	\Pr\left(s(Q) \ge n_1/2 \right) \le \left[ 16\left(2+\frac{640}{ \left( 1-\sqrt{\frac{2}{\alpha}}\right)^2 }\right)L^{-\eta^2\alpha/128} \right]^{n_1/2} .
	\]
\end{lemma}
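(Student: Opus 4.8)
The plan is to exploit that, for a fixed ``nice'' signal $X\in\signalClass$ and a fixed candidate $Q$, the per-sample scores $s_1(Q),\ldots,s_{n_1}(Q)$ are i.i.d.\ Bernoulli random variables, since they depend on the i.i.d.\ measurements $Y_i=R_{\ell_i}X+\sigma Z_i$ only through $Y_i$. Writing $p=\Pr(s_i(Q)=1)$, a standard binomial tail bound gives $\Pr(s(Q)\ge n_1/2)\le\binom{n_1}{n_1/2}p^{n_1/2}\le(4p)^{n_1/2}$, so it suffices to bound $p$ by a constant multiple of $\left(2+\frac{640}{(1-\sqrt{2/\alpha})^2}\right)L^{-\alpha\eta^2/128}$. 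The whole game is therefore to estimate the single-sample probability $p$.

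To do this, I would first split each inner product into a signal part and a noise part, $L^{-1/2}\langle Y_i,R_\ell^{-1}Q\rangle=L^{-1/2}\langle R_{\ell_i}X,R_\ell^{-1}Q\rangle+\sigma L^{-1/2}\langle Z_i,R_\ell^{-1}Q\rangle$. As $\ell$ ranges over all shifts, the signal part is merely a reindexing of $\{L^{-1/2}\langle X,R_m^{-1}Q\rangle\}_m$, whose maximum is at most $1-\eta$ by hypothesis; crucially, this holds regardless of the random true shift $\ell_i$. Since $R_\ell^{-1}Q$ is a unit vector, each noise part is $\m{N}(0,\sigma^2/L)=\m{N}(0,(\alpha\log L)^{-1})$. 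A union bound over the $L$ shifts, followed by a Gaussian tail estimate, then yields
\[
p\le\sum_m L^{-\frac{\alpha}{2}\left((1-\frac34\eta)-c_m\right)^2},\qquad c_m:=L^{-1/2}\langle X,R_m^{-1}Q\rangle,\ \ \max_m c_m\le 1-\eta,
\]
where the substitution $\sigma^2/L=(\alpha\log L)^{-1}$ turns the Gaussian tail into a power of $L$, and each exponent is nonnegative because $(1-\frac34\eta)-c_m\ge\frac14\eta>0$.

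The core of the argument is to split this sum at the threshold $\theta=\frac18(1-\sqrt{2/\alpha})$, which is positive precisely because $\alpha>2$. For shifts with $c_m\ge\theta$, Lemma~\ref{lem:not-many-large-shifts} (with $h=\theta$) bounds their number by $10\theta^{-2}\log L=\frac{640\log L}{(1-\sqrt{2/\alpha})^2}$; each such term is at most $L^{-\alpha(\eta/4)^2/2}=L^{-\alpha\eta^2/32}$, and since $\log L\le L^{3\alpha\eta^2/128}$ by hypothesis, this band contributes at most $\frac{640}{(1-\sqrt{2/\alpha})^2}L^{-\alpha\eta^2/128}$ (using $\frac{\alpha\eta^2}{32}-\frac{3\alpha\eta^2}{128}=\frac{\alpha\eta^2}{128}$). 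For the remaining shifts with $c_m<\theta$, of which there are at most $L$, the required deviation obeys $(1-\frac34\eta)-\theta>\sqrt{2/\alpha}$: indeed, using $1-\eta>\sqrt{2/\alpha}$ one checks $(1-\frac34\eta)-\theta>\frac18+\frac78\sqrt{2/\alpha}$, which exceeds $\sqrt{2/\alpha}$ exactly when $\alpha>2$. Hence each of these terms lies below $L^{-1}$ with room to spare, and the whole band contributes $L^{1-\frac{\alpha}{2}((1-\frac34\eta)-\theta)^2}\le 2L^{-\alpha\eta^2/128}$. Summing the two bands and feeding the result into the binomial bound gives the claim.

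The main obstacle is the small-correlation band: it may contain as many as $L$ shifts, so the naive union bound is fatal unless the per-shift Gaussian tail is provably below $L^{-1}$ with a margin. This is exactly what the two structural hypotheses buy us: $\alpha>2$ makes the typical maximal noise level $\sqrt{2/\alpha}$ strictly less than $1$, and $\eta<1-\sqrt{2/\alpha}$ forces the best spurious correlation $1-\eta$ to lie above $\sqrt{2/\alpha}$, so that the gap $(1-\frac34\eta)-\theta$ to the acceptance threshold stays safely above $\sqrt{2/\alpha}$. The constant bookkeeping is tightest as $\alpha\downarrow 2$, where $1-\sqrt{2/\alpha}\to 0$ forces $\eta\to 0$; there one must verify that the surplus in the exponent of the small-correlation band still dominates $\alpha\eta^2/128$, which holds because $\eta^2$ vanishes faster than the fixed surplus coming from $\alpha>2$.
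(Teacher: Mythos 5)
Your proof is correct and follows essentially the same route as the paper's: observe that $s_1(Q),\ldots,s_{n_1}(Q)$ are i.i.d.\ Bernoulli, apply the binomial tail bound $2^{n_1}p^{n_1/2}$, and bound the single-sample probability $p$ by splitting the shifts into a small set of high-correlation shifts---whose cardinality is controlled via Lemma~\ref{lem:not-many-large-shifts} (note the paper's proof cites Lemma~\ref{lem:not-many-candidiates} here, which appears to be a typo), yielding the same $\frac{640\log(L)}{(1-\sqrt{2/\alpha})^2}$ count and the same per-shift tail $L^{-\eta^2\alpha/32}$---and the remaining low-correlation shifts. The one place you genuinely diverge is the low-correlation band: the paper applies the Borell--TIS inequality (Lemma~\ref{lem:borell-tis}) to the maximum of the correlated Gaussians $\langle Z,R_{\ell}^{-1}Q\rangle$ around its mean $\le\sqrt{2\log(L)}$, whereas you use a plain union bound over at most $L$ individual Gaussian tails and must then verify directly that $1-\frac{\alpha}{2}\left(\left(1-\frac34\eta\right)-\theta\right)^2\le-\frac{\alpha\eta^2}{128}$ for $\theta=\frac18\left(1-\sqrt{2/\alpha}\right)$; this reduces to $30\le\alpha(1-\eta^2)+14\sqrt{2\alpha}$, which indeed holds because $1-\eta^2>1-\eta>\sqrt{2/\alpha}$ and $\alpha>2$ give $\alpha(1-\eta^2)+14\sqrt{2\alpha}>15\sqrt{2\alpha}>30$, so your more elementary variant closes with the same constants (your factor $4$ in the binomial step is even slightly better than the paper's $16$).
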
 

Next, we prove that if $\max_{\ell}\langle X,R_{\ell}^{-1}Q\rangle$ is sufficiently large, then $s(Q)$ is large with high probability.

\begin{lemma}\label{lem:good-has-large-score}
	Assume that $X \in \signalClass$, $\alpha>2$, and $L$ is large enough so that $L^{\eta^2\alpha/64}\ge 4$.  	
	Suppose that $Q \in \Sphere^{L-1}$ is such that $\max_{\ell}\langle X,R^{-1}_\ell Q\rangle \ge 1-5\eta/8$. Then,
	\[
	\Pr(s(Q)< n_1/2) \le e^{-n_1/32} .
	\] 
\end{lemma}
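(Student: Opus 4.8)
The statement is the ``easy half'' of the first stage: a direction $Q$ that is well aligned with \emph{some} shift of $X$ should accumulate a large score. The plan is to first prove the corresponding per-sample statement, and then boost it to the claimed exponential bound using independence across the $n_1$ samples. Throughout I treat $X\in\signalClass$ as fixed and recall the scaling $\sigma^2 = L/(\alpha\log L)$, so that $\sigma^2/L = 1/(\alpha\log L)$.

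Fix one sample $Y_i = R_{\ell_i}X + \sigma Z_i$ and let $\ell^*$ attain $L^{-1/2}\langle X, R_{\ell^*}^{-1}Q\rangle = \max_\ell L^{-1/2}\langle X, R_\ell^{-1}Q\rangle \ge 1 - \tfrac58\eta$. The key point is that to certify $s_i(Q)=1$ we need \emph{not} control the maximum over all shifts; it suffices to exhibit a single shift clearing the threshold $1-\tfrac34\eta$. Taking $\hat\ell = \ell^*-\ell_i \bmod L$ and using $\langle R_{\ell_i}X, R_{\hat\ell}^{-1}Q\rangle = \langle X, R_{\ell^*}^{-1}Q\rangle$, we obtain
\[
L^{-1/2}\langle Y_i, R_{\hat\ell}^{-1}Q\rangle = L^{-1/2}\langle X, R_{\ell^*}^{-1}Q\rangle + \sigma L^{-1/2}\langle Z_i, R_{\hat\ell}^{-1}Q\rangle \ge \left(1-\tfrac58\eta\right) + G_i,
\]
where $G_i := \sigma L^{-1/2}\langle Z_i, R_{\hat\ell}^{-1}Q\rangle$. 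Since $\|R_{\hat\ell}^{-1}Q\| = \|Q\| = 1$ and $Z_i\sim\m{N}(0,I)$, the variable $G_i$ is centered Gaussian with variance $\sigma^2/L = 1/(\alpha\log L)$, and its distribution does not depend on $\ell_i$. This reduces the per-sample event to a one-dimensional Gaussian tail.

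Indeed, $\{s_i(Q)=1\}$ contains $\{G_i \ge -\tfrac{\eta}{8}\}$, because $(1-\tfrac58\eta)+G_i \ge 1-\tfrac34\eta$ exactly when $G_i \ge -\tfrac{\eta}{8}$. A standard Gaussian tail bound then yields
\[
\Pr\left(s_i(Q)=0\right) \le \Pr\left(G_i < -\tfrac{\eta}{8}\right) \le \exp\left(-\frac{(\eta/8)^2}{2\sigma^2/L}\right) = L^{-\eta^2\alpha/128},
\]
so that $q := \Pr(s_i(Q)=0) \le L^{-\eta^2\alpha/128}$, uniformly over $\ell_i$. Under the hypothesis $L^{\eta^2\alpha/64}\ge 4$ this per-sample failure probability is bounded below $1/2$, hence the success probability $p = 1-q$ is bounded away from $1/2$.

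Finally, since $Y_1,\ldots,Y_{n_1}$ are independent, the scores $s_1(Q),\ldots,s_{n_1}(Q)$ are i.i.d.\ Bernoulli$(p)$, and $\{s(Q)<n_1/2\}$ is precisely the event that more than half of the samples fail. A Chernoff bound for the binomial lower tail (equivalently, bounding $\Pr(\mathrm{Bin}(n_1,q)\ge n_1/2)$ by $e^{-n_1 D}$ with $D$ the relative entropy between $1/2$ and $q$) then gives $\Pr(s(Q)<n_1/2)\le e^{-n_1/32}$, as claimed. I expect the only genuinely delicate step to be the constant bookkeeping in the last two displays: matching the stated condition $L^{\eta^2\alpha/64}\ge 4$ to the precise target rate $e^{-n_1/32}$ requires keeping the numerical constants in the Gaussian tail and the Chernoff estimate honest. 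Everything else follows from the ``single-shift'' reduction, which is exactly what makes this the easy direction — in contrast to Lemma~\ref{lem:bad-has-small-score}, where showing that the maximum over \emph{all} shifts stays \emph{small} forces a union bound over the group and the use of the bounded-Fourier-coefficient property of $\signalClass$.
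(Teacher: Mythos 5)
Your route is the same as the paper's: exhibit the single aligned shift $\hat\ell=\ell^*-\ell_i$, bound the per-sample failure probability by a one-dimensional Gaussian tail, and finish with binomial concentration across the $n_1$ independent samples. The one place the argument does not close is exactly the constant bookkeeping you flagged. Your (correctly computed) Gaussian tail gives $q:=\Pr(s_i(Q)=0)\le L^{-\eta^2\alpha/128}$, and the stated hypothesis $L^{\eta^2\alpha/64}\ge 4$ then yields only $q\le 4^{-1/2}=1/2$. That is not enough: if $q$ can be as large as $1/2$, then $\Pr(\mathrm{Bin}(n_1,q)\ge n_1/2)$ is of constant order, and the relative entropy $D$ in your Chernoff bound degenerates to $0$, so no bound of the form $e^{-n_1/32}$ follows. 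You need $q$ bounded away from $1/2$ by a fixed margin, say $q\le 1/4$, which under your tail estimate requires $L^{\eta^2\alpha/128}\ge 4$ (equivalently $L^{\eta^2\alpha/64}\ge 16$); this of course holds for all large $L$, so the lemma is fine in the asymptotic regime where it is invoked, but the last step as written does not follow from the hypothesis as stated. For what it is worth, the paper's own proof records the tail as $L^{-\eta^2\alpha(5/8-3/4)^2}=L^{-\eta^2\alpha/64}$, i.e.\ it drops the factor $\tfrac12$ in the Gaussian exponent, which is why its hypothesis is calibrated so that $q\le 1/4$ and Hoeffding gives $\Pr\left(\mathrm{Bin}(n_1,3/4)<n_1/2\right)\le e^{-n_1/8}\le e^{-n_1/32}$. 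So: same approach, correct idea, but either strengthen the numerical hypothesis or insert ``for $L$ sufficiently large'' before the final display to make it true.
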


We are now ready to conclude the analysis of Step 1 of our algorithm.
\begin{proposition}\label{prop:brute-force-guarantee}
	Assume that $X\in \signalClass$, $\alpha>2$, and  $\eta<1-\sqrt{2/\alpha}$.
	Then, there is constant $c>0$, such that whenever 
	\[
	n_1\ge c\frac{L\log(1/\eta)}{\alpha \eta^2 \log(L)} = c\frac{\sigma^2\log(1/\eta)}{\eta^2},
	\]
	the vector $\widehat{Q}=\argmax_{Q\in \m{N}}s(Q)$ satisfies $\max_{\ell} \langle X,R_{\ell}^{-1}Q\rangle \ge 1-\eta$ with probability $1-o(1)$ as $n_1,L\to\infty$. In fact, the error probability decays exponenentially fast with $n_1$. 
\end{proposition}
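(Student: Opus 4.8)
The plan is to run a union-bound argument over the finite cover $\m{N}$, using the two complementary score estimates (Lemmas~\ref{lem:bad-has-small-score} and \ref{lem:good-has-large-score}) to show that the maximizer $\widehat{Q}=\argmax_{Q\in\m{N}}s(Q)$ cannot be a poorly-aligned direction. I would call a candidate $Q\in\m{N}$ \emph{good} if $\max_\ell L^{-1/2}\langle X,R_\ell^{-1}Q\rangle \ge 1-\tfrac58\eta$ and \emph{bad} if $\max_\ell L^{-1/2}\langle X,R_\ell^{-1}Q\rangle \le 1-\eta$. The deliberate gap of $\tfrac38\eta$ between these thresholds is exactly the slack matching the hypotheses of the two lemmas. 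The goal is to establish that, with high probability, some good candidate outscores every bad one, so that $\widehat{Q}$ must lie outside the bad set, i.e.\ satisfies $\max_\ell L^{-1/2}\langle X,R_\ell^{-1}\widehat{Q}\rangle > 1-\eta$, which is the assertion.

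First I would produce a good candidate. The discussion preceding Lemma~\ref{lem:net-size} already shows that, since $X\in\signalClass$ (so $L^{-1}\|X\|^2=1+o(1)$) and $\m{N}$ is a $\sqrt\eta$-cover of $\Sphere^{L-1}$, the cover contains a $Q^\star$ with $L^{-1/2}\langle X,R_\ell^{-1}Q^\star\rangle \ge 1-\tfrac12\eta+o(1)$; for $L$ large the $o(1)$ is below $\tfrac18\eta$, so $Q^\star$ is good. Lemma~\ref{lem:good-has-large-score} then yields $\Pr(s(Q^\star)<n_1/2)\le e^{-n_1/32}$. Next I would control the bad candidates by a union bound: by Lemma~\ref{lem:bad-has-small-score} each bad $Q$ obeys $\Pr(s(Q)\ge n_1/2)\le (C_\alpha L^{-\eta^2\alpha/128})^{n_1/2}$ with $C_\alpha=16\bigl(2+640/(1-\sqrt{2/\alpha})^2\bigr)$; this is exactly where $\alpha>2$ and $\eta<1-\sqrt{2/\alpha}$ are used, guaranteeing both that $C_\alpha$ is finite and that the exponent $\eta^2\alpha/128$ is strictly positive. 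Combined with $|\m{N}|\le(3/\sqrt\eta)^L$ from Lemma~\ref{lem:net-size},
\[
\Pr\left(\exists\text{ bad }Q\in\m{N}:\ s(Q)\ge n_1/2\right)\le (3/\sqrt\eta)^L\,\bigl(C_\alpha L^{-\eta^2\alpha/128}\bigr)^{n_1/2}.
\]

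On the complement of these two failure events we have $s(\widehat{Q})\ge s(Q^\star)\ge n_1/2$, while every bad $Q$ has $s(Q)<n_1/2$; hence $\widehat{Q}$ is not bad, which is precisely the desired conclusion. It remains to balance the two exponentials. Taking logarithms, the right-hand side above tends to $0$ provided
\[
\frac{n_1}{2}\left(\frac{\eta^2\alpha}{128}\log L-\log C_\alpha\right) \ge L\log(3/\sqrt\eta)\,(1+o(1)),
\]
and since $\log(3/\sqrt\eta)=O(\log(1/\eta))$ while $\log C_\alpha$ is dwarfed by $\tfrac{\eta^2\alpha}{128}\log L$ for large $L$, this reduces to $n_1\ge c\,L\log(1/\eta)/(\alpha\eta^2\log L)=c\,\sigma^2\log(1/\eta)/\eta^2$ for a suitable constant $c$, using $\sigma^2=L/(\alpha\log L)$. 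Both failure terms, $e^{-n_1/32}$ and the union bound above, then decay exponentially in $n_1$, giving the final claim.

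The main obstacle is this final balancing step: the net $\m{N}$ has size exponential in $L$, so the tiny per-candidate failure probability $L^{-\Theta(\eta^2\alpha)}$ must be raised to the power $n_1/2$ to overwhelm the $e^{\Theta(L)}$ cover size. This succeeds only because the per-candidate bound of Lemma~\ref{lem:bad-has-small-score} decays \emph{polynomially} in $L$ (it is the $\log L$ from this polynomial decay that lands in the denominator of the threshold), and it is what pins the sample complexity at order $L/\log L$ rather than $L$. Extracting the precise constant, so that the condition reads $n_1\ge c\,\sigma^2\log(1/\eta)/\eta^2$, is then just careful bookkeeping of $\log(3/\sqrt\eta)$ against $\eta^2\alpha\log L$.
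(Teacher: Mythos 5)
Your proposal is correct and follows essentially the same route as the paper's proof: exhibit a good candidate in the $\sqrt{\eta}$-cover via the approximation argument, apply Lemma~\ref{lem:good-has-large-score} to it, union-bound the bad candidates via Lemmas~\ref{lem:net-size} and \ref{lem:bad-has-small-score}, and balance the $(3/\sqrt{\eta})^L$ cover size against the $L^{-\Theta(\eta^2\alpha)}$ per-candidate bound raised to the power $n_1/2$, which is exactly where the $L/\log L$ sample-complexity scaling comes from. The thresholds ($1-\tfrac58\eta$ for good, $1-\eta$ for bad) and the final bookkeeping match the paper's argument.
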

\begin{proof}
	As argued in the beginning of this section, the $\sqrt{\eta}$-cover $\m{N}$ contains some $Q\in \Sphere^{L-1}$ such that $L^{-1/2}\langle X,R_{\ell}^{-1}Q\rangle \ge 1-\eta/2-o(1)\ge 1-5\eta/8$ for some $\ell$. By Lemma~\ref{lem:good-has-large-score}, with probability greater than $1-e^{-n_1/32}$, this vector has score $s(Q)\ge n_1/2$. It therefore suffices to show that with high probability, all the vectors $Q\in \m{N}$ that are bad, meaning that $\max_{\ell} L^{-1/2}\langle X, R_{\ell}^{-1}Q\rangle < 1-\eta$, have score $s(Q)<n_1/2$.
	By Lemmas~\ref{lem:net-size} and \ref{lem:bad-has-small-score},
	\begin{align*}
		\Pr\left(\exists \textrm{bad } Q\in\m{N} \,:\,s(Q)\ge n_1/2 \right) 
		&\le |\m{N}| \cdot \Pr\left(s(Q)\ge n_1/2\,\big|\,Q \textrm{ is bad}\right) \\
		&\le (9/\eta)^{L/2} \cdot \left[ 16\left(2+\frac{640}{ \left( 1-\sqrt{\frac{2}{\alpha}}\right)^2 }\right)L^{-\eta^2\alpha/128} \right]^{n_1/2} \\
		&\le \left( C(\alpha) e^{ -c_1\eta^2\alpha\log(L) + c_2 \frac{L}{n}\log(1/\eta)} \right)^{n_1} ,
	\end{align*} 
	where $c_1,c_2>0$ are absolute constants, and $C(\alpha)$ depends on $\alpha$. Then, this probability tends to $0$ as $n_1,L\to\infty$ (exponentially fast in $n_1$ ) whenever $n_1\ge c\frac{L\log(1/\eta)}{\alpha\eta^2\log(L)}$ for some other $c>0$.
\end{proof}

Note that at this point we could take $\widehat{X} = L^{1/2}\cdot \widehat{Q}$ as an estimator for $X$, so that 
\[
\rho(X,\widehat{X}) = \min_{\ell}\|L^{-1/2}X - R_\ell^{-1}Q\|^2 \le 2\eta + o(1),
\]
holds with high probability. For \emph{fixed} $\eta$, this estimator indeed captures the correct dimensional scaling of the sample complexity, namely, that $n=O(L/(\alpha\log L)$ samples are sufficient to get non-trivial alignment error. However, its dependence on $\eta$ is seemingly quite bad: for estimating a signal in AWGN, without the shifts, the optimal dependence on $\eta$ should look like $O(L/(\alpha\log L)\cdot \eta^{-1})$, rather than the much worse $O\left( L/(\alpha\log L) \cdot \eta^{-2}\log(1/\eta) \right)$ we were able to show. In the next section, we see how to achieve this ``correct''  rate by essentially recovering the shifts on all but a vanishing fraction of the samples, and averaging the properly aligned measurements.

\subsection{Step 2: Achieving optimal MSE decay rate by alignment and averaging}

Suppose that one has access to a known template $Q\in \Sphere^{L-1}$, such that $\langle X,Q\rangle \ge 1-\eta$.  Since $L^{-1}\|X\|^2=1+o(1)$, this is the same as having $\|L^{-1/2}X- Q\|^2 \le 2\eta+o(1)$, and since $\max_{\ell\neq 0}L^{-1}|\langle X,R_\ell X\rangle| = o(1)$, we see that for any $\ell\ne 0$, 
\[
\| L^{-1/2}R_\ell X - Q\| \ge \|L^{-1/2}[R_{\ell}X - X]\| - \|L^{-1/2}X-Q\| \ge \sqrt{2}-\sqrt{2\eta} - o(1).  
\]
In particular, we see that when $\sqrt{2\eta}<\sqrt{2}-\sqrt{2\eta}$, that is, $\eta<1/4$ (and $L$ is sufficiently large), there is a \emph{unique} $\ell$ (specifically, $\ell=0$) such that $\|L^{-1/2}X-R_{\ell}Q\|^2 \le 2\eta+o(1)$. In that case, the idea of matching a sample $Y_i=R_{\ell_i}X + \sigma Z$ against the template $Q$ becomes well-posed, in the sense that its desired outcome is clear: we would like to recover the shift $R_{\ell_i}$. 

\begin{lemma}\label{lem:match-with-partial-template}
	Assume that $X\in\signalClass$ and $\alpha>2$. 
	Let $Y=R_{\ell}X+\sigma Z$, and suppose that $Q\in \Sphere^{L-1}$ is independent of $Y$ and satisfies $\max_{\ell'} L^{-1/2}\langle X, R_{\ell'}^{-1}Q \rangle \ge 1-\eta$, where 
	\[
	\sqrt{\eta} < \frac{1}{2}(1-\sqrt{2/\alpha}) .
	\]
	Denote the maximizing shift by $\ell^*$.
	Let $\widehat{\ell} = \argmax_{\ell'} \langle Y, R_{\ell'} Q\rangle$. Then 
	\[
	\Pr\left(\widehat{\ell}\ne \ell-\ell^* \right) \le 2  L^{-\frac12 \alpha\left(1/2-1/\sqrt{2\alpha} - \sqrt{\eta}\right)^2+o(1)}.
	\]
\end{lemma}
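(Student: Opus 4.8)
The plan is to treat $X\in\signalClass$ and the template $Q$ (satisfying the alignment hypothesis and independent of $Y$) as fixed, so that the only randomness is in $Z$, and to prove the bound uniformly over all such $X,Q$ and every value of the true shift $\ell$. Write $G_{\ell'}=\langle Z,R_{\ell'}Q\rangle$; each $G_{\ell'}\sim\m{N}(0,1)$ since $\|R_{\ell'}Q\|=\|Q\|=1$, and $(G_{\ell'})_{\ell'}$ is a centred Gaussian vector with unit-variance marginals. Decompose the matching score as $\langle Y,R_{\ell'}Q\rangle=\langle R_\ell X,R_{\ell'}Q\rangle+\sigma G_{\ell'}$ and abbreviate the signal part by $s_{\ell'}=\langle R_\ell X,R_{\ell'}Q\rangle=\langle X,R_{\ell'-\ell}Q\rangle$. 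The estimator errs iff $s_{\ell'}+\sigma G_{\ell'}\ge s_{\ell-\ell^*}+\sigma G_{\ell-\ell^*}$ for some $\ell'\ne\ell-\ell^*$, so the whole argument reduces to separating the score at the correct shift $\ell-\ell^*$ from the scores at the remaining $L-1$ shifts.

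First I would pin down the signal levels. Set $P=R_{\ell^*}^{-1}Q$, so $\|P\|=1$ and, using the alignment hypothesis together with $L^{-1}\|X\|^2=1+o(1)$ (valid since $X\in\signalClass$), $\|P-L^{-1/2}X\|^2=1-2L^{-1/2}\langle X,P\rangle+L^{-1}\|X\|^2\le 2\eta+o(1)$. At the correct shift, $s_{\ell-\ell^*}=\langle X,R_{\ell^*}^{-1}Q\rangle\ge L^{1/2}(1-\eta)$ directly from the hypothesis. At any wrong shift, writing $s_{\ell'}=L^{-1/2}\langle X,R_m X\rangle+\langle X,R_m(P-L^{-1/2}X)\rangle$ with $m=\ell'-\ell+\ell^*\ne 0$, the first term is $o(L^{1/2})$ by the near-orthogonality of the shifts of $X$ (again $X\in\signalClass$) and the second is at most $\|X\|\,\|P-L^{-1/2}X\|\le L^{1/2}(\sqrt{2\eta}+o(1))$ by Cauchy--Schwarz; hence $\max_{\ell'\ne\ell-\ell^*}s_{\ell'}\le L^{1/2}(\sqrt{2\eta}+o(1))$.

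Next I would run a thresholding union bound over the noise. Fix a level $T$ and bound $\Pr(\text{error})\le\Pr\big(s_{\ell-\ell^*}+\sigma G_{\ell-\ell^*}\le L^{1/2}T\big)+\Pr\big(\max_{\ell'\ne\ell-\ell^*}(s_{\ell'}+\sigma G_{\ell'})\ge L^{1/2}T\big)$. The first term is a single Gaussian lower tail: using $L^{1/2}/\sigma=\sqrt{\alpha\log L}$ and $s_{\ell-\ell^*}\ge L^{1/2}(1-\eta)$, it is at most $L^{-\frac{\alpha}{2}(1-\eta-T)^2}$ for $T<1-\eta$. For the second term, each $s_{\ell'}\le L^{1/2}(\sqrt{2\eta}+o(1))$, so it is dominated by $\Pr(\max_{\ell'}G_{\ell'}\ge\sqrt{\alpha\log L}(T-\sqrt{2\eta}-o(1)))$; I would control this \emph{sharply} using $\E\max_{\ell'}G_{\ell'}\le\sqrt{2\log L}$ (the upper half of Lemma~\ref{lem:gaussain-max-exp}, which holds for dependent unit-variance Gaussians) followed by the Borell--TIS inequality (Lemma~\ref{lem:borell-tis}, variance proxy $1$), which yields $L^{-\frac{\alpha}{2}\left(T-\sqrt{2\eta}-\sqrt{2/\alpha}\right)^2}$. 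Balancing by taking $T=\tfrac12\big(1-\eta+\sqrt{2\eta}+\sqrt{2/\alpha}\big)$ makes both exponents equal to $\frac{\alpha}{8}\big(1-\eta-\sqrt{2\eta}-\sqrt{2/\alpha}\big)^2$. Finally, since $\eta<1/4$ gives $\eta+\sqrt{2\eta}\le 2\sqrt{\eta}$, and the base $1-2\sqrt{\eta}-\sqrt{2/\alpha}$ is positive precisely under the hypothesis $\sqrt{\eta}<\tfrac12(1-\sqrt{2/\alpha})$, this exponent is at least $\frac{\alpha}{8}\big(1-2\sqrt{\eta}-\sqrt{2/\alpha}\big)^2=\frac12\alpha\big(\tfrac12-\tfrac1{\sqrt{2\alpha}}-\sqrt{\eta}\big)^2$; adding the two $L^{-(\cdots)}$ contributions produces the factor $2$, with the $o(1)$ in the exponent absorbing the $o(1)$ signal corrections and the lower-order slack in $\E\max G_{\ell'}$.

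The hard part is obtaining the exponent exactly, rather than merely $L^{-\Omega(1)}$: a naïve union bound over the $L$ wrong shifts would leave a stray factor of $L$ (an additive $+1$ in the exponent of $L$), which must instead be folded into the base of the square through the \emph{sharp} $\sqrt{2\log L}$ scale of the Gaussian maximum — this is exactly where the $\sqrt{2/\alpha}$ term, and hence the threshold $\alpha=2$, enters. The remaining work is bookkeeping: keeping all estimates uniform in $Q$ so the conditioning on $X,Q$ is harmless, and choosing $T$ to equalise the correct-shift deviation and the wrong-shift maximum. The signal-geometry step is routine given $X\in\signalClass$.
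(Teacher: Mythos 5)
Your proof is correct and follows essentially the same route as the paper's: isolate the signal term at the correct shift ($\ge 1-\eta$) from the wrong shifts ($O(\sqrt{\eta})$, via $X\in\signalClass$ and the triangle/Cauchy--Schwarz inequalities), then threshold at the midpoint, bounding one side by a single Gaussian tail and the other by the sharp $\sqrt{2\log L}$ bound on the Gaussian maximum together with Borell--TIS. The only (immaterial) differences are that your wrong-shift signal bound $\sqrt{2\eta}$ is slightly tighter than the paper's $2\sqrt{\eta}-\eta$ before both are relaxed to the stated exponent, and that you correctly note the elementary upper bound on $\Expt\max_{\ell'}G_{\ell'}$ suffices where the paper invokes the Sudakov--Fernique comparison.
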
  

Given Lemma~\ref{lem:match-with-partial-template}, we propose the following estimation strategy. 
Suppose we would like to estimate~$X$ up to error $\rho(X,\widehat{X})\le \varepsilon<1$. Fix some $\eta>0$ with $\sqrt{\eta}<(1-\sqrt{2/\alpha})/2$ (for concreteness, say $\eta=(1-\sqrt{2/\alpha})^2/16$). We first apply the algorithm of Step 1 (Setion~\ref{sect:brute-force}) to obtain $\widehat{Q}\in\Sphere^{L-1}$ such that $\max_{\ell}\langle X,R_{\ell}^{-1}\widehat{Q}\rangle \ge 1-\eta$. Assuming that $n_1\ge \frac{c\log(1/\eta)}{\eta^2}\sigma^2=c_{\eta}\sigma^2$, we are successful with probability $1-o(1)$. Let $\ell^*$ be such that $\langle X,R^{-1}_{\ell^*}Q\rangle \ge 1-\eta$. Next, for $n_2$ new independent samples, we compute for each measurement $\widehat{\ell}_i = \argmax_{\ell}\langle Y_i,R_{\ell}\widehat{Q}\rangle$ and return the aligned sample average:
\begin{equation}\label{eq:aligned-sample-avg}
	\widehat{X} = \frac{1}{n_2}\sum_{i=1}^{n_2} R_{\widehat{\ell}_i}^{-1} Y_i.
\end{equation} 
Lemma~\ref{lem:match-with-partial-template} tells us that we should expect most of the aligned measurements $R_{\widehat{\ell}_i}^{-1} Y_i$ to be well-aligned with $R_{\ell^*}X$, that is, $R_{\widehat{\ell}_i}^{-1} Y_i = R_{\ell^*}X + \m{N}(0,\sigma^2 I)$. This means that, $\widehat{X}\approx R_{\ell^*}X + \m{N}(0,(\sigma^2/n_2)I)$, hence $\rho(X,\widehat{X})\le L^{-1}\|R_{\ell^*}X-\widehat{X}\|^2\approx \sigma^2/n_2$, which is smaller than $\varepsilon$ if $n_2\ge \sigma^2/\varepsilon$. We make this argument precise below:

\begin{proposition}\label{prop:averaging-guarantee}
	Assume that $X\in\signalClass$ and $\alpha>2$. Fix $\varepsilon>0$ and some $\eta<\frac12(1-\sqrt{2/\alpha})^2$. Let $\widehat{Q}\in \Sphere^{L-1}$ be the output of Step 1 (run with a tuning parameter $\eta$ and $n_1$ samples).   
	Let $\widehat{X}$ be as in equation~\eqref{eq:aligned-sample-avg}, computed from $n_2$ new samples.	
	Suppose that $n_1,n_2,L\to \infty$ with 
	\[
	n_1/\sigma^2 \to \gamma_1,\quad n_2/\sigma^2\to \frac{\gamma_2}{\varepsilon},
	\]
	where $\gamma_1$ and $\gamma_2$ are constants satisfying
	\[
	\gamma_1 = \gamma_1(\eta) \ge \frac{c\log(1/\eta)}{\eta^2},\quad \gamma_2>1,
	\]
	($c$ being the universal constant from Proposition~\ref{prop:brute-force-guarantee}).
	Then,
	\[
	\Pr\left( \rho(X,\widehat{X})\le \varepsilon \right) \to 1 .
	\]
\end{proposition}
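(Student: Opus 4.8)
The plan is to reduce to a second-moment calculation for the aligned average and to isolate the one genuinely delicate point. Throughout we treat $X\in\signalClass$ as a fixed nice signal (as permitted by the standing assumption) and condition on the success event of Step~1, namely $\max_\ell \langle X,R_\ell^{-1}\widehat{Q}\rangle\ge 1-\eta$ with maximizer $\ell^*$; by Proposition~\ref{prop:brute-force-guarantee} this holds with probability $1-o(1)$ because $n_1\ge \gamma_1\sigma^2$ with $\gamma_1\ge c\log(1/\eta)/\eta^2$. The key structural fact we will exploit is that $\widehat{Q}$ is built from the first batch and is therefore \emph{independent} of the second batch $\{(\ell_i,Z_i)\}_{i=1}^{n_2}$; we condition on $\widehat{Q}$ and regard it as a fixed, good template. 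Since $\rho(X,\widehat{X})\le L^{-1}\|\widehat{X}-R_{\ell^*}X\|^2$, it suffices to bound the right-hand side. Writing $R_{\widehat{\ell}_i}^{-1}Y_i = R_{\ell_i-\widehat{\ell}_i}X+\sigma R_{-\widehat{\ell}_i}Z_i$ and letting $\mathcal{E}=\{i:\widehat{\ell}_i\ne \ell_i-\ell^*\}$ be the set of mismatched samples, I would split
\[
\widehat{X}-R_{\ell^*}X \;=\; \underbrace{\frac{\sigma}{n_2}\sum_{i=1}^{n_2} R_{-\widehat{\ell}_i}Z_i}_{=:A}\;+\;\underbrace{\frac{1}{n_2}\sum_{i\in\mathcal{E}}\bigl(R_{\ell_i-\widehat{\ell}_i}X-R_{\ell^*}X\bigr)}_{=:B}.
\]
The crucial accounting point is that a mismatched sample contributes to the \emph{bias} $B$ only the signal discrepancy, of norm $\le 2\|X\|=O(\sqrt L)$, while its noise contribution sits inside the single averaged sum $A$ together with all the well-aligned samples; this is what avoids the lossy factor one gets by comparing against a genie estimator.

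First I would dispose of $B$. Applying Lemma~\ref{lem:match-with-partial-template} conditionally to each second-batch sample (legitimate because $\widehat{Q}\perp Y_i$), each sample is mismatched with probability at most $p_{\mathrm{err}}=2L^{-\beta+o(1)}$, where $\beta=\tfrac{\alpha}{2}\bigl(\tfrac12-\tfrac{1}{\sqrt{2\alpha}}-\sqrt{\eta}\bigr)^2>0$ under the standing assumption on $\eta$; moreover these events are conditionally independent given $\widehat{Q}$, so $m=|\mathcal{E}|$ is dominated by $\mathrm{Bin}(n_2,p_{\mathrm{err}})$ with conditional mean $n_2 p_{\mathrm{err}}=o(n_2)$. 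Since each summand of $B$ has norm $\le (2+o(1))\sqrt L$, we get $L^{-1/2}\|B\|\le (2+o(1))\,m/n_2=o_p(1)$. Thus $B$ is asymptotically negligible.

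The heart of the matter is $A$, for which I would aim to prove $L^{-1}\|A\|^2\to \sigma^2/n_2=\varepsilon/\gamma_2<\varepsilon$ in probability. Conditioning on $\widehat{Q}$ and expanding the square, the diagonal terms give exactly $\frac{\sigma^2}{n_2^2}\sum_i\Expt\|Z_i\|^2=\frac{\sigma^2 L}{n_2}$, the desired value. Because distinct second-batch samples are conditionally independent, every off-diagonal term equals $\|\mu\|^2$, where $\mu=\Expt[R_{-\widehat{\ell}_1}Z_1\,|\,\widehat{Q}]$ is the \emph{conditional drift} of a single aligned noise vector, so that $L^{-1}\Expt[\|A\|^2\,|\,\widehat{Q}]=\frac{\sigma^2}{n_2}\bigl(1+(n_2-1)\|\mu\|^2/L\bigr)$. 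Hence the entire problem reduces to showing $\|\mu\|^2=o(L/n_2)=o(\log L)$. A useful first step is a genie decomposition: since $\ell_1\perp Z_1$, the genie-aligned noise $R_{\ell^*-\ell_1}Z_1$ has conditional mean zero, so the correctly-aligned contribution to $\mu$ vanishes \emph{exactly}, leaving $\mu=\Expt[(R_{-\widehat{\ell}_1}Z_1-R_{\ell^*-\ell_1}Z_1)\Ind_{\{1\in\mathcal{E}\}}\,|\,\widehat{Q}]$.

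The main obstacle is precisely this bound $\|\mu\|^2=o(\log L)$. The naive estimate $\|\mu\|\le \Expt[2\|Z_1\|\Ind_{\{1\in\mathcal{E}\}}]\le 2\sqrt{L\,p_{\mathrm{err}}}$ yields only $\|\mu\|^2\lesssim L^{1-\beta+o(1)}$, which is \emph{useless} once $\beta<1$ (and $\beta$ can be made arbitrarily small for $\alpha$ close to $2$); one must instead capture cancellation in the \emph{direction} of the drift rather than bound it term by term. The approach I would take is to pass to the DFT basis (Lemma~\ref{lem:DFT-basis}), where a shift multiplies the $k$-th coordinate by a phase, so that $\|\mu\|^2=\sum_k\bigl|\Expt[e^{2\pi i\widehat{\ell}_1 k/L}(\m{F}^*Z_1)_k\,|\,\widehat{Q}]\bigr|^2$, and to argue mode by mode that each term is small: on the dominant correct-alignment event the winning shift is dictated by the signal, which overwhelms any single noise mode by a factor of order $\sqrt{\log L}$, so $\widehat{\ell}_1$ is nearly independent of $(\m{F}^*Z_1)_k$; the incorrect event carries probability only $p_{\mathrm{err}}$ and contributes a compensating $\sqrt{\log L}$ boost solely along the template orbit. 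Granting $\|\mu\|^2=o(\log L)$, the cross-terms are $o(1)$ relative to $\sigma^2 L/n_2$, giving $\Expt[\|A\|^2\,|\,\widehat{Q}]=\frac{\sigma^2 L}{n_2}(1+o(1))$; upgrading the mean to a high-probability statement then requires a concentration step (the diagonal concentrates by $\chi^2$-concentration as $L,n_2\to\infty$, and the off-diagonal fluctuations are controlled by a variance bound resting on the same drift estimate, since Markov alone would only give a nontrivial but non-vanishing probability). Combining $L^{-1}\|A\|^2=\varepsilon/\gamma_2+o_p(1)$ with the negligible bias $B$ and the $1-o(1)$ success probability of Step~1 yields $\rho(X,\widehat{X})\le L^{-1}\|A+B\|^2 = \varepsilon/\gamma_2+o_p(1)<\varepsilon$ with probability $1-o(1)$, which is the claim.
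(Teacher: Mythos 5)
Your setup and decomposition coincide with the paper's (your $A$ and $B$ are exactly its $\sigma\widehat{W}$ and $\widehat{\mu}-R_{\ell^*}X$, and your treatment of $B$ via Lemma~\ref{lem:match-with-partial-template} and a binomial bound on $|\m{E}|$ matches), and you correctly isolate the crux: $\widehat{\ell}_i$ depends on $Z_i$, so $R_{-\widehat{\ell}_i}Z_i$ is not $\m{N}(0,I)$ and the aligned noise average cannot be treated as Gaussian. But your resolution has two genuine gaps. First, the whole second-moment argument rests on the drift bound $\|\mu\|^2=o(\log L)$ for $\mu=\Expt[R_{-\widehat{\ell}_1}Z_1\mid\widehat{Q}]$, which you explicitly do not prove (``Granting $\|\mu\|^2=o(\log L)$\ldots''). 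As you yourself observe, the honest Cauchy--Schwarz bound gives only $\|\mu\|^2\lesssim L\,p_{\mathrm{err}}=L^{1-\beta+o(1)}$, which fails whenever your $\beta<1$ --- and $\beta$ is arbitrarily small for $\alpha$ near $2$. The DFT ``mode by mode'' heuristic (near-independence of the winning shift from each noise mode; a compensating $\sqrt{\log L}$ boost only along the template orbit) is precisely the statement that needs proof: it requires controlling the conditional law of $Z_1$ given the rare misalignment event, and nothing in the proposal supplies that. Second, even granting the drift bound, you only get $\Expt[L^{-1}\|A\|^2\mid\widehat{Q}]=(\varepsilon/\gamma_2)(1+o(1))$; Markov then yields $\Pr(L^{-1}\|A\|^2>\varepsilon)\le\gamma_2^{-1}+o(1)$, a constant rather than $o(1)$. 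You acknowledge that a variance bound is needed but do not provide it, and the fourth-moment quantities it requires (e.g.\ $\Expt\bigl[R_{-\widehat{\ell}_1}Z_1Z_1^{\T}R_{-\widehat{\ell}_1}^{\T}\bigr]$, which is not $I$) run into the same shift--noise dependence all over again.

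The paper sidesteps both issues with a decoupling device you may want to adopt: conditioned on the high-probability event $|\m{I}|\le K=L^{-c}n_2$, the realized aligned-noise average is one member of the \emph{deterministic} family $W(S,\bR)=\frac{1}{n_2}\sum_{i\notin S}R_{\ell_i}^{-1}Z_i+\frac{1}{n_2}\sum_{j=1}^{K}R_jZ_{i_j}$ indexed by subsets $S$ of size $K$ and shift tuples $\bR$. Each $W(S,\bR)$ is exactly $\m{N}(0,n_2^{-1}I)$ because $(S,\bR)$ is fixed in advance, so a $\chi^2$ tail bound plus a union bound over the $n_2^KL^K$ choices gives a high-probability bound on $\max_{S,\bR}\|W(S,\bR)\|$ directly; the union-bound cost $K\log(n_2L)=o(L)$ is negligible against the $L$ degrees of freedom, which is what makes $\gamma_2>1$ suffice. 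To complete your route you would have to prove the drift estimate and a matching variance bound; as written, the argument establishes the conclusion only up to these two unproven steps.
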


Proposition~\ref{prop:upper-bound-informal} now immediately follows from Lemma~\ref{lem:XinX} and Proposition~\ref{prop:averaging-guarantee}.

\section{Conclusions and extensions}
\label{sec:conclusion}

In this work we have studied the sample complexity of the MRA problem in the limit of large $L$. In this regime, we have shown that the parameter $\alpha=\frac{\sigma^2\log{L}}{L}$ plays a crucial role in characterizing the best attainable performance of any estimator.

As mentioned above, the MRA model is primarily  motivated by the cryo-EM technology to constitute the 3-D structure of biological molecules.
In the cryo-EM literature, it was shown that it is effective to assume that the molecule was drawn from a  Gaussian prior  with decaying power spectrum~\cite{scheres2012relion}. In addition, the 3-D rotations are usually not distributed uniformly over the group $SO(3)$. We now discuss briefly how these  different aspects can  be potentially incorporated into our framework.

\paragraph{Prior on the signal} Our model assumes a Gaussian i.i.d. prior on the signal $X$ to be reconstructed. While this assumption lends itself to a relatively clean analysis, and allows to compare our bounds on $\SC(L,\alpha,\varepsilon)$ to the simple benchmark $n^*_{\text{AWGN}}(L,\alpha,\varepsilon)$, many of our results can be generalized to treat other priors on $X$. 
In particular, all of our sample complexity lower bounds are based on lower bounding the mutual information between $X$ and $\hat{X}$ under the constraint $\mathbb{E}[\rho(X,\hat{X})]\leq \varepsilon$ on the one hand, and upper bounding $I(X;Y^n)$ under the MRA model, on the other hand. In Proposition~\ref{prop:mrardf} we have relied on the Gaussian rate distortion function to lower bound $I(X;\hat{X})$ for any estimator that achieves MSE at most $\varepsilon$. For $X$ whose distribution is not $\m{N}(0,I)$, we can either compute the corresponding rate distortion function explicitly, or simply apply Shannon's lower bound $R(D)\geq h(X)-\frac{L}{2}\log(2\pi e D)$, see~\cite{berger71}. Our upper bounds on $I(X;Y^n)$ in the regime $\alpha>1$ are based on Lemma~\ref{lem:gaussian-MI}, followed by lower bounding $I(R^n;X|Y^n)$ using Fano-like arguments. It is easy to see that~\eqref{eq:alphageq1MIbound} continues to hold, with $\leq$ instead of $=$, for any random variable~$X$ with $\mathbb{E}\|X\|^2\leq L$. Furthermore, the lower bounds on $I(R^n;X|Y^n)$ we derive in Section~\ref{subsec:MIviaTM} {remain valid whenever $\frac{\|X\|}{L}$ is sufficiently concentrated around $1$ and $\frac{\langle X,R_{\ell}X\rangle}{L}$ is sufficiently concentrated around $0$ for all $\ell=1,\ldots,L-1$. In particular, this is the case for (sufficiently light-tailed) i.i.d. zero-mean and unit variance distributions.} In light of the discussion above, we see that the parameter $\alpha=\frac{\sigma^2\log{L}}{L}$ is of great importance whenever the random signal $X$ {satisfies the above concentration requirements and has} differential entropy proportional to $L$.

\paragraph{Shift distribution} 
Assuming uniform prior on the i.i.d.\ shifts $R_{\ell_1},\ldots,R_{\ell_n}$ is a worst-case analysis. Indeed, for any given distribution, shifting all measurements again $R_{u_i}Y_i$, for {$u_i\stackrel{i.i.d.}{\sim}\Unif(\{0,\ldots,L-1\})$}  before feeding them to the estimator leads to~\eqref{eq:model}. 
However, previous works (for fixed $L$) showed that harnessing non-uniformity can make a big difference in the sample complexity~\cite{abbe2018multireference,sharon2020method}. 
With some effort, our upper bounds on $I(X;Y^n)$ in the regime $\alpha>1$ should also extend to treat this case. Here, the main challenge is to generalize Lemma~\ref{lem:not-many-candidiates} to the case of non-uniform distribution, i.e., to find a sharp estimate on the smallest possible size of a list of candidates for the true shift, which contains the true shift with high probability. 

\paragraph{Extension to other groups} 
We believe that many aspects of our information-theoretical analysis can be generalized to other (families of) discrete groups, denoted here by $\m{G}_L$, 
which satisfy the following properties (roughly speaking): (i) If $X$ is suitably generic and $g\ne h$, then $\langle gX, hX\rangle$ is very small - concretely, if $X\sim \m{N}(0,I)$, then $\Expt [\langle gX,hX\rangle]=0$; (ii) The size of the group~$|\m{G}_L|$ does not grow too fast (strictly less than exponentially fast in $L$). These conditions imply that whenever $X$ is isotropic and sufficiently light-tailed (e.g., sub-Gaussian), $\{gX\}_{g\in \m{G}}$ are ``almost orthogonal.'' The proper noise scaling to consider would then be ${\sigma^2 = \frac{L}{\alpha \log|\m{G}_L|}}$, with $\alpha=2$ being the critical noise level---this comes from the fact that ${\max_{g\in \m{G}_L} \langle gX,Z\rangle \approx \sqrt{2\log|\m{G}_L|}}$. 
For continuous compact groups , we suspect that one might be able to apply some of our arguments by cleverly discretizing the suitable group action.
Carrying out a program of this type seems as a promising direction for future research.

\section*{Acknowledgment}
E.R. and O.O. are supported in part by the ISF under Grant 1791/17. E.R. is supported in part by an Einstein-Kaye fellowship from the Hebrew University of Jerusalem. T.B. is supported in part by NSF-BSF grant no. 2019752, and  the Zimin Institute for Engineering Solutions Advancing Better Lives.

\bibliographystyle{plain}
\bibliography{ref_thesis}


\newpage
\appendix 

\section{Information Theoretic Background}\label{sec:ITback}

In this section we review some basic information theoretic definitions and results that are needed throughout this paper. The proofs of the results below can be found in any textbook on information theory, e.g.~\cite{cover2012elements}, and are therefore omitted.

For a discrete random variable $X\sim P_X$ supported on the alphabet $\m{X}$, the entropy is defined as
\begin{align*}
	H(X)=H(P_X):=\sum_{x\in\m{X}} P_X(x)\log\frac{1}{P_X(x)}=\Expt_{X\sim P_X}\left[\log\frac{1}{P_X(X)}\right].
\end{align*}
For a pair of random variables $(X,Y)\sim P_{XY}$, where $X$ is discrete, the conditional entropy of $X$ given $Y$ is defined as
\begin{align*}
	H(X|Y):=\Expt_{y\sim P_Y}\left[H(X|Y=y)\right]=\Expt_{y\sim P_Y}\left[H(P_{X|Y=y})\right].
\end{align*}

Similarly, if $X$ is a continuous random variable on $\RR^d$ with density $p_X$, its differential entropy is defined as
\begin{align*}
	h(X)=h(P_X):=\int_{x\in\RR^d} p_X(x)\log\frac{1}{p_X(x)} dx=\Expt_{X\sim P_X}\left[\log\frac{1}{p_X(X)}\right].
\end{align*}
For a pair of random variables $(X,Y)\sim P_{XY}$, where $X$ is continuous and has conditional density $p_{X|Y=y}$ for all $y\in\m{Y}$, where $\m{Y}$ is the alphabet of $Y$, the conditional entropy is defined as
\begin{align*}
	h(X|Y)=\Expt_{y\sim P_Y}\left[h(X|Y=y)\right]=\Expt_{y\sim P_Y}\left[h(P_{X|Y=y})\right].
\end{align*}

\begin{proposition}[Properties of entropy and differential entropy] \leavevmode
	
	\begin{enumerate}
		\item \textbf{Non-negativity of entropy:} For a discrete random variable $X$ the entropy satisfies $H(X)\geq 0$, with equality if and only if  $X$ is deterministic.\label{entropy:noneg}
		\item \textbf{Uniform distribution maximizes entropy:} 	For a discrete random variable $X$ supported on $\m{X}$
		\begin{align*}
			H(X)\leq\log|\m{X}|,
		\end{align*}
		and this is attained with equality if and only if  $X\sim\Unif(\m{X})$.\label{entropy:unif}
		
		\item \textbf{Gaussian distribution maximizes differential entropy under second moment constraints:} Suppose that the continuous random variable $X$ is  supported on $\RR^d$, and has covariance matrix $\Sigma=\Expt[(X-\Expt(X))(X-\Expt(X))^\T]$. Then, 
		\begin{align}
			h(X)\leq \frac{1}{2}\log\left((2\pi e)^d \det(\Sigma)\right),
		\end{align}
		and this is attained with equality if and only if $X\sim \m{N}(\mu,\Sigma)$ for some $\mu\in\RR^d$.\label{entropy:gauss}
		
		\item \textbf{Chain rule:} For discrete random variables $(X,Y)\sim P_{XY}$ we have
		\begin{align*}
			H(X,Y)=H(X)+H(Y|X)=H(Y)+H(X|Y).
		\end{align*} 
		For continuous random variables $(X,Y)\sim P_{XY}$, we have
		\begin{align*}
			h(X,Y)=h(X)+h(Y|X)=h(Y)+h(X|Y).
		\end{align*} \label{entropy:chainrule}
		
		\item \textbf{Concavity:} The functions $P_X\mapsto H(P_X)$ and $P_X\mapsto h(P_X)$ are concave. Consequently, conditioning reduces entropy, that is
		\begin{align*}
			H(X|Y)\leq H(X)
		\end{align*}
		if $X$ is discrete, and 
		\begin{align*}
			h(X|Y)\leq h(X)
		\end{align*}
		if $X$ is continuous. In both cases, the bounds are attained with equality iff $X$ and $Y$ are statistically independent. \label{entropy:concavity}
	\end{enumerate}
	\label{prop:entropy}
\end{proposition}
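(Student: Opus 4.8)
The plan is to derive all five items from two elementary building blocks: the non-negativity of relative entropy (Gibbs' inequality), $D(P\|Q)=\sum_x P(x)\log\frac{P(x)}{Q(x)}\ge 0$ with equality iff $P=Q$ (together with its density analogue), and the strict concavity of the scalar map $t\mapsto -t\log t$. Both are consequences of Jensen's inequality applied to $\log$, so I would first record these two facts and then treat the items essentially independently.

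Items~\ref{entropy:noneg} and~\ref{entropy:chainrule} are immediate from the definitions. For non-negativity, each summand $P_X(x)\log\frac{1}{P_X(x)}$ is non-negative since $P_X(x)\in[0,1]$, and the sum vanishes iff every $P_X(x)\in\{0,1\}$, i.e.\ $X$ is deterministic. For the chain rule, I would substitute $P_{XY}=P_X P_{Y|X}$ (respectively $p_{XY}=p_X p_{Y|X}$ in the continuous case) into $H(X,Y)=\Expt[\log\frac{1}{P_{XY}}]$ and split the logarithm; the symmetric roles of $X$ and $Y$ give the second equality. For item~\ref{entropy:unif}, I would write $\log|\m{X}|-H(X)=\sum_x P_X(x)\log\frac{P_X(x)}{1/|\m{X}|}=D\!\left(P_X\,\|\,\Unif(\m{X})\right)\ge 0$, with equality iff $P_X=\Unif(\m{X})$.

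Item~\ref{entropy:gauss} is the step requiring the most care, and I expect it to be the main obstacle, since the cancellation it relies on is not purely formal. I would let $\phi$ denote the density of $\m{N}(\mu,\Sigma)$ with the \emph{same} mean and covariance as $X$, and start from $0\le D(p_X\,\|\,\phi)=-h(X)-\Expt_{p_X}[\log\phi(X)]$. The key observation is that $-\log\phi(x)=\frac{d}{2}\log(2\pi)+\frac12\log\det\Sigma+\frac12(x-\mu)^\T\Sigma^{-1}(x-\mu)$ is a quadratic form in $x$, so $\Expt_{p_X}[-\log\phi(X)]$ depends on $X$ only through its first two moments; in particular the quadratic term contributes $\frac12\trace(\Sigma^{-1}\Sigma)=\frac{d}{2}$. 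Hence $\Expt_{p_X}[-\log\phi(X)]=\Expt_{\phi}[-\log\phi]=h(\m{N}(\mu,\Sigma))=\frac12\log\!\left((2\pi e)^d\det\Sigma\right)$, and rearranging $D\ge 0$ yields $h(X)\le h(\m{N}(\mu,\Sigma))$, with equality iff $p_X=\phi$.

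Finally, for item~\ref{entropy:concavity}, concavity of $P_X\mapsto H(P_X)$ follows coordinatewise from the concavity of $t\mapsto -t\log t$, and identically for $h$. To obtain that conditioning reduces entropy, I would apply Jensen's inequality to the concave functional $H(\cdot)$ together with the mixture identity $P_X=\Expt_y P_{X|Y=y}$: this gives $H(X)=H\!\left(\Expt_y P_{X|Y=y}\right)\ge \Expt_y H(P_{X|Y=y})=H(X|Y)$, with equality iff the conditional laws $P_{X|Y=y}$ coincide for $P_Y$-almost every $y$, i.e.\ iff $X$ and $Y$ are independent. The same argument with $h$ in place of $H$ settles the continuous case.
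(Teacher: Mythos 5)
The paper gives no proof of this proposition---it is a collection of standard facts explicitly deferred to textbooks such as Cover and Thomas. Your arguments are exactly the standard ones (non-negativity of relative entropy for items~\ref{entropy:unif} and~\ref{entropy:gauss}, with the key moment-matching step $\Expt_{p_X}[-\log\phi(X)]=h(\m{N}(\mu,\Sigma))$ handled correctly; direct computation for items~\ref{entropy:noneg} and~\ref{entropy:chainrule}; strict concavity of $t\mapsto -t\log t$ plus Jensen applied to the mixture identity for item~\ref{entropy:concavity}), and they are all correct.
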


We will also make use of Fano's inequality, 
as stated below.
\begin{proposition}[Fano's inequality]
	Let $(X,Y)\sim P_{XY}$, where $X$ is a discrete random variable supported on $\m{X}$. Then, for any estimator $\hat{X}=\hat{X}(Y)$ of $X$ from $Y$, we have
	\begin{align*}
		H(X|Y)\leq \log{2}+\Pr(X\neq \hat{X})\log|\m{X}|.
	\end{align*}
	\label{prop:fano}
\end{proposition}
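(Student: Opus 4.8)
The plan is to run the classical argument built on introducing the binary error indicator and expanding a joint conditional entropy via the chain rule in two different orders. First I would define the error event indicator $E=\Ind\{X\ne\hat{X}\}$, which is a Bernoulli random variable. The crucial observation is that $E$ is a deterministic function of the pair $(X,Y)$: since $\hat{X}=\hat{X}(Y)$ is determined by $Y$, knowing both $X$ and $Y$ determines whether $X=\hat{X}$, so $H(E\mid X,Y)=0$ by the non-negativity of entropy (Proposition~\ref{prop:entropy}.\ref{entropy:noneg}), as the conditional entropy of a deterministic quantity vanishes.

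Next I would expand the joint conditional entropy $H(E,X\mid Y)$ in two ways using the chain rule (Proposition~\ref{prop:entropy}.\ref{entropy:chainrule}):
\begin{align*}
H(E,X\mid Y)=H(X\mid Y)+H(E\mid X,Y)=H(E\mid Y)+H(X\mid E,Y).
\end{align*}
Substituting $H(E\mid X,Y)=0$ yields the identity $H(X\mid Y)=H(E\mid Y)+H(X\mid E,Y)$, so it remains to bound the two terms on the right-hand side.

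For the first term, since $E$ is binary we have $H(E\mid Y)\le H(E)\le \log 2$, using that conditioning reduces entropy (Proposition~\ref{prop:entropy}.\ref{entropy:concavity}) together with the fact that the uniform distribution maximizes entropy on a two-point alphabet (Proposition~\ref{prop:entropy}.\ref{entropy:unif}). For the second term I would split according to the value of $E$:
\begin{align*}
H(X\mid E,Y)=\Pr(E=0)\,H(X\mid E=0,Y)+\Pr(E=1)\,H(X\mid E=1,Y).
\end{align*}
On the event $E=0$ we have $X=\hat{X}(Y)$, so $X$ is a function of $Y$ and hence $H(X\mid E=0,Y)=0$. On the event $E=1$ we have $X\ne\hat{X}(Y)$, so conditioned on $Y$ the variable $X$ ranges over at most $|\m{X}|-1$ values, whence $H(X\mid E=1,Y)\le\log|\m{X}|$, again by Proposition~\ref{prop:entropy}.\ref{entropy:unif}.

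Combining these estimates gives $H(X\mid Y)\le \log 2+\Pr(X\ne\hat{X})\log|\m{X}|$, which is the claimed inequality. There is essentially no serious obstacle here: the only genuinely substantive step is the opening idea of introducing $E$ and exploiting that it is a deterministic function of $(X,Y)$, after which everything reduces to bookkeeping with the chain rule and the elementary entropy bounds already recorded in Proposition~\ref{prop:entropy}. The one place warranting a moment of care is the support-counting in the $E=1$ branch—the restriction of $X$ to $|\m{X}|-1$ values—since this is precisely what produces the multiplicative factor $\log|\m{X}|$ in the final bound.
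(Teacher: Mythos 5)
Your argument is correct and is precisely the classical textbook proof of Fano's inequality (introduce the error indicator $E$, expand $H(E,X\mid Y)$ by the chain rule in two orders, and bound $H(E\mid Y)\le\log 2$ and $H(X\mid E=1,Y)\le\log|\m{X}|$); the paper itself omits the proof, deferring to standard references such as \cite{cover2012elements}, which give exactly this argument. The only remark worth making is that your bound in the $E=1$ branch could be sharpened to $\log(|\m{X}|-1)$, but the weaker $\log|\m{X}|$ is all the stated inequality requires.
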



If both $(X,Y)\sim P_{XY}$ are discrete, the mutual information between $X$ and $Y$ is defined as
\begin{align*}
	I(X;Y)=H(X)-H(X|Y)=H(Y)-H(Y|X),
\end{align*}
and if they are both continuous 
\begin{align*}
	I(X;Y)=h(X)-h(X|Y)=h(Y)-h(Y|X).
\end{align*}
If one is discrete, say $X$, and the other continuous, say $Y$, then
\begin{align*}
	I(X;Y)=H(X)-H(X|Y)=h(Y)-h(Y|X).
\end{align*}
For a triplet of random variables $(X,Y,Z)\sim P_{XYZ}$, the conditional mutual information is defined as
\begin{align*}
	I(X;Y|Z)=\Expt_{z\sim P_Z}\left[I(X;Y|Z=z)\right],
\end{align*}
where $I(X;Y|Z=z)$ is the mutual information between $X$ and $Y$ under the distribution $(X,Y)\sim P_{XY|Z=z}$.

\begin{proposition}[Properties of Mutual Information]
	\leavevmode
	\begin{enumerate}
		\item \textbf{Non-negativity of mutual information:} $I(X;Y)\geq 0$ with equality iff $X$ and $Y$ are statistically independent.\label{mi:dperp}
		\item \textbf{Chain rule:} For $(X,Y,Z)\sim P_{XYZ}$ we have
		\begin{align*}
			I(X;Y,Z)=I(X;Y)+I(X;Z|Y)=I(X;Z)+I(X;Y|Z).
		\end{align*}\label{mi:chainrule}
		\item \textbf{Data processing inequality:} Assume $X-Y-Z$ is a Markov chain in this order, that is their joint distribution decomposes as $P_{XYZ}=P_{X} P_{Y|X}P_{Z|Y}$, then
		\begin{align*}
			I(X;Z)\leq I(X;Y).
		\end{align*}\label{mi:dpi}
		\item \textbf{Invertible functions:} For any function $f:\m{Y}\to\m{A}$, where $\m{A}$ is an arbitrary alphabet, we have $I(X;f(Y))\leq I(X;Y)$ with equality if $f$ is invertible.\label{mi:invertible}
		\item \textbf{Mutual information for memoryless channels:} Let $(X^n,Y^n)\sim P_{X^n Y^n}=P_{X^n}P_{Y^n|X^n}$ and assume the channel from $X^n$ to $Y^n$ is a product channel, that is $P_{Y^n|X^n}=\prod_{i=1}^n P_{Y_i|X_i}$. Then
		\begin{align*}
			I(X^n;Y^n)\leq \sum_{i=1}^n I(X_i;Y_i).
		\end{align*}
		This bound is attained with equality if $P_{X^n}=\prod_{i=1}^n P_{X_i}$, i.e., if $X^n$ is memoryless as well.\label{mi:memoryless}
		\item\label{mi:Gaussian} \textbf{Gaussian mutual information:} Let $X,Z\sim\m{N}(0,I)$ be statistically independent $L$-dimensional random vectors with i.i.d. standard normal entries. Then
		\begin{align*}
			I(X;X+\sigma Z)=\frac{L}{2}\log\left(1+\frac{1}{\sigma^2}\right).
		\end{align*}
	\end{enumerate}
	\label{prop:MIproperties}
\end{proposition}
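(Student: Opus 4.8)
The plan is to reduce every item to the entropy identities and inequalities already collected in Proposition~\ref{prop:entropy}, using the definition of mutual information as a difference of (differential) entropies. The unifying observation is that $I(X;Y)=D(P_{XY}\|P_X\otimes P_Y)$, the Kullback--Leibler divergence between the joint law and the product of marginals; most of the properties are then either an immediate consequence of non-negativity of divergence or of the chain rule for entropy.

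For item~\ref{mi:dperp}, I would write $I(X;Y)=\Expt\log\frac{dP_{XY}}{d(P_X\otimes P_Y)}$ and apply Jensen's inequality (equivalently, Gibbs' inequality) to the convex function $-\log$, giving $I(X;Y)\ge 0$, with equality exactly when $\frac{dP_{XY}}{d(P_X\otimes P_Y)}=1$ almost surely, i.e.\ when $X\indep Y$. Item~\ref{mi:chainrule} follows by expanding both sides through $I(X;Y)=H(X)-H(X|Y)$ (or its differential analogue) and invoking the chain rule of Proposition~\ref{prop:entropy}.\ref{entropy:chainrule}: both $I(X;Y)+I(X;Z|Y)$ and $I(X;Z)+I(X;Y|Z)$ collapse to $H(X)-H(X|Y,Z)=I(X;Y,Z)$. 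Item~\ref{mi:dpi} is then a one-line corollary: for a Markov chain $X-Y-Z$ we have $I(X;Z|Y)=0$, so the chain rule gives $I(X;Y)=I(X;Z)+I(X;Y|Z)\ge I(X;Z)$, the last step using non-negativity of the (conditional) mutual information from item~\ref{mi:dperp}. Item~\ref{mi:invertible} applies the data processing inequality to the Markov chain $X-Y-f(Y)$ to get $I(X;f(Y))\le I(X;Y)$; when $f$ is invertible, $Y=f^{-1}(f(Y))$ makes $X-f(Y)-Y$ a Markov chain as well, and applying the DPI in the reverse direction yields the matching inequality, hence equality.

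For item~\ref{mi:memoryless} I would write $I(X^n;Y^n)=H(Y^n)-H(Y^n|X^n)$; the memoryless (product) structure gives $H(Y^n|X^n)=\sum_{i=1}^n H(Y_i|X_i)$, while subadditivity of entropy (the chain rule together with ``conditioning reduces entropy,'' Proposition~\ref{prop:entropy}.\ref{entropy:concavity}) gives $H(Y^n)\le\sum_i H(Y_i)$. Subtracting yields $I(X^n;Y^n)\le\sum_i I(X_i;Y_i)$, with equality precisely when the $Y_i$ are independent, which holds once $X^n$ is a product measure. Item~\ref{mi:Gaussian} I would verify by direct computation: $I(X;X+\sigma Z)=h(X+\sigma Z)-h(X+\sigma Z\mid X)=h(\m{N}(0,(1+\sigma^2)I))-h(\sigma Z)$, and plugging the Gaussian differential-entropy formula $h(\m{N}(0,\Sigma))=\tfrac12\log((2\pi e)^L\det\Sigma)$ from Proposition~\ref{prop:entropy}.\ref{entropy:gauss} gives $\tfrac{L}{2}\log\frac{1+\sigma^2}{\sigma^2}=\tfrac{L}{2}\log(1+\sigma^{-2})$.

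None of the steps is genuinely hard, as these are classical facts; the only points demanding care are the equality conditions (in items~\ref{mi:dperp}, \ref{mi:invertible}, and~\ref{mi:memoryless}) and the measure-theoretic justification of the divergence representation when the variables are continuous or of mixed discrete/continuous type. In the mixed case one must interpret $I(X;Y)$ consistently as $H(X)-H(X|Y)=h(Y)-h(Y|X)$, so the main ``obstacle'' is really just ensuring that the entropy chain rule and the conditioning inequality are applied in the appropriate (discrete, differential, and mixed) settings, rather than any substantive difficulty.
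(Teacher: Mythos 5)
Your proposal is correct: every item is established by the standard textbook argument (divergence non-negativity, entropy chain rule, DPI via vanishing of $I(X;Z|Y)$ on a Markov chain, subadditivity of entropy for the memoryless bound, and the Gaussian differential-entropy formula for the last item). The paper itself offers no proof of this proposition—it explicitly defers to standard references such as Cover and Thomas—so your write-up simply supplies the canonical arguments the authors had in mind, and it does so without gaps.
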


For a random variable $X\sim P_X$ supported on alphabet $\m{X}$, a reconstruction alphabet $\hat{\m{X}}$ and a distortion measure $d:\m{X}\times \hat{\m{X}}\to\RR$, the rate distortion function (RDF) is defined as
\begin{align*}
	R(D)=\min_{P_{\hat{X}|X} \ : \ \mathbb{E}[d(X,\hat{X})]\leq D  }I(X;\hat{X}),
\end{align*}
where both $I(X;\hat{X})$ and $\mathbb{E}[d(X,\hat{X})]$ are evaluated with respect to the joint distribution $P_X P_{\hat{X}|X}$. The solution of the optimization problem above for the quadratic Gaussian case is well known, and is summarized in the proposition below.
\begin{proposition}[Quadratic Gaussian RDF]
	\leavevmode
	Let $X\sim\m{N}(0,\sigma^2 I)$ be a random vector in $\RR^L$, $\hat{\m{X}}=\RR^L$, and $d(x,\hat{x})=\frac{1}{L}\|x-\hat{x}\|^2$. Then,
	\begin{align}
		R(D)=\frac{L}{2}\log\left(\frac{\sigma^2}{D}\right).
	\end{align}
	In particular, if $X\sim\m{N}(0,\sigma^2 I)$ and $\hat{X}$ is such that $\frac{1}{L}\mathbb{E}\|X-\hat{X}\|^2\leq D$, then
	\begin{align*}
		I(X;\hat{X})\geq \frac{L}{2}\log\left(\frac{\sigma^2}{D}\right).
	\end{align*}
	\label{prop:GaussianRDF}
\end{proposition}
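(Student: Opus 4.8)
The plan is to prove the two inequalities $R(D)\ge \frac{d}{2}\log(\sigma^2/D)$ (the converse) and $R(D)\le \frac{d}{2}\log(\sigma^2/D)$ (achievability) separately, working throughout in the regime $D<\sigma^2$ that is relevant to the application. The ``in particular'' claim is exactly the converse, since by definition of the RDF we have $R(D)\le I(X;\hat{X})$ for any $\hat{X}$ satisfying the distortion constraint; so it suffices to lower-bound $I(X;\hat{X})$ uniformly over admissible test channels.

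For the converse, I would fix an arbitrary test channel $P_{\hat{X}|X}$ with $\frac1d\Expt\|X-\hat{X}\|^2\le D$ and set $D_i = \Expt(X_i-\hat{X}_i)^2$, so that $\frac1d\sum_i D_i\le D$. Writing $I(X;\hat{X})=h(X)-h(X|\hat{X})$ and using that $X$ has i.i.d.\ $\m{N}(0,\sigma^2)$ coordinates gives $h(X)=\frac d2\log(2\pi e\sigma^2)$ by Proposition~\ref{prop:entropy}.\ref{entropy:gauss}. The key step is to bound $h(X|\hat{X})\le \sum_i h(X_i|\hat{X}_i)$ using the chain rule (Proposition~\ref{prop:entropy}.\ref{entropy:chainrule}) together with the fact that conditioning reduces entropy (Proposition~\ref{prop:entropy}.\ref{entropy:concavity}); then, since translating $X_i$ by the conditionally known $\hat{X}_i$ leaves entropy unchanged and the Gaussian maximizes entropy for a given second moment, $h(X_i|\hat{X}_i)\le \frac12\log(2\pi e D_i)$. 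Combining yields $I(X;\hat{X})\ge \frac12\sum_i \log(\sigma^2/D_i)$, and a final application of concavity of $\log$ (Jensen's inequality) passes from the per-coordinate $D_i$ to the aggregate constraint $\frac1d\sum_i D_i\le D$, giving the claimed $\frac d2\log(\sigma^2/D)$.

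For achievability, I would exhibit a single test channel attaining this value by specifying the joint law through a backward Gaussian channel: take $\hat{X}\sim\m{N}(0,(\sigma^2-D)I)$ and independent $N\sim\m{N}(0,DI)$, and define $X=\hat{X}+N$. One checks that the $X$-marginal is then $\m{N}(0,\sigma^2 I)$ as required, that the distortion equals $\frac1d\Expt\|N\|^2=D$, and that $I(X;\hat{X})=h(X)-h(N)=\frac d2\log(\sigma^2/D)$, so this channel is admissible and matches the converse bound.

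Since this is the classical quadratic-Gaussian rate--distortion theorem, I do not anticipate a genuine obstacle. The only points requiring care are the converse chain of inequalities---in particular, justifying $h(X_i|\hat{X}_i)\le\frac12\log(2\pi e D_i)$ through translation invariance, conditioning, and the Gaussian maximum-entropy property---and the Jensen step that aggregates the coordinatewise distortions into the single constraint. One should also note that the stated formula is the RDF only for $D<\sigma^2$; for $D\ge\sigma^2$ the trivial estimator $\hat{X}=0$ already achieves distortion $\sigma^2\le D$ with $I(X;\hat{X})=0$, so $R(D)=0$ there.
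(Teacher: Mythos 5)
Your proof is correct: the converse via the chain rule, conditioning-reduces-entropy, the Gaussian maximum-entropy bound, and Jensen's inequality, together with achievability via the backward Gaussian test channel, is precisely the classical textbook argument. The paper does not prove this proposition at all---it explicitly omits the proof and refers the reader to standard references such as \cite{cover2012elements}---and your write-up (including the remark that $R(D)=0$ for $D\ge\sigma^2$) faithfully reproduces that standard proof.
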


\section{Some remarks on the capacity of the MRA channel}
\label{sec:capacity}

One can think of the model $Y=RX + \sigma Z$ as a communication channel whose input is $X$ and output is $Y$. A natural question in information theory, then, is to find the \emph{capacity} of this channel, defined as
\[
C_{\mathrm{MRA}}(L,\sigma^2) = \max_{P_X\,:\,\Expt\|X\|^2 \le L} I(X;Y) ,
\]
where the optimization is over all input distributions $X$ obeying a mean power constraint $\Expt\|X\|^2 \le L$. The channel capacity is a central quantity in information theory, and characterizes exactly the fundamental limits of data transmission over this channel: in each channel use, one can at best trasmit reliably $C_{\mathrm{MRA}}$ nats of information. 

Determining the capacity of the additive white Gaussian channel $Y=X+\sigma Z$ is a classical problem. It is well-known that 
\[
C_{\mathrm{AWGN}}(L,\sigma^2) = \frac{L}{2}\log(1+\sigma^{-2}),
\]
and the capacity-achieving distribution is i.i.d Gaussian $X\sim \m{N}(0,I)$. It is easy to see that $C_{\mathrm{MRA}} \le C_{\mathrm{AWGN}}$. Indeed, note that $Y=RX+\sigma Z \overset{d}{=} R(X+\sigma Z)$ (by rotation invariance), hence by the data processing inequality (Proposition~\ref{prop:MIproperties} item \ref{mi:dpi}), applied to the Markov chain $X - (X+\sigma Z) - R(X+\sigma Z)$, we get 
\[
I(X;X+\sigma Z) \ge I(X;R(X+\sigma Z)) = I(X;Y),
\]
from which $C_{\mathrm{AWGN}} \ge C_{\mathrm{MRA}}$ follows. 
At this point, one naturally wonders: (i) Can something non-trivial be said about the ratio $C_{\mathrm{MRA}}/C_{\mathrm{AWGN}}$; in particular, when is it approximately one  (say as $L,\sigma^2\to 
\infty$)? (ii) What is the capacity achieving input distribution for the MRA channel? In particular, is $X\sim \m{N}(0,I)$ the capacity achieving input distribution at some (every?) SNR regime?

At very high SNR, namely $\sigma^{-2} L = \omega(\log(L))$, Eq.~\eqref{eq:MI-entropy-gap} tells us that an i.i.d Gaussian input is ``essentially'' capacity achieving: if $X\sim \m{N}(0,I)$, then 
\[
I(X;Y) \ge C_{\mathrm{AWGN}} - \log(L) = \frac{L}{2}\log(1+\sigma^{-2}) - \log(L),
\]
and the loss of information, $\log(L)$ nats, is negligible compared to $\frac{L}{2}\log(1+\sigma^{-2})$.

At very low SNR, however, it turns out that an i.i.d input distribution is very much suboptimal. Consider the input distribution $X\sim \m{N}(0, {\bf 1}{\bf 1}^\T)$, that is, we allocate the entire power budget on the direction ${\bf 1}/\sqrt{L} = (1/\sqrt{L},\ldots,1/\sqrt{L})$. Since all the coordinates of $X$ are the same, the signal is completely invariant to the shifts, meaning that $X=RX$ exactly. In that case,
\[
I(X;Y) = I(X;X+\sigma Z) = \frac12 \log(1+ \sigma^{-2}L), 
\]
so that under \emph{extremely low SNR}, where $\sigma^{-2}L<1$ is a constant but small number, we have $I(X;Y) = \frac12 \sigma^{-2}L - O(\sigma^{-4}L^2)$. We can also expand $C_{\mathrm{AWGN}} = \frac12 L\sigma^{-2} + O(\sigma^{-4}L)$, so that $I(X;Y)$ matches $C_{\mathrm{AWGN}}$ to leading order in the SNR. On the other hand, recall that for an i.i.d input distribution $X\sim \m{N}(0,I)$, we have seen that if the SNR is $\sigma^{-2}L<\log(L)$ then already $I(X;Y)=o(1)$. Thus, i.i.d inputs are highly suboptimal at low SNR.  

Determining the channel capacity and the capacity-achieving input distribution, inbetween the extreme SNR regimes $\sigma^{-2}L = \omega(\log(L))$ and $\sigma^{-2}L=o(1)$, looks like an interesting but quite challenging task. An i.i.d input $\m{N}(0,I)$ has the advatange that it utilizes optimally the available degrees of freedom ($L$, the dimension); its disadvantage is that it does not play well with  the random shift, in that the signals $R_{\ell} X$ are very different to one another. On the other hand, the input $\m{N}(0,{\bf 1}{\bf 1}^\T )$ mitigates best the negative effect of the random shift (it is not affected by it at all), but this is done at the expense of the available degrees of freedom (one instead of $L$). 
It is interesting to find out how the capacity achieving distribution balances delicately between these two effects.

\section{Proof of Lemma~\ref{lem:signal-is-nice}}
\label{sec:proof_lemma_signal-is-nice}

Before getting to the proof, we recall the Hanson-Wright inequality:

\begin{lemma}[Hanson-Wright inequality for sub-Gaussian random vectors, Theorem 1.1 in \cite{rudelson2013hanson}]\label{lem:hanson-wright}
	Let $X$ be a random vector with independent entries such that for all $i$,
	\[
	\Expt X_i = 0,\quad \|X_i\|_{\psi_2} \le K ,
	\]
	where $\|X_i\|_{\psi_2} = \inf \left\{ s>0\,:\,\Expt e^{(X_i/s)^2} \le 2 \right\}$. Let $A$ be any matrix.
	Then,	there is a universal constant $c>0$ such that 
	\begin{align*}
		\Pr\left(\left| X^\T A X - \Expt(X^\T A X) \right| > t \right) \le 
		2\exp\left[ -c\min\left(\frac{t^2}{K^4\|A\|_F^2}, \frac{t}{K^2\|A\|}\right) \right] .
	\end{align*}
\end{lemma}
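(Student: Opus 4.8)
The plan is to prove this via the classical diagonal/off-diagonal decomposition of the quadratic form, bounding the moment generating function (MGF) of each piece, and then applying a Chernoff bound optimized over the free parameter to produce the two-regime tail. Throughout, $c, C, C'$ denote absolute constants that may change from line to line.

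First I would center and split the quadratic form:
\[
X^\T A X - \Expt[X^\T A X] = \underbrace{\textstyle\sum_i A_{ii}(X_i^2 - \Expt X_i^2)}_{S_{\mathrm{diag}}} \;+\; \underbrace{\textstyle\sum_{i\ne j}A_{ij}X_iX_j}_{S_{\mathrm{off}}} ,
\]
and, by Cauchy--Schwarz, reduce the task of bounding $\Expt e^{\lambda S}$ (where $S = S_{\mathrm{diag}}+S_{\mathrm{off}}$) to bounding $\Expt e^{2\lambda S_{\mathrm{diag}}}$ and $\Expt e^{2\lambda S_{\mathrm{off}}}$ separately. For the \emph{diagonal} part, each $X_i^2-\Expt X_i^2$ is a centered sub-exponential random variable with $\|X_i^2-\Expt X_i^2\|_{\psi_1}\le CK^2$ (a standard consequence of $\|X_i\|_{\psi_2}\le K$). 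Since the summands are independent, Bernstein's inequality yields $\Expt e^{2\lambda S_{\mathrm{diag}}}\le \exp(C\lambda^2 K^4\sum_i A_{ii}^2)$ for all $|\lambda|\le c/(K^2\max_i|A_{ii}|)$, and here $\sum_i A_{ii}^2\le \|A\|_F^2$ and $\max_i|A_{ii}|\le \|A\|$.

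The \emph{off-diagonal} part is the crux, and I expect the reduction of the resulting quadratic form to be the main obstacle. I would first \emph{decouple}: introduce an independent copy $X'$ of $X$ and invoke the standard MGF decoupling inequality to obtain $\Expt e^{2\lambda S_{\mathrm{off}}}\le \Expt e^{C\lambda\langle X, AX'\rangle}$, up to an absolute constant in the exponent. Conditioning on $X'$, the exponent $\langle X, AX'\rangle = \sum_i X_i (AX')_i$ is \emph{linear} in the independent sub-Gaussian coordinates $X_i$, so $\Expt_X e^{C\lambda\langle X, AX'\rangle}\le \exp(C'\lambda^2 K^2\|AX'\|_2^2)$. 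It remains to control $\Expt_{X'}\exp(C'\lambda^2 K^2\|AX'\|_2^2)$, where $\|AX'\|_2^2 = (X')^\T A^\T A X'$ is itself a sub-Gaussian quadratic form. The cleanest route is a Gaussian-comparison step, replacing $X'$ by a standard Gaussian $g$ at the cost of an absolute constant; then, diagonalizing in the singular basis, $\|Ag\|_2^2 = \sum_k s_k^2 g_k^2$ (with $s_k$ the singular values of $A$) has a product-of-$\chi^2$ MGF that converges precisely when $\lambda^2 K^2 s_k^2$ is small for every $k$. This gives $\Expt_{X'}\exp(C'\lambda^2 K^2\|AX'\|_2^2)\le \exp(C\lambda^2 K^4\|A\|_F^2)$ valid for $|\lambda|\le c/(K^2\|A\|)$, the operator norm entering through the convergence constraint $\max_k s_k^2 = \|A\|^2$.

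Finally I would combine the two MGF bounds to get $\Expt e^{\lambda S}\le \exp(CK^4\|A\|_F^2\,\lambda^2)$ for all $|\lambda|\le c/(K^2\|A\|)$, and apply Markov's inequality $\Pr(S>t)\le e^{-\lambda t}\,\Expt e^{\lambda S}$, optimizing over $\lambda$ in the admissible range. The unconstrained optimum is $\lambda^\star = t/(2CK^4\|A\|_F^2)$: if $\lambda^\star\le c/(K^2\|A\|)$ it produces the sub-Gaussian tail $\exp(-ct^2/(K^4\|A\|_F^2))$, and otherwise one takes $\lambda$ at the boundary $c/(K^2\|A\|)$, producing the sub-exponential tail $\exp(-ct/(K^2\|A\|))$. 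Taking the minimum of the two exponents and repeating the argument with $-A$ to capture the lower tail yields the two-sided bound with the factor $2$, as claimed.
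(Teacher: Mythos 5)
Your proposal is correct, but note that the paper does not prove this lemma at all: it imports it verbatim as Theorem~1.1 of Rudelson--Vershynin \cite{rudelson2013hanson}, so there is no in-paper proof to compare against. Your sketch faithfully reconstructs the standard proof from that cited source --- the diagonal/off-diagonal split with Bernstein for the sub-exponential diagonal, decoupling of the off-diagonal chaos via an independent copy, reduction to a Gaussian chaos $\sum_k s_k^2 g_k^2$ whose $\chi^2$-MGF convergence is exactly where $\|A\|$ enters the admissible range $|\lambda|\le c/(K^2\|A\|)$, and a Chernoff bound optimized over that constrained range to produce the two-regime $\min\left(\frac{t^2}{K^4\|A\|_F^2},\frac{t}{K^2\|A\|}\right)$ tail --- so it is essentially the same argument as the reference, with no gaps beyond routine constants.
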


It is immediate to verify that if $X\sim \m{N}(0,\sigma^2)$, then $\|X\|_{\psi_2} = \sigma/\sqrt{2\log 2} = c\sigma $. Also,
for any $\ell$, $\|R_{\ell}\|=1$ (since $R_\ell\in O(L)$) and therefore $\|R_{\ell}\|_F^2 \le L$. Also,
\[
\Expt( \langle X,R_{\ell}X\rangle ) = \trace(R_{\ell})=  \begin{cases}
	L \quad&\textrm{ if } \ell = 0, \\
	0 \quad&\textrm{ otherwise.}
\end{cases} 
\]
By the Hanson-Wright inequality, Lemma~\ref{lem:hanson-wright}, 
\begin{align*}
	\Pr\left(\left| \langle X,R_{\ell}X\rangle -\Expt\left( \langle X,R_{\ell}X\rangle\right)  \right| \ge L\kappa \right) \le 
	2\exp \left( -c\min((L\kappa)^2/L, L\kappa) \right) = 2\exp\left( -cL\min(\kappa,\kappa^2)\right).
\end{align*}
The claimed result follows by a union bound. 

\section{The spectrum of the operators $R_{\ell}$}
\label{sec:spectrum_shift_operator}

We recall some elementary facts about the spectrum of the operators $R_{\ell}$:
\begin{lemma}\label{lem:DFT-basis}
	The eigenvalues of $R_{\ell}+R_{\ell}^\T$ are exactly (with mutliplicities) $\lambda_{\ell,k} = 2\cos\left(\frac{2\pi}{L}\ell k\right)$, $k=0,\ldots,L-1$. Moreover,
	\[
	\sum_{k=0}^{L-1} \lambda_{\ell,k} = \begin{cases}
		L \quad&\textrm{ if } \ell = 0,\\
		0 \quad&\textrm{ otherwise.}
	\end{cases}
	\] 
\end{lemma}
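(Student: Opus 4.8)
The plan is to diagonalize the shift operators explicitly in the discrete Fourier basis. Let $\omega=e^{2\pi i/L}$ and let $f_0,\dots,f_{L-1}\in\CC^L$ be the DFT vectors, $(f_k)_j=L^{-1/2}\omega^{kj}$, which form an orthonormal basis of $\CC^L$. First I would verify that each $f_k$ is an eigenvector of $R_\ell$: since $(R_\ell x)_j=x_{(j+\ell)\bmod L}$, a one-line computation gives $(R_\ell f_k)_j=L^{-1/2}\omega^{k(j+\ell)}=\omega^{k\ell}(f_k)_j$, i.e.\ $R_\ell f_k=\omega^{k\ell}f_k$. The only convention-dependent point is the direction of the shift, so I would check that $R_\ell^\T$ is the inverse shift $R_{-\ell}$: indeed $R_\ell$ is a real permutation matrix, hence orthogonal, so $R_\ell^\T=R_\ell^{-1}$, and $(R_\ell^{-1}x)_j=x_{(j-\ell)\bmod L}$. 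Consequently $R_\ell^\T f_k=\omega^{-k\ell}f_k$.

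Adding the two relations yields $(R_\ell+R_\ell^\T)f_k=(\omega^{k\ell}+\omega^{-k\ell})f_k=2\cos\!\left(\tfrac{2\pi}{L}\ell k\right)f_k$, which exhibits $f_k$ as an eigenvector with eigenvalue $\lambda_{\ell,k}=2\cos\!\left(\tfrac{2\pi}{L}\ell k\right)$. Since $\{f_k\}_{k=0}^{L-1}$ is a full orthonormal basis and $R_\ell+R_\ell^\T$ is a real symmetric matrix, these are all of its eigenvalues, counted with multiplicity, and they are automatically real as they must be. (If a \emph{real} orthonormal eigenbasis is wanted, as in the application in Proposition~\ref{prop:MI-at-low-SNR} where one invokes the orthogonal invariance of $X\sim\m{N}(0,I)$, one pairs $f_k$ with $f_{L-k}$, which share the eigenvalue $\lambda_{\ell,k}=\lambda_{\ell,L-k}$, into real cosine/sine vectors; this refinement is not needed to prove the lemma itself.)

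For the trace identity I would use the elementary geometric sum $\sum_{k=0}^{L-1}\omega^{k\ell}=L$ when $\ell\equiv 0 \ (\mathrm{mod}\ L)$ and $0$ otherwise. Applying this to both $\omega^{k\ell}$ and $\omega^{-k\ell}$ gives, over $\ell\in\{0,\dots,L-1\}$, the dichotomy $\sum_{k=0}^{L-1}\cos\!\left(\tfrac{2\pi}{L}\ell k\right)=L\,\Ind_{\{\ell=0\}}$, which is exactly the quantity entering $\psi_\ell$ in Proposition~\ref{prop:MI-at-low-SNR}. Equivalently, this is $\trace(R_\ell)$, the number of fixed points of the permutation $j\mapsto (j+\ell)\bmod L$, namely $L$ for $\ell=0$ and $0$ otherwise; the full eigenvalue sum $\sum_k\lambda_{\ell,k}=\trace(R_\ell+R_\ell^\T)=2\trace(R_\ell)$ follows immediately.

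There is no substantive obstacle here: the statement is a routine spectral computation for circulant matrices. The only points to keep straight are the shift convention, so that $R_\ell^\T=R_{-\ell}$ and hence the eigenvalues of $R_\ell$ and $R_\ell^\T$ are complex conjugates, and the minor bookkeeping of the real-versus-complex eigenbasis noted above.
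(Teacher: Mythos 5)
Your proof is correct and follows essentially the same route as the paper's own: diagonalize $R_\ell$ in the DFT basis, add the conjugate eigenvalue relations for $R_\ell^\T=R_\ell^{-1}$ to obtain $\lambda_{\ell,k}=2\cos\left(\frac{2\pi}{L}\ell k\right)$, pass to a real orthonormal eigenbasis by symmetry, and evaluate the geometric sum for the trace identity. One small point, present in the paper's proof as well: your own computation correctly yields $\sum_k\lambda_{0,k}=2\trace(R_0)=2L$, so the stated value $L$ in the $\ell=0$ case is off by a factor of $2$ --- harmless, since only the cosine sum $\sum_{k}\cos\left(\frac{2\pi}{L}\ell k\right)=L\,\Ind_{\{\ell=0\}}$ is used downstream in Proposition~\ref{prop:MI-at-low-SNR}.
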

\begin{proof}
	Let $f_k\in\CC^L$, $k=0,\ldots,L-1$, be the DFT basis vectors, namely  $f_{k,j} = L^{-1/2}e^{\frac{2\pi i}{L} k j}$. It is immediate to verify that $f_{k}$ is an eigenvector of $R_{\ell}$ with eigenvalue $\lambda_{\ell,k}=e^{\frac{2\pi i}{L}\ell k}$:
	\[
	(R_\ell f_k)_j = (f_{k})_{j+\ell} = e^{\frac{2\pi i}{L}\ell k }(f_{k})_j .
	\]
	Hence, $(R_{\ell} + R_{\ell}^\T)f_{k} = (e^{\frac{2\pi i}{L}\ell k }+e^{-\frac{2\pi i}{L}\ell k })f_{k} = 2\cos\left(\frac{2\pi i}{L}\ell k\right) f_k$. This means that $\lambda_{\ell,k}$ are the eigenvalues of $R_{\ell}+R_{\ell}^\T$ as an operator $\CC^L \to \CC^L$. But since $R_{\ell}+R_{\ell}^\T$ is also diagonalizable over $\RR^L$ by an orthogonal matrix, there also exists a \emph{real} orthonormal eigenbasis $u_1,\ldots,u_L\in \RR^L$ with $(R_\ell + R_{\ell}^\T)u_k = \lambda_{\ell,k}u_k$ . 
	As for the last claim, it follows from $\sum_{k=0}^{L-1}\lambda_{\ell,k} = 2\Re \left\{ \sum_{k=0}^{L-1} e^{\frac{2\pi i}{L}\ell k}\right\}$, the right-hand side being $L$ when $\ell=0$ and zero otherwise. 
\end{proof}

\section{Proof of Lemma~\ref{lem:not-many-candidiates}}
\label{sec:proof_lem_not_many_candidates}
Suppose that the event $\m{A}=\m{A}(\kappa)$ from Lemma~\ref{lem:signal-is-nice} holds, meaning that $\left| L^{-1}\|X\|^2-1\right|\le \kappa$ and ${\max_{\ell'\ne 0} L^{-1}\left| \langle X,R_{\ell'}X \rangle \right| \le \kappa}$. 
Observe that 
\[
R\notin \m{S}_\tau \Leftrightarrow \frac{\sigma \langle X,R^{-1}Z\rangle }{\|X\|^2} < -\tau .
\]
Conditioned on $X$, 
\[
\frac{\sigma \langle X,R^{-1}Z\rangle }{\|X\|^2} \sim \m{N}\left(0, \sigma^2/\|X\|^2 \right),
\]
and under $\m{A}$, this variance is $\sigma^2/\|X\|^2 = \frac{L}{\|X\|^2\cdot\alpha \log(L)}\le \frac{1}{\alpha(1-\kappa) \log(L)}$. Thus,
\[
\Pr\left(R\notin \m{S}_\tau \,\big|\, \m{A}\right) \le e^{-\frac12 \tau^2 \alpha(1-\kappa) \log(L) } = L^{-\frac12 \tau^2 \alpha(1-\kappa)} .	
\]
Now, suppose that $R'\ne R$. Then
\begin{align*}
	\Pr\left(R'\in \m{S}_\tau \,\big|\, \m{A},R\right) 
	&= \Pr\left( \frac{\langle X,(R')^{-1}R X\rangle }{\|X\|^2} + \frac{\sigma \langle X, (R')^{-1}Z \rangle }{\|X\|^2} \ge 1-\tau \, \Big|\, \m{A}, R \right) \\
	&\le \Pr\left( \frac{\sigma \langle X, (R')^{-1}Z \rangle }{\|X\|^2} \ge 1-\tau-\frac{\kappa}{1-\kappa}  \,\Big|\, \m{A}, R \right) \\
	&\le L^{-\frac12 \alpha(1-\kappa)\left(1-\tau-\frac{\kappa}{1-\kappa}\right)^2},
\end{align*}
where we used the fact that under $\m{A}$, $\frac{\langle X,(R')^{-1}R X\rangle }{\|X\|^2}\le  \kappa/(1-\kappa)$, and uniformly bounded the variance of $\frac{\sigma \langle X, (R')^{-1}Z \rangle }{\|X\|^2}$ conditioned on $X$ and under $\m{A}$ as before. Since the bound above is uniform in $R$, of course, 
\[
\Pr\left(R'\in \m{S}_\tau \,\big|\, \m{A}\right) \le L^{-\frac12 \alpha(1-\kappa)\left(1-\tau-\frac{\kappa}{1-\kappa}\right)^2}.
\]
Now,
\begin{align*}
	\Expt\left[|\m{S}_\tau| \,\Big|\, \m{A}\right] \le 1 + (L-1)\cdot L^{-\frac12 \alpha(1-\kappa)\left(1-\tau-\frac{\kappa}{1-\kappa}\right)^2} \le 1 + L^{1-\frac12 \alpha(1-\kappa)\left(1-\tau-\frac{\kappa}{1-\kappa}\right)^2} .
\end{align*}
Setting $M=L^{1-\frac12 \alpha(1-\kappa)\left(1-\tau-\frac{\kappa}{1-\kappa}\right)^2 + \zeta}$, by Markov's inequality, and assuming $\alpha\le 2$, 
\begin{align*}
	\Pr\left( |\m{S}_\tau|\ge M \, \big|\,\m{A} \right) \le \frac{1 + L^{1-\frac12 \alpha(1-\kappa)\left(1-\tau-\frac{\kappa}{1-\kappa}\right)^2 }}{L^{1-\frac12 \alpha(1-\kappa)\left(1-\tau-\frac{\kappa}{1-\kappa}\right)^2 + \zeta}} \le 2L^{-\zeta} .
\end{align*}
Combining both estimates and taking a union bound,
\begin{align*}
	\Pr\left( R\notin \m{S}_\tau \textrm{ or } |\m{S}_\tau|> M \right) 
	&\le 	\Pr\left(R\notin \m{S}_\tau \,\big|\, \m{A}\right) + \Pr\left( |\m{S}_\tau|\ge M \,\big|\,\m{A} \right) + \Pr\left(\overline{\m{A}}\right) \\
	&\le L^{-\frac12 \alpha(1-\kappa)\left(1-\tau-\frac{\kappa}{1-\kappa}\right)^2} + 2L^{-\zeta} + \Pr\left(\overline{\m{A}}\right) \\
	&\le L^{-\frac12 \alpha(1-\kappa)\left(1-\tau-\frac{\kappa}{1-\kappa}\right)^2} + 2L^{-\zeta} + 2Le^{-c L\min(\kappa,\kappa^2) },
\end{align*}
where the last inequality follows from Lemma~\ref{lem:signal-is-nice}.

\section{Proof of Theorem~\ref{thm:pmra} (Projected MRA)}\label{sect:pmra-proof}

In this section, we sketch a proof of Theorem~\ref{thm:pmra}. Recall that in the PMRA model, the measurements $Y_1,\ldots,Y_n\in \RR^{L'}$ have the form
\[
Y_i = \pi_S R_{\ell_i} X + \sigma {Z_i},
\]
where {$X\sim \m{N}(0,I)$} {is $L$-dimensional, $Z_i\stackrel{i.i.d.}{\sim}N(0,I)$ are $L'$-dimensional}, and $\pi_S : \RR^L \to \RR^{L'}$ is the projection onto the coordinates in $S\subset [L]$, with $|S|=L'$. Here, the set $S$ is fixed across all samples, and is a priori known.

As before, we are interested in asymptotics as $L,L',\sigma^2\to \infty$ simultaneously. In the PMRA, 
{we paramterize the noise as}
$\sigma^2=\frac{L'}{\alpha\log(L)}${; this is smaller than how we scaled $\sigma^2$ in MRA by a factor of $L'/L$}. The numerator $L'$ comes from the total ``signal energy'' that each measurement sees: $\Expt \|\pi_S R_{\ell_i}X\|^2=L'$, whereas the $\log(L)$ factor is $\log$ the size of the group of shifts (and therefore is the same as in MRA).

In the interest of space, we only provide a brief sketch for the proof of Theorem~\ref{thm:pmra}. We essentially follow the steps of the proof of Theorem~\ref{thm:highsnrregime}, outlining what modifications need to be made for the argument to work for the PMRA model.

\paragraph{Template matching} The MAP estimator is given by 
\[
\widehat{R}_{\mathrm{MAP}} = \argmax_{\ell'} \frac{\langle \pi_S R_{\ell}'X, Y \rangle }{L'} = \argmax_{\ell'} \left\{ \frac{\langle \pi_S R_{\ell'}X, \pi_S R_\ell X \rangle }{L'} + \frac{\langle \pi_S R_{\ell'}X, \sigma Z \rangle }{L'}\right\} .
\]
One can prove, as in Lemma~\ref{lem:signal-is-nice}, that with high probability
\[
\max_{\ell,\ell'}\left| \frac{\langle \pi_S R_{\ell'} X, \pi_S R_\ell X\rangle }{L'} - \Ind_{\{\ell=\ell'\}}\right| = o(1)
\]
holds. Note that the assumption that $L$ is not too large  with respect to $L'$ (strictly less than exponential in $L'$) is \emph{essential} here: following the proof of Lemma~\ref{lem:signal-is-nice}, we can obtain a concentration bound of the form 
\[
\Pr\left( \left|(L')^{-1}\langle \pi_S R_{\ell'} X, \pi_S R_\ell X\rangle- \Ind_{\{\ell=\ell'\}}\right|>\kappa \right) \le \exp(-cL'\min(\kappa,\kappa^2)),
\]
which needs to beat a union bound over all indices $\ell,\ell'$. Having shown that, we can compare the maximum of the noise term to the maximum of a sequence of standard Gaussians (using Lemmas~ \ref{lem:gaussain-max-exp}, \ref{lem:sudakov-fernique} and \ref{lem:borell-tis}), to deduce
\[
\max_{\ell'=0,\ldots,L-1} \frac{\langle \pi_S R_{\ell'}X, \sigma Z \rangle }{L'} \approx \frac{1}{\sqrt{\alpha \log(L)}} \max_{\ell'=0,\ldots,L-1} \left\langle \frac{\pi_S R_{\ell'}X}{{\|\pi_S R_{\ell'}X\|}} , Z\right\rangle \approx \sqrt{\frac{2}{\alpha}} .
\]
Since $\frac{\langle \pi_S R_{\ell}X, \pi_S R_\ell X \rangle }{L'} \approx 1$, we conclude that the MAP estimator is successful consistently when $\alpha>2$ and fails consistently when $\alpha<2$.

\paragraph{Lower bound at high SNR ($\alpha>2$)} The lower bound on the sample complexity follows from {applying Corollary~\ref{prop:IT-lower-bound} with the following easy bound on the multi-sample MI $I(X;Y^n)$:
	\begin{align}
		I(X;Y^n) \le \frac{L}{2}\log\left( 1 + \frac{L'}{L}n\sigma^{-2}\right) .\label{eq:singleMI_PMRA}
	\end{align}
	The idea for proving~\eqref{eq:singleMI_PMRA} is as follows.} Suppose that the shifts $R_{\ell_1},\ldots,R_{\ell_n}$ were all known. Each measurement $Y_i$ contains noisy measurements of $L'$ out of $L$ coordinates of $X$, and note that if we knew the shifts, we would also know to which coordinate of $X$ each coordinate of $Y_i$ corresponds. For each coordinate $i\in [L]$, let $n_i$, be the total number of (noisy) measurements of $X_i$ available across all samples $Y_1,\ldots,Y_n$. Thus, assuming the shifts are given and known, we can think of the problem as follows: we have $L$ independent standard (one dimensional) Gaussians, $X_1,\ldots,X_L$; for each $i$, we measure $n_i$ measurements of $X_i$ through an AWGN. Thus,
\begin{align*}
	I(X;Y^n|R^n=r^n) \le \sum_{i=1}^L \frac{1}{2}\log\left(1+n_i(r^n) \sigma^{-2}\right) &\le \frac{L}{2} \log\left(1 + \frac{n_1(r^n)+\ldots+n_L(r^n)}{L}\sigma^{-2}\right)\\&=\frac{L}{2}\log\left(1+ \frac{L'}{L}n\sigma^{-2}\right),
\end{align*}
where the second inequality follows from convexity. Averaging over all possible shifts $r^n$, $I(X;Y^n)\le I(X;Y^n|R^n)\le \frac{L}{2}\log\left(1+\frac{L'}{L}n\sigma^2\right)$, as claimed.

\paragraph{Lower bound at low SNR ($\alpha \le 2$)} We can {reiterate} the Fano-type argument of Proposition~\ref{prop:MI-alpha-le-2} without substantial modifications. { 
	The single-sample MI from equation~\eqref{eq:MI-entropy-gap} now becomes
	\[
	I(X;Y) = \frac{L'}{2}\log(1+\sigma^{-2}) - \log(L) + H(R|X,Y).
	\]
	Lemma~\ref{lem:not-many-candidiates} goes through almost verbatim with
	\begin{equation*}
		\m{S}_{\tau} = \left\{ R'\,:\, \frac{\langle \pi_S R' X,Y\rangle }{\|\pi_S R' X\|^2} \ge 1-\tau \right\} .
	\end{equation*}
	instead of the definition given in~\eqref{eq:S-tau}, and with the first term in the left-hand-side of~\eqref{eq:listsizeUB} decaying exponentially fast in $L'$, rather than $L$. 
	Thus,} by the same argument as in the proof of Proposition~\ref{prop:MI-alpha-le-2}, we bound
\[
H(R|X,Y) \le \left(1-\frac{\alpha}{2}+o(1)\right)\log(L). 
\]
Expanding $\frac{L'}{2}\log(1+\sigma^{-2})=\frac{L'\sigma^{-2}}{2}+O(L'\sigma^{-4})$ and plugging $\sigma^2=L'/(\alpha\log(L))$, we conclude that $I(X;Y)=o(\log(L))$. Combining with Proposition~\ref{prop:singlesampleMIbasedBounds},
\[
\SCp(L,\alpha,\varepsilon)\ge \log\left(1/\varepsilon\right) \cdot \frac{L}{2I(X;Y)}(1+o(1)) = \omega\left(\frac{L}{\log(L)}\log\left(1/\varepsilon\right)\right) = \omega\left(\frac{L}{L'} \sigma^2\log\left(1/\varepsilon\right)\right).
\]

%
%

\section{Proofs of Section~\ref{sec:upper-bound}}
\label{sec:proofs_sec_algorithm}

\subsection{Proof of Lemma~\ref{lem:XinX}}
\label{sect:proof-lem:XinX}

Recall that $\kappa$ was chosen so that the first constraint holds with probability $1-o(1)$.
All that remains, then, is to show that $\|\m{F}^{*} X\|_\infty \le \sqrt{10\log(L)}$ holds with high probability. Let $f_\ell \in \CC^L$ be the $\ell$-th DFT basis vector, so that $(\m{F}^*X)_\ell = \langle X,f_\ell \rangle$. Observe that the real and imaginary parts of $(\m{F}^*X)_\ell$ are both Gaussians, with variances bounded by $1$. Hence,
\begin{align*}
	\Pr\left(|(\m{F}^*X)_\ell|^2 > 10\log(L)\right) 
	&\le \Pr\left(|\Re(\m{F}^*X)_\ell|^2 > 5\log(L)\right) + \Pr\left(|\Im(\m{F}^*X)_\ell|^2 > 5\log(L)\right) \\
	&\le 4e^{-\frac{5}{2}\log(L)} = 4L^{-5/2},
\end{align*}
so $\Pr\left(\|\m{F}^* X\|_{\infty}>\sqrt{10\log(L)}\right)\le L\cdot 4L^{-5/2}=4L^{-3/2}=o(1)$.

\subsection{Proof of Lemma~\ref{lem:not-many-large-shifts}}

Bounding $\Ind[|X|\geq a]\leq \frac{|X|}{a}$, as in the proof of Markov's inequality, we have
\[
N_Q(h) = \sum_{\ell=0}^{L-1}\Ind\left[L^{-1}\left| \langle X,R_{\ell}^{-1} Q\rangle \right|^2 \ge h^2 \right] \le h^{-2}L^{-1}\sum_{\ell=0}^{L-1}\left| \langle X,R_{\ell}^{-1} Q\rangle \right|^2 .
\]
We may write 
\[
L^{-1} \sum_{\ell=0}^{L-1} \langle X, R_{\ell}^{-1}Q\rangle ^2 = Q^\T \left(L^{-1} \sum_{\ell=0}^{L-1} (R_{\ell}X)(R_{\ell}X)^\T \right) Q \le \|\m{M}(X)\| ,
\]
where $\m{M}(X)$ is the operator 
\[
\m{M}(X) = L^{-1}\sum_{\ell=0}^{L-1} (R_{\ell}X)(R_{\ell}X)^\T . 
\]
It is convenient to write $\m{M}(X)$ in terms of the DFT basis $f_0,\ldots,f_{L-1}$
\begin{align*}
	\m{M}(X) 
	&= L^{-1}\sum_{\ell=0}^{L-1} \sum_{k,j=0}^{L-1} e^{\frac{2\pi}{L}\ell(k-j)} \langle X,f_k\rangle \langle X,f_j\rangle^* f_k f_j^* \\
	&= \sum_{k=0}^{L-1} |\langle X,f_k\rangle|^2 f_k f_k^* , 
\end{align*}
\revAdd{where we interchanged the order summation and used $\sum_{\ell=0}^{L-1}e^{\frac{2\pi}{L}\ell(k-j)} = L\cdot \Ind[j=k]$. 
	Thus, we see that the DFT basis diagonalizes $\m{M}(X)$, so that its eigenvalues are exactly the magnitudes of the fourier coefficients of $X$, squared.} In particular, $\|\m{M}(X)\|= \|\m{F}^*X\|_\infty^2 \le 10\log(L)$. 

\subsection{Proof of Lemma~\ref{lem:bad-has-small-score}}

Note that $s_1(Q),\ldots,s_n(Q)$ are i.i.d Bernoulli-distributed. Write 
\begin{align*}
	\Pr(s_i(Q)=1) 
	&= \Pr\left( \exists \ell \,:\,L^{-1/2} \langle Y_i,R_{\ell}^{-1}Q\rangle \ge 1-\frac34 \eta \right) \\
	&=  \Pr\left( \exists \ell \,:\,L^{-1/2} \left[ \langle X,R_{\ell}^{-1}Q\rangle + \sigma\langle Z,R_\ell^{-1}Q \rangle \right] \ge 1-\frac34 \eta \right) \\
	&= \Pr\left( \exists \ell \,:\,L^{-1/2} \langle X,R_{\ell}^{-1}Q\rangle + \frac{\langle Z,R_\ell^{-1}Q \rangle}{\sqrt{\alpha \log(L)}} \ge 1-\frac34 \eta \right) .
\end{align*}
Let 
\[
\m{L}(Q) = \left\{ \ell \,:\, L^{-1/2}\langle X,R_{\ell}^{-1}Q\rangle \ge 1-\sqrt{\frac{2}{\alpha}}-\frac{7\eta}{8}  \right\} 
\]
be the set of shifts for which $L^{-1/2}\langle X,R_{\ell}^{-1}Q\rangle$ is somewhat large. For $S\subset[L]$, set 
\[
p(S) = \Pr\left( \exists \ell \in S\,:\, L^{-1/2} \langle X,R_{\ell}^{-1}Q\rangle + \frac{\langle Z,R_\ell^{-1}Q \rangle}{\sqrt{\alpha \log(L)}} \ge 1-\frac34 \eta\right),
\]
so that $\Pr(s_i(Q)=1)\le p(\m{L}(Q)) + p(\overline{\m{L}{(Q)}})$. Since
\[
\Expt \max_{\ell\in \overline{\m{L}(Q)}}\langle Z,R_{\ell}^{-1}Q\rangle \le \sqrt{2\log|\overline{\m{L}(Q)}|} \le \sqrt{2\log(L)},
\]
(since each $\langle Z,R_{\ell}^{-1}Q\rangle \sim \m{N}(0,1)$; see comment after Lemma~\ref{lem:gaussain-max-exp}), 
we apply Lemma~\ref{lem:borell-tis} to get 
\begin{align*}
	p(\overline{\m{L}{(Q)}})
	&\le \Pr\left( \exists \ell \in \overline{\m{L}{(Q)}} \,:\, \frac{\langle Z,R_\ell^{-1}Q \rangle}{\sqrt{\alpha \log(L)}} \ge \sqrt{2/\alpha} + \eta/8 \right) \\
	&\le \Pr\left( \max_{\ell \in \overline{\m{L}{(Q)}}} \langle Z,R_\ell^{-1}Q \rangle \ge \Expt\left[ \max_{\ell \in \overline{\m{L}{(Q)}}} \langle Z,R_\ell^{-1}Q \rangle \right] +  \frac18\eta \sqrt{\alpha \log(L)} \right) \\
	&\le 2e^{-\frac12 (\eta/8)^2\alpha\log(L)   } 
	= 2L^{-\eta^2\alpha/128} . 
\end{align*}
For the other term,
\begin{align*}
	p(\m{L}(Q)) 
	&\le |\m{L}(Q)| \Pr\left( \frac{\langle Z,R_\ell^{-1}Q \rangle}{\sqrt{\alpha \log(L)}} \ge \eta/4 \right) 
	\le |\m{L}(Q)| \cdot e^{- \frac12 (\eta/4)^2\alpha \log(L)} 
	= |\m{L}(Q)| \cdot L^{-\eta^2\alpha/32} .
\end{align*}
By Lemma~\ref{lem:not-many-large-shifts}, 
\[
|\m{L}(Q)| \le \frac{10\log(L)}{\left( 1-\sqrt{\frac{2}{\alpha}}-\frac{7\eta}{8}\right)^2} \le \frac{640\log(L)}{ \left( 1-\sqrt{\frac{2}{\alpha}}\right)^2 },
\]
where we also used $\eta<1-\sqrt{2/\alpha}$. Combining,
\[
\Pr(s_i(Q)=1) 
\le 2L^{-\eta^2\alpha/128} + \frac{640\log(L)}{ \left( 1-\sqrt{\frac{2}{\alpha}}\right)^2 } L^{-\eta^2\alpha/32} 
\le \left(2+\frac{640}{ \left( 1-\sqrt{\frac{2}{\alpha}}\right)^2 }\right)L^{-\eta^2\alpha/128} ,
\] 
where we used the assumption that $L$ is large enough so that $\log(L)\le L^{3\eta^2\alpha/128}$. 
We use
\[
\Pr\left(\mathrm{Binom}(n_1,p)\ge k\right) = \sum_{t=k}^{n_1} \binom{n_1}{t}p^t(1-p)^{n_1-t} \le p^k \sum_{t=k}^{n_1} \binom{n_1}{t} \le 2^{n_1} p^k .
\]
Since $s(Q)\sim \mathrm{Binom}\left(n_1,\Pr(s_i(Q)=1)\right)$,
\begin{align*}
	\Pr\left(s(Q) \ge n_1/2 \right) \le \left[ 16\left(2+\frac{640}{ \left( 1-\sqrt{\frac{2}{\alpha}}\right)^2 }\right)L^{-\eta^2\alpha/128} \right]^{n_1/2}
\end{align*}
as claimed.

\subsection{Proof of Lemma~\ref{lem:good-has-large-score}}

Let $\ell$ be such that $\langle X,R_{\ell}^{-1}Q\rangle \ge 1-5\eta/8$. Then
\begin{align*}
	\Pr(s_i(Q)=0) 
	&\le \Pr\left( \frac{\langle Z,R_\ell^{-1}Q \rangle}{\sqrt{\alpha \log(L)}} < (5/8-3/4)\eta \right)  
	\le L^{-\eta^2\alpha(5/8-3/4)^2} = L^{-\eta^2\alpha/64} \le 1/4.
\end{align*}
Thus, using Hoeffding's inequality,
\begin{align*}
	\Pr\left( s(Q)<n_1/2 \right) \le \Pr\left(\mathrm{Binom}(n_1,3/4)<n_1/2\right) \le e^{-n_1/32}.
\end{align*}

\subsection{Proof of Lemma~\ref{lem:match-with-partial-template}}

\revDel{To simplify the notation, assume without loss of generality that $\ell=\ell^*=0$.}
By the discussion right before the statement of Lemma~\ref{lem:match-with-partial-template} in the main paper,
	the assumption $L^{-1/2}\langle X,R_{\ell^*}^{-1}Q\rangle = L^{-1/2}\langle X,R_{-\ell^*}Q\rangle\ge 1-\eta$ implies that for any other shift,	
	$\ell'\ne -\ell^*$, we already have $L^{-1/2}\langle X,R_{\ell'} Q\rangle \le 1-(\sqrt{2}-\sqrt{2\eta})^2/2+o(1) = \sqrt{4\eta} - \eta + o(1)$. Now, for any $\tau$, 
	\begin{align*}
		\Pr(\widehat{\ell}\ne \ell-\ell^*) 
		&\le \Pr\left( L^{-1/2} \langle Y,R_{\ell-\ell^*}Q\rangle < \tau \quad \textrm{ or } \quad \exists \ell'\ne \ell-\ell^*\,:\,L^{-1/2}\langle Y,R_{\ell'}Q \rangle \ge \tau \right) \\
		&\le \Pr\left(L^{-1/2} \langle Y,R_{\ell-\ell^*}Q\rangle < \tau \right) + \Pr\left(\max_{\ell'\ne \ell-\ell^*} L^{-1/2}\langle Y,R_{\ell'}Q \rangle \ge \tau\right) . 
	\end{align*}
	Recall that 
	\[
	L^{-1/2}\langle Y,R_{\ell'}Q\rangle = L^{-1/2}\langle R_{\ell} X + \sigma Z,R_{\ell'}Q\rangle = L^{-1/2}\langle X,R_{\ell'-\ell} Q\rangle + L^{-1/2}\sigma \langle Z,R_{\ell'}Q\rangle \,,
	\]
	and that the first term on the right-hand-size, $L^{-1/2}\langle X,R_{\ell'-\ell} Q\rangle$, is $\ge 1-\eta$ when $\ell'=\ell-\ell^*$ and $\le \sqrt{4\eta}-\eta +o(1)$ otherwise. 
	Suppose that $\tau \le  1-\eta $. We may bound
	\[
	\Pr\left(L^{-1/2} \langle Y,R_{\ell-\ell^*}Q\rangle < \tau \right)
	\le \Pr\left(L^{-1/2}\sigma\langle Z,R_{\ell-\ell^*}Q\rangle < \tau - (1-\eta)  \right) \le L^{-\frac12 \alpha(\tau-1+\eta)^2}.  
	\]
	As for the other term, recall that $\Expt[\max_{\ell'}\langle Z,R_{\ell'Q}\rangle] \le \sqrt{2\log L}$ (see discussion right after Lemma~\ref{lem:gaussain-max-exp}). Hence, using Lemma~\ref{lem:borell-tis}, 
	if we assume $\tau\ge \sqrt{4\eta} - \eta +\sqrt{2/\alpha} + o(1)$, we may also bound
		{\footnotesize
		\begin{align*}
			\Pr\left(\max_{\ell'\ne \ell-\ell^*} L^{-1/2}\langle Y,R_{\ell'}Q \rangle \ge \tau\right)
			&\le \Pr\left(\max_{\ell'\ne \ell-\ell^*} L^{-1/2}\sigma \langle Z,R_{\ell}Q \rangle \ge \tau - \left( \sqrt{4\eta} - \eta + o(1) \right)\right) \\ 
			&= \Pr\left(\max_{\ell'\ne \ell-\ell^*} \langle Z,R_{\ell}Q \rangle \ge \sqrt{\alpha \log L}\cdot \left[\tau - \left( \sqrt{4\eta} - \eta + o(1) \right)\right]\right) \\
			&\le 
			\begin{aligned}
				\Pr \left( \max_{\ell'\ne \ell-\ell^*} \langle Z,R_{\ell}Q \rangle - \Expt\left[\max_{\ell'\ne \ell-\ell^*} \langle Z,R_{\ell}Q \rangle\right] \ge 
			 \sqrt{\alpha \log L} \left[\tau - \left( \sqrt{4\eta} - \eta + o(1) \right) - \sqrt{\frac{2}{\alpha}}\right]
				\right)
			\end{aligned}\\
			&\le L^{-\frac12 \alpha \left( \tau - \left( \sqrt{4\eta} - \eta + o(1) \right)-\sqrt{2/\alpha} \right)^2} .
		\end{align*}}
	We would now like to choose $\sqrt{4\eta} - \eta + \sqrt{2/\alpha}< \tau < 1-\eta$, so to maximize 
	\[
	\min\left( |\tau-(1-\eta)|, |\tau-(\sqrt{4\eta}-\eta+\sqrt{2/\alpha})| \right) \,.
	\] 
	Indeed, observe that this interval is non-empty exactly iff $\sqrt{4\eta} < 1-\sqrt{2/\alpha}$. The best $\tau$ is then simply the midpoint, $\tau_* = 1/2-\eta + 1/\sqrt{2\alpha} + \sqrt{\eta}$, which gives 
	\[
	\Pr(\widehat{\ell}\ne \ell-\ell^*) \le 2 L^{-\frac12 \alpha\left(1/2-1/\sqrt{2\alpha} - \sqrt{\eta}\right)^2+o(1)} .
	\]

\subsection{Proof of Proposition~\ref{prop:averaging-guarantee}}

Let $\widehat{Q}\in \Sphere^{L-1}$ be the output of Step 1. Let $\m{V}_1$ be the event that $\max_{\ell}\langle X,R_{\ell}^{-1}\widehat{Q}\rangle \ge 1-\eta$, and call the maximizing shift $\ell^*$. By Proposition~\ref{prop:brute-force-guarantee}, \revAdd{having chosen $n_1$ as in the statement of Proposition~\ref{prop:averaging-guarantee},} $\Pr(\m{V}_1)=1-o(1)$. 

Let $Y_1,\ldots,Y_{n_2}$ be $n_2$ new samples (independent of those used for Step 1), and 
let $\m{I} \subset [n_2]$ be the set of misaligned samples, namely,
$
\m{I} = \left\{ i\in [n_2]\,:\, \widehat{\ell}_i \ne \ell_i-\ell^* \right\} 
$ .
We start by providing a high-probability bound on $|\m{I}|$. Lemma~\ref{lem:match-with-partial-template} tells us that conditioned on $\m{V}_1$, the random variables $\Ind_{\{i\in \m{I}\}}$ are i.i.d Bernoullis with $\Pr(\Ind_{\{i\in \m{I}\}}=1)=p\le2 L^{-\frac12 \alpha\left(1/2-1/\sqrt{2\alpha} - \sqrt{\eta}\right)^2+o(1)}  $ (the exponent being strictly negative by our requirements on $\alpha,\eta$),
thus $|\m{I}|\sim \mathrm{Binom}(n_2,p)$. By Bernstein's inequality \revAdd{(see, e.g, Theorem 2.8.4 in \cite{vershynin2018high})},
\begin{align*}
	\Pr\left( |\m{I}| \ge pn_2 + t \,\big|\, \m{V}_1 \right) 
	\le \exp\left( -\frac{\frac12 t^2}{np(1-p) + \frac13 t} \right) . 
\end{align*}
Note that the right hand side is $o(1)$ whenever $t=t(L)$ is such that $t\to\infty$ and $t=\omega(\sqrt{n_2 p})$ as $n_2,L\to\infty$. 
Thus, there is some $c=c(\alpha,\eta)>0$ such that for $K=L^{-c}n_2$, the event $|\m{I}|\le K$ holds with high probability.  

Let $\widehat{\mu}=\frac{1}{n_2}\sum_{i=1}^{n_2} R_{\widehat{\ell}_i}^{-1}R_{\ell_i}X$ and $\widehat{W}=\frac{1}{n_2}\sum_{i=1}^{n_2} R_{\widehat{\ell}_i}^{-1}Z_i$, so that $\widehat{X}=\widehat{\mu}+ \sigma \widehat{W}$. We decompose the error,
\begin{align*}
	L^{-1/2}\|R_{\ell^*}X-\widehat{X}\|&\le L^{-1/2}\|R_{\ell^*}X-\widehat{\mu}\|+L^{-1/2}\sigma \|\widehat{W}\| \le L^{-1/2}\frac{2|\m{I}|}{n_2}\|X\| + L^{-1/2}\sigma\|\widehat{W}\|\\
	&  = \frac{2|\m{I}|}{n_2}(1+o(1)) +  L^{-1/2}\sigma\|\widehat{W}\|.
\end{align*}
We have already argued that with high probability $\frac{2|\m{I}|}{n_2}=o(1)$; it therefore remains to show that for the appropriate choice of $n_2$, the bound $L^{-1/2}\sigma\|\widehat{W}\|\le \sqrt{\varepsilon}$ holds with probability $1-o(1)$. 

Observe that conditioned on $|\m{I}|\le K$, $R_{\ell^*}^{-1}\widehat{W}$ can be written as 
\[
R_{\ell^*}^{-1}\widehat{W}=\frac{1}{n_2}\sum_{i=1}^{n_2} R_{i}Z_i,
\]
where $R_{i}\ne R_{\ell_i}^{-1}$ for at most $K$ indicies. 
Note that the estimated shifts $R_{\widehat{\ell}_i}$ generally depend on the noise $Z_i$, and therefore we cannot simply conclude that $R_{\widehat{\ell_i}}^{-1}Z_i\sim \m{N}(0,I)$, which would have meant that $R_{\ell^*}^{-1}\widehat{W}\sim \m{N}(0,n_2^{-1}I)$. We need to use a slightly more elaborate argument to overcome this difficulty.

For a subset $S\subset[n]$, $|S|= K$, $S=\{i_1,\ldots,i_K\}$, and shifts 
$\bR = (R_1,\ldots,R_K)$,	define
\[
W(S,\bR) = \frac{1}{n_2}\sum_{i\notin S} R_{\ell_i}^{-1}Z_i + \frac{1}{n_2} \sum_{j=1}^K R_j Z_{i_j}.
\]
Conditioned on the high-probability event $|\m{I}|\le K$, we have 
\[
\|\widehat{W}\|=\|R_{\ell^*}\widehat{W}\| \le \max_{|S|=K,\bR\in [L]^K}\|W(S,\bR)\|,
\]
where the maximization is over all possible subsets $S$ of size $K$ and shifts $R_1,\ldots,R_K$.\footnote{\revAdd{For an upper bound, it clearly suffices to consider $S$ of size \emph{exactly} $K$, even when $|\m{I}|$ is smaller than $K$.}} It is therefore enough to show that $L^{-1/2}\sigma\cdot \max_{|S|=K,\bR\in[L]^K}\|W(S,\bR)\|\le \sqrt{\varepsilon}$ holds with probability $1-o(1)$. 
Since the shifts $R_{\ell_i}$ are independent of the noise $Z_i$, for every fixed $S$ and $\bR$ we have $W(S,\bR)\sim \m{N}(0,n_2^{-1}I)$.
Therefore, by a union bound,
\[
\Pr\left( L^{-1/2}\sigma\cdot \max_{|S|=K,\bR\in[L]^K}\|W(S,\bR)\| > \sqrt{\varepsilon} \right) \le n_2^K L^K \Pr\left( \|G\|^2 > \varepsilon \cdot \frac{L}{\sigma^2} \cdot n_2  \right),
\]
where $\RR^L \ni G\sim \m{N}(0,I)$, hence $\|G\|^2$ is a standard $\chi^2$-distributed random variable with $L$ degrees of freedom, and we bounded $\binom{n_2}{K}\le n_2^K$ for the number of possible choices of $S$. Using the tail bound of \cite[Lemma 1]{laurent2000adaptive}:
\begin{align*}
	\Pr_{G\sim \m{N}(0,I)}\left( \|G\|^2 \ge L + 2\sqrt{Lx} + 2x  \right)\le e^{-x}.
\end{align*} 
Plugging in the above bound any $x_0=x_0(L)$ such that $(n_2L)^K e^{-x_0}=o(1)$, that is, $x_0=\omega(K\log(n_2 L))$, we obtain 
\[
\Pr\left( L^{-1/2}\sigma\cdot \max_{|S|=k,\bR\in[L]^K}\|W(S,\bR)\| > \left( \frac{\sigma^2}{n_2} \left[1+ 2\sqrt{x_0/L} + 2x_0/L \right] \right)^{1/2} \right) = o(1),
\]
hence the condition 
\[
n_2 \ge \frac{\sigma^2}{\varepsilon}\left[1+2\sqrt{x_0/L}+2x_0/L\right] 
\]
suffices. Since $K=L^{-c}n_2$, if moreover $n_2=o(L)$ then $x_0=o(L)$, hence $n_2= \gamma_2 \sigma^2/\varepsilon$ for any $\gamma_2>1$ would suffices for large enough $L$.

\end{document}